\title{MSO-Enumeration Over SLP-Compressed Unranked Forests}
\colorlet{lightgray}{white!80!black}
\tikzset{every tree node/.style={draw, circle, black, semithick, inner sep = 2pt, minimum size = 10pt}}
\tikzstyle{arrow} = [semithick,->,>=stealth]
\tikzset{level distance = 40pt, sibling distance = 20pt}
\tikzset{edge from parent/.append style={arrow, edge from parent path = {(\tikzparentnode) -- (\tikzchildnode)}}}
\tikzset{gate/.style={draw, circle, black, semithick, inner sep = 0, minimum size = 5mm}}
\tikzset{alpha/.style={inner sep = 1.2pt}}
\newcommand{\multiset}[1]{\{\!\{#1\}\!\}}
\newcommand{\order}[1]{\mathsf{h}(#1)}
\newcommand{\F}{\mathsf{F}}
\newcommand{\Conf}{\mathsf{Conf}}
\newcommand{\EXP}{\mathcal{E}}
\newcommand{\expr}{\phi}
\newcommand{\DAG}{\mathcal{D}}
\DeclareMathOperator{\SLP}{\mathcal{S}}
\DeclareMathOperator{\FSLP}{\mathcal{F}}
\newcommand{\leaves}{\mathsf{leaves}}
\newcommand{\unfold}{\mathsf{unfold}}
\newcommand{\po}{\mathsf{po}}
\newcommand{\suc}{\mathsf{succ}}
\newcommand{\sucDAG}{\mathsf{succ}_{\DAG}}
\newcommand{\sucFSLP}{\mathsf{succ}_{\FSLP}}
\renewcommand{\path}{\mathsf{path}}
\newcommand{\select}{\mathsf{select}}
\newcommand{\conc}{\mathbin{\varodot}}
\newcommand{\concv}{\mathbin{\varobar}}
\newcommand{\conch}{\mathbin{\varominus}}
\newcommand{\valX}[1]{\llbracket #1 \rrbracket}
\newcommand{\pre}{\mathsf{pod}}
\DeclareMathOperator{\eword}{\varepsilon}
\newcommand{\card}[1]{\lvert #1 \rvert}
\DeclareMathOperator{\bigO}{\mathcal{O}}
\newcommand{\deriv}[1]{\valX{#1}}
\newcommand{\newNodeMarker}[1]{\tilde{#1}}
\newcommand{\derivsub}[2]{\valX{#2}_{#1}}
\newcommand{\ordersub}[2]{\mathsf{h}_{#1}(#2)}
\begin{document}

\maketitle

\begin{abstract}
We study the problem of enumerating the answers to a query formulated in monadic second order logic (MSO) over an unranked forest $F$ that is compressed by a straight-line program (SLP) $\DAG$. Our main result states that this can be done after $\bigO(|\DAG|)$ preprocessing and with output-linear delay (in data complexity). This is a substantial improvement over the previously known algorithms for MSO-evaluation over trees, since the compressed size $|\DAG|$ might be much smaller than (or even logarithmic in) the actual data size $|F|$, and there are linear time SLP-compressors that yield very good compressions on practical inputs. In particular, this also constitutes a meta-theorem in the field of algorithmics on SLP-compressed inputs: all enumeration problems on trees or strings that can be formulated in MSO-logic can be solved with linear preprocessing and output-linear delay, even if the inputs are compressed by SLPs. We also show that our approach can support vertex relabelling updates in time that is logarithmic in the uncompressed data. Our result extends previous work on the enumeration of MSO-queries over uncompressed trees and on the enumeration of document spanners over compressed text documents. 
\end{abstract}

\section{Introduction}

The evaluation of queries formulated in \emph{monadic second order logic} (MSO) is a classical problem in database theory and finite model theory. If we consider unrestricted relational structures as the queried data, then even fixed MSO-queries can express NP-hard problems. On the other hand, a seminal result by Doner~\cite{Doner1970} and Thatcher and Wright~\cite{ThatcherWright1968} shows that if the input data is given as a vertex labelled binary tree, then MSO-model checking (for a fixed formula) can be done in linear time. Moreover, the same holds if the data is given by structures of bounded treewidth, which is Courcelle's famous meta-theorem~\cite{Courcelle1990}. 

However, these positive algorithmic results are formulated for \emph{Boolean} MSO-queries, which does not cover practical scenarios, where we wish to compute all answers to a query. Consequently, in the field of database theory, evaluation problems are nowadays mostly investigated in the context of enumeration algorithms that, after some preprocessing on the data and the query, enumerate all answers to the query without duplicates. In terms of running times, we measure the time for the preprocessing and the delay of the enumeration phase, i.e., the time needed to go from one answer to the next. The optimum is \emph{linear preprocessing} and a delay that is always linear in the size of the next answer that is produced, which is called \emph{output-linear delay}. As usual in database theory, we measure in data complexity, which means that the query is considered to be of constant size. 

Enumeration algorithms with linear preprocessing and output-linear delay are known for several query evaluation settings, including MSO-evaluation, e.g., for MSO-queries on trees and structures with bounded treewidth~\cite{AmarilliBMN19,Bagan06,Courcelle2009,KazanaS13} and for regular document spanners (which is a subclass of MSO-queries) on strings~\cite{AmarilliEtAl2021,FlorenzanoEtAl2020}. Moreover, for these cases the dynamic setting has been investigated as well, where we wish to update our data and then directly enumerate our query again without having to repeat the whole preprocessing (see~\cite{AmarilliBMN19,BerkholzEtAl2018,MMN22,NiewerthSegoufin2018}). 

In this work, we extend MSO-query evaluation on trees towards algorithmics on compressed data, where the input data is given in a \emph{compressed} form and we wish to evaluate the MSO-query without decompressing our data. 

\subsection{Algorithmics on compressed data}\label{sec:ACD}

The paradigm of \emph{algorithmics on compressed data} (ACD) aims to solve fundamental computational tasks directly on compressed data objects, without prior decompression. This allows us to work in a completely compressed setting, where our data is always stored and processed in a compressed form. ACD works very well with respect to \emph{grammar-based compression} using so-called \emph{straight-line programs} (SLPs). Such SLPs use grammar-like formalisms in order to specify how to construct the data object from small building blocks. If the data is given by a finite string $w$, then an SLP is just a context-free grammar for the language $\{w\}$, which can be seen as a sequence of instructions that construct $w$ from the terminal symbols. For instance, the SLP $S \to AA$, $A \to B B C$, $B \to ba$, $C \to cb$ (where $S, A, B,C$ are nonterminals and $a,b,c$ are terminals) produces the string $babacbbabacb$. String SLPs (s-SLPs for short) are very popular and many results exist that demonstrate their wide-range applicability (see, e.g.,~\cite{BannaiEtAl2021,CaselEtAl2021,Ga21,GanardiGawrychowski2022,GanardiJL21} for some recent publications and the survey~\cite{Loh12survey}). Moreover, SLPs achieve very good compression rates in practice (exponential in the best case)
and are tightly related to dictionary based compression, in particular LZ77 and LZ78 \cite{CharikarLLPPSS05,Ryt03}.

An important point is that the ACD paradigm may lead to substantial running time improvements over the uncompressed setting. Indeed, the algorithm's running time only depends on the size of the compressed input, so the smaller size of the input may directly translate into a lower running time. For example, if the same problem can be solved in linear time both in the uncompressed and in the compressed setting, then in the case that the input can be compressed from size $n$ to size $\bigO(\log(n))$ (this is possible with SLPs in the best case), the algorithm in the compressed setting is exponentially faster. This is not just hypothetically speaking. In the field of string algorithms several fundamental problems are known to show this behaviour. String pattern matching is a prominent example for this~\cite{GanardiGawrychowski2022}. 

Recently, the ACD paradigm has been combined with the enumeration perspective of query evaluation. In~\cite{MunozRiveros2025,SchmidS21,SchmidSchweikardt2022,SchmidS22}, the information extraction framework of document spanners is investigated in the compressed setting, and it has been shown that the results of regular spanners over SLP-compressed text documents can be enumerated with linear preprocessing and constant delay.\footnote{Since document spanners output span-tuples of constant size, output-linear delay is the same as constant delay in this query evaluation setting.}
Applying SLP-based ACD in the framework of document spanners suggests itself, since this is essentially a query model for string data (or sequences), and ACD is most famous in the realm of string algorithms.

An advantage of the grammar-based compression approach is that it can be easily extended to trees. More precisely, by using context-free tree grammars, s-SLPs can be extended to SLPs for ranked trees~\cite{Lohrey15,LohreyMR18,LohreyEtAl2012}. In this paper, we use so-called \emph{forest SLPs} (f-SLPs for short) that were defined in~\cite{GasconLMRS20}. Forest SLPs allow to compress \emph{node-labelled, unranked and ordered} forests, i.e., ordered lists of trees, where every node has an ordered list of children of arbitrary length. A string is the special case where all trees have size one. An f-SLP can compress such a forest in the horizontal as well as the vertical dimension. Using the horizontal dimension, one can, for instance, compress the forest $a a \cdots a$ ($n$ many  $a$-labelled roots without children) by an f-SLP of size $\bigO(\log n)$, whereas compression in the vertical dimension allows to represent the tree $a(a(\cdots a(a) \cdots ))$ (a vertical chain of $n$ $a$-labelled nodes)  by an f-SLP of size $\bigO(\log n)$ (note that such vertical compression cannot be achieved by simple DAG-compression, i.e., folding a tree into its natural DAG-representation (see also \autoref{sec:relatedWork})). Formally, f-SLPs  are based on the formalism of \emph{forest algebras}~\cite{DBLP:conf/birthday/BojanczykW08}. In the forest algebra a forest can be constructed using two operations: the horizontal concatenation $\conch$ and the vertical concatenation $\concv$. The horizontal concatenation simply concatenates two forests horizontally, analogous to the string concatenation. The vertical concatenation of two forests $f_1$ and $f_2$ is only defined if $f_1$ contains exactly one $\ast$-labelled leaf, and $f_1 \concv f_2$ is then obtained by appending $f_2$ at the bottom of $f_1$, namely at the $\ast$-labelled leaf (i.e., the $\ast$-labelled leaf is replaced by $f_2$'s roots). A forest SLP is then a \emph{directed acyclic graph} (DAG) that unfolds into a forest algebra expression. See Sections~\ref{sec-FA} and \ref{subsubsection:FSLPs} for a formal treatment and examples (in particular, \autoref{fig-fa} shows the graphical representation of a forest algebra expression that evaluates to the tree on the right side of \autoref{fig-fslp}, and the left side of \autoref{fig-fslp} shows the f-SLP induced by this forest algebra expression). Since typical tree-structured data is unranked (e.g., XML trees or tree decompositions), forest SLPs are a relevant compression scheme in our setting. 

Forest SLPs share many of the desirable properties of s-SLPs. They cover other popular tree compression schemes like top dags \cite{BilleGLW15,DudekG18,Hubschle-Schneider15} and tree SLPs \cite{GanardiHJLN17,LohreyMM13}. Furthermore, there exist compressors such as  TreeRePair~\cite{LohreyMM13} that can be used to produce forest SLPs and that show excellent compression ratios in practice. Other available grammar-based tree compressors are BPLEX~\cite{BusattoLM08} and CluX~\cite{BottcherHK10}.  

Further motivation for the choice of SLPs as a compression scheme for strings and trees in the context of ACD shall be deferred to the end of this work (see \autoref{sec:SLPsBackground}, where we provide comprehensive background information on SLPs). Instead, let us move on to a summary of our main result.

\subsection{Main result}

Our main result lifts the linear preprocessing and output-linear delay enumeration algorithms for MSO-queries over trees to the case of MSO-queries over SLP-compressed unranked forests. We use the general notation $q[D]$ to denote the result set of query $q$ over data $D$, and our MSO-queries have the form $\Psi(X_1, X_2, \ldots, X_k)$, where $X_1, X_2, \ldots, X_k$ are free set variables. Complexity bounds in the following theorem and in the remainder of the introduction hold with respect to data complexity.

\begin{theorem}\label{mainResultForests}
Fix an MSO-query $\Psi(X_1, X_2, \ldots, X_k)$. For an unranked forest $F$ that is given in compressed form by a forest SLP $\FSLP$, one can enumerate $\Psi[F]$
after linear preprocessing $\bigO(|\FSLP|)$ and with output-linear delay.
\end{theorem}

The algorithm behind this theorem enumerates the answers of an MSO-query with linear preprocessing and output-linear delay (just like the algorithms from~\cite{Bagan06,KazanaS13}), but the input is compressed by a forest SLP $\FSLP$ and the preprocessing is linear only in the size $\card{\FSLP}$ of the forest SLP instead of the size of the decompressed forest $F$.
 Hence, depending on the size of the forest SLP $\FSLP$ in comparison to the data $F$ (which, theoretically, might be logarithmic, and in practical scenarios can be assumed to be rather small), this yields enumeration with optimal delay, but potentially much faster preprocessing. 

Our result also covers and properly extends known results from the literature, e.g., the enumeration algorithms for regular document spanners on compressed and uncompressed strings~\cite{AmarilliEtAl2021,FlorenzanoEtAl2020,MunozRiveros2025, SchmidS21,SchmidSchweikardt2022,SchmidS22}, and the enumeration algorithms for MSO-queries over uncompressed trees~\cite{AmarilliBMN19,Bagan06,Courcelle2009,KazanaS13}. It can also be seen as a meta-theorem in the field of algorithmics on SLP-compressed forests (and therefore also strings): Any enumeration problem on SLP-compressed forests (or strings) that can be formulated in MSO-logic can be solved with output-linear delay after a preprocessing linear in the size of the SLP. This covers practically relevant tasks like enumerating all occurrences of a pattern (described by a single string or a regular expression) in a string, enumerating all tandem repeats $w^k$ in a biosequence (with $k \leq c_1$ and $|w| \leq c_2$ for reasonable constants $c_1, c_2$), enumerating the vertices of an unranked tree that have only children of the same kind, enumerating all pairs of close cousins of a phylogenetic tree, etc. All these problems have MSO-formulations. Hence, \autoref{mainResultForests} yields an algorithm for enumerating 
the query result in the case that the data is SLP-compressed. 

In addition to our main result, we also investigate the dynamic setting where we assume that after having enumerated the answers to a query, our data can be updated and after such an update, we still want to be able to enumerate the answers to our query (now with respect to the updated data), but without having to repeat the costly preprocessing. 

Query evaluation under updates has received a lot of attention over the last decades in database theory, since it covers the practically relevant scenario where we repeatedly evaluate queries over data that is subject to only small changes. With respect to MSO-evaluation over uncompressed trees (and strings),~\cite{AmarilliBMN19,MMN22,NiewerthSegoufin2018} present linear preprocessing and output-linear delay enumeration algorithms with update procedures handling insertions and deletions of leaves (symbols, respectively), and relabelling of vertices (symbols, respectively). The running time of these update procedures is logarithmic in the data size. 
We can show that \autoref{mainResultForests} can be extended by vertex relabelling updates in time that is logarithmic in the uncompressed data; see \autoref{relabellingTheorem-rooted} for the formal result. Applying the techniques of~\cite{AmarilliBMN19,MMN22,NiewerthSegoufin2018}
for the other types of tree updates in the SLP-compressed setting seems rather challenging and we leave this for future research; see \autoref{sec other updates} for a discussion.

\subsection{Proof techniques and novel aspects}

As explained above, algorithmics on SLP-compressed inputs is a large field with many theoretical and algorithmic results as well as practical implementations. However, most algorithms are tailored to particular computational problems and, to the best of our knowledge, our result is the first that treats the issue in form of a meta-theorem that yields an algorithm for any problem definable in MSO-logic. Moreover, the enumeration aspect is usually not a main focus in algorithmics on SLP-compressed inputs. 

Query evaluation over SLP-compressed data can also be seen as an approach to sublinear query evaluation, which is interesting for big data scenarios, where even a linear dependency on the size of the data might be too expensive. More precisely, we spend linear time only once when we compress the data by an SLP. Then we can evaluate an arbitrary query in time that is only linear in the size of the  SLP, which is potentially much smaller (logarithmic in the best case) than the actual data size. To our knowledge, this aspect has not yet been considered in the literature on database theory.

Existing algorithms for linear preprocessing and output linear delay of MSO-queries over uncompressed trees (see~\cite{Bagan06,AmarilliBMN19}) are similar in the sense that in the preprocessing the query (represented by an automaton) and the input tree are combined into a data structure that represents the whole query result. The enumeration phase then enumerates from this data structure the tuples of node sets. On the one hand, the underlying data structure must represent the exponentially large query result in a concise and implicit way, but, on the other hand, it explicitly contains the complete input tree. In the compressed setting, however, the data structure must respect the compression of the input data and it is therefore impossible for it to explicitly contain the nodes of the input tree. Consequently, the enumeration phase must not only derive all the elements of the query result from the data structure, it must also construct (or decompress) the actual nodes of the input tree that appear in the output tuples, and since this cannot be done in a preprocessing, it must be done on-the-fly during the enumeration. Let us give a high level description of how we solve this task, i.e., how we prove \autoref{mainResultForests}.

By several simplification steps, we first reduce the problem from \autoref{mainResultForests} to the following enumeration problem for binary trees: given a directed acyclic graph (DAG) $\DAG$ that unfolds to the binary vertex-labelled tree $T$ and a deterministic bottom-up tree automaton $\mathcal{B}$, enumerate all subsets $S \subseteq \leaves(T)$ such that $\mathcal{B}$ accepts the tree obtained from $T$ by marking all leaves from $S$ with a $1$. Since $T$ is a binary tree, we could solve this problem by using Bagan's algorithm from~\cite{Bagan06}, but $T$ is given by the DAG $\DAG$ and we cannot afford to explicitly construct $T$. Consequently, we have to adapt Bagan's algorithm in such a way that it can be used directly on DAG-compressed trees, which is not a trivial task. 
As mentioned above, we cannot afford to compute the data structure used by Bagan's algorithm, but we can compute a compressed variant of this data structure, which is an edge-labelled DAG. In order to exploit this compressed data structure in a similar fashion as done by the enumeration phase of Bagan's algorithm, we have to be able to enumerate with constant delay the labels of the paths of this DAG, since these edge labels represent the actual nodes of the uncompressed tree.\footnote{Actually, the edges of the DAG will be labelled with the morphisms of a category, but for the sake of an intuitive and high-level explanation we abstract here from this detail.} This enumeration procedure constitutes a non-trivial algorithmic result and is the main component of our algorithm (it will be proven as an independent algorithmic result in \autoref{sec-enumeration-general}, where we also demonstrate that it has further applications of independent interest). 

It is not unlikely that instead of extending Bagan's algorithm from~\cite{Bagan06} in this fashion, we can also extend the algorithm of~\cite{AmarilliBMN19} to the compressed setting, i.e., we may be able to obtain from a compressed input tree in linear time a compressed version of the data structure used by the algorithm from~\cite{AmarilliBMN19}, and then find a way to employ this compressed data structure in the enumeration phase. However, this would most likely lead to similar challenges compared to extending Bagan's algorithm. In particular, it is likely that extensions of other algorithms eventually also require a procedure that enumerates paths in a DAG (simply because in the SLP-compressed setting the nodes of the tree are represented by paths in a DAG and we cannot afford to precompute them in a preprocessing).

\subsection{Further related work}\label{sec:relatedWork}

A vast body of literature is concerned with straight-line programs and their application in the context of algorithmics on compressed data. We will give a comprehensive discussion of those aspects relevant for our work in \autoref{sec:SLPsBackground} towards the end of the paper.

Forest SLPs -- the central compression scheme in our work -- are based on forest algebra expressions. Forest algebras have  been also used in the context of MSO-enumeration on uncompressed trees in~\cite{MMN22} for the purpose of enabling updates of the queried tree in logarithmic time by updating and re-balancing a forest algebra expression for the tree. However, no compression is handled in~\cite{MMN22}. 

Relabelling updates have been also studied recently in the context of dynamic regular membership testing \cite{AmarilliBJP25}.
For a fixed regular set of forests and a forest that undergoes relabelling updates, the goal is to maintain the information whether
the current forest belongs to $L$. It is shown that this can be done time in time $\bigO(\log n / \log \log n)$ per update,  where $n$ is the number of nodes in the forest, and in constant time per update for so-called almost-commutative regular forest languages.
Also in \cite{AmarilliBJP25} the concept of forest algebras plays an important role.

The arguably simplest way of compressing a tree is to fold it into a DAG. In the best case this allows to represent a tree of size $n$ by
a DAG of size $\Theta(\log n)$ (take for instance a perfect binary tree).
In the context of database theory, DAG-compression has been investigated in~\cite{KochBG03,FrickGK03} for XPath and monadic datalog queries, but the enumeration perspective has not been investigated. 

Forest SLPs subsume DAGs in the sense that a DAG for a tree $t$ can be easily translated into a forest SLP for $t$ of asymptotically the same size as the DAG. Moreover, there are also trees, where forest SLPs compress exponentially better than DAGs, e.g., unary trees of the form $a(a(\ldots a(a) \ldots))$. The experimental study of~\cite{LohreyMM13} also shows that in a practical setting, DAG-compression cannot compete with forest SLPs. Further work on the compression performance of DAGs for XML can be found in \cite{Bousquet-MelouL15,LohreyMR17}.

There is a growing body of work on \emph{factorised databases} that investigates the task of producing the result of a conjunctive relational query as a \emph{factorised representation}, which is a compressed form obtained by exploiting redundancies (see~\cite{Olteanu2023, OlteanuSchleich2016}). The motivation is that such query results are often just intermediate data objects that are inputs for further computations within a larger pipeline, and computing them in factorised form can decrease the overall computation time. This approach, however, is slightly different from ours: In factorised databases, the compressibility is entailed by the fact that the object to compress is the result of a conjunctive query, while in our case tree structured input data is given in a compressed form.

\subsection{Organisation of the paper}

We start in \autoref{sec:prelim} with introducing general notations about algebra, trees, forests, directed acyclic graphs and enumeration algorithms. Then, in \autoref{sec-path-enumeration}, we present the algorithm for enumerating paths of DAGs, that will serve as an important building block for our enumeration algorithm for MSO-queries over SLP-compressed forests. We will also briefly present two further applications 
of our path enumeration algorithm in \autoref{sec-further-app}.
\autoref{sec-fslp} will be devoted to thoroughly introducing the concept of forest straight-line programs and a technique that shall be crucial for obtaining the preorder numbers of the nodes of a compressed forest. Basic concepts of monadic second order logic and the corresponding evaluation problem will be given in \autoref{sec-MSO-aut}. Then we present the proof of our main result in \autoref{sec-main-result} and the result on relabelling updates in \autoref{sec:updates}. Finally, we give some more detailed background information about the concept of straight-line programs (including some practical considerations) in \autoref{sec:SLPsBackground} and conclude the paper with \autoref{sec:conclusions}.

\section{Preliminaries}\label{sec:prelim}

\subsection{General notations}

Let $\mathbb{N} = \{1, 2, 3, \ldots\}$ and $[n] = \{1, 2, \ldots, n\}$ for $n \in \mathbb{N}$. By $2^A$ we denote the power set of a set $A$. 
For a binary relation $\to$, we use $\to^*$ to denote its reflexive-transitive closure.

Every finite sequence of elements from a finite alphabet $\Sigma$ is a \emph{word} (or \emph{string}) \emph{over $\Sigma$}. 
With $\Sigma^*$ we denote the set of all words over $\Sigma$ including the empty word $\eword$. For a word $w \in \Sigma^*$, $|w|$ denotes its length (in particular, $\card{\eword} = 0$).

We will occasionally also talk about enumeration with duplicates, which will be formalised with multisets. A \emph{multiset} is a set that can contain duplicates of an element and we use the standard $\multiset{\ldots}$ notation for denoting multisets.
A multiset with elements from a set $A$ can be formalised as a function $f : A \to \mathbb{N}$. 
Multisets will play only a marginal role for our results and we therefore omit a more formal treatment.

\subsection{Categories} \label{sec-category}

A \emph{category} (see, e.g., \cite{BaWells}) is a pair $(A, (M_{a,b})_{a,b \in A},\circ)$ where $A$ is a set of objects, $M_{a,b}$ is the set of morphisms from object $a$ to object $b$ and $\circ$ is a mapping that maps morphisms $\alpha \in M_{a,b}$ and $\beta \in M_{b,c}$ ($a,b,c \in A$) to a morphism $\alpha \circ \beta \in M_{a,c}$ such that the following hold:
\begin{itemize}
\item for all $a,b,c,d \in A$ and $\alpha \in M_{a,b}$, $\beta \in M_{b,c}$ and $\gamma \in M_{c,d}$: $(\alpha \circ \beta) \circ \gamma = \alpha \circ (\beta \circ \gamma)$,
\item for all $a \in A$ there is a morphism $\iota_a \in M_{a,a}$ such that for all  $a,b \in A$ and $\alpha \in M_{a,b}$: $\iota_a \circ \alpha = \alpha \circ \iota_b = \alpha$.
\end{itemize}
Note that a category with only one object is the same thing as a monoid. We use a general category $\mathcal{C}$ to state a general version of our core enumeration problem (see \autoref{sec-path-enumeration}).
In our main application (enumerating results of MSO-queries on compressed trees) only a specific category with two objects
and certain affine functions as morphisms will be used.
No specific results on categories are used in this paper.

\subsection{Trees and forests} \label{sec-trees}

We will work with different types of rooted trees and acyclic graphs in this paper. All graphs (including trees) will be finite.
Forests are sequences of trees, and we should keep in mind that each definition of a certain tree model yields the corresponding concept of forests (which are just sequences of such trees).

\subsubsection{Vertex-labelled ordered trees} \label{forests}

Vertex-labelled ordered trees are trees, where vertices are labelled with symbols from some alphabet $\Sigma$ and may have an arbitrary number of children (i.e., the trees are unranked). Moreover, the children are linearly ordered. A typical example of such trees are XML tree structures. A vertex-labelled ordered tree can be defined as structure $T = (V, E, R, \lambda)$, where $V$ is the set of vertices, $E$ is the edge relation (i.e., $(u,v) \in E$ if and only if $v$ is a child of $u$), $R$ is the sibling relation (i.e., $(u,v) \in R$ if and only if $v$ is the right sibling of $u$), and $\lambda : V \to \Sigma$ is the function that assigns a label to every vertex. In the following, when we speak of a tree, we always mean a vertex-labelled ordered tree. For a tree $T$ and one of its vertices $v$, we use $T(v)$ to denote $T$'s subtree rooted in $v$, i.e., the subtree of $T$ consisting of  all descendants of $v$ including $v$.

A  \emph{forest} is a (possibly empty) ordered sequence of trees; it is also described by a structure 
$(V,E,R,\lambda)$, where the roots of the forest are chained by the sibling relation $R$. Note that a string can be identified with a forest $(V,\emptyset,R,\lambda)$ (this is similar to the classical representation of strings as relational structures with the only difference that for strings we commonly use one unary relation per symbol instead of the labelling function $\lambda$).
We write $\F(\Sigma)$ for the set of all forests with vertex labels from $\Sigma$. The size $|F|$ of a forest $F$ is the number of vertices of $F$.

\pgfkeys{/pgf/inner sep=0.1em}
\begin{figure}
  \centering
  \begin{forest} baseline
    [$\phantom{0} a \; \scriptstyle{0}$, s sep=3mm, for tree={parent anchor=south, child anchor=north}
      [$\phantom{0} b \; \scriptstyle{1} $]
      [$\phantom{0} a \; \scriptstyle{2}$ 
        [$\phantom{0} a \; \scriptstyle{3}$]
      ]]  
    \end{forest} 
    \begin{forest} baseline [$\phantom{0} b \; \scriptstyle{4}$] \end{forest}
    \begin{forest} baseline [$\phantom{0} c \; \scriptstyle{5}$] \end{forest}
    \begin{forest} baseline 
    [$\phantom{0} b \; \scriptstyle{6}$
      [$\phantom{0} c \; \scriptstyle{7}$, s sep=3mm, for tree={parent anchor=south, child anchor=north}
      [$\phantom{0} a \; \scriptstyle{8}$] 
      [$\phantom{0} b \; \scriptstyle{9}$]
     ]]   
      \end{forest}
\caption{\label{fig-val(E)} A forest, where every
vertex is additionally labelled with its preorder number.}
\end{figure}

We also use a term representation for forests, i.e., we write elements of $\F(\Sigma)$ as strings over the alphabet $\Sigma \cup \{ (, )\}$. For example, the forest from \autoref{fig-val(E)} has the term representation $a(ba(a))bcb(c(ab))$ (or $a(b,a(a)), b, c, b(c(a, b))$ with commas for better readability). Note that the occurrences of the symbols from $\Sigma$ in the term representation of the forest $F \in \F(\Sigma)$  
correspond to the vertices of $F$. 

The \emph{preorder number} of a vertex $v$ in a forest $F$ is the position of $v$ in the 
(depth-first left-to-right) preorder enumeration of the vertices, where the root gets preorder number $0$.
In the term representation of $F$, the $i^{\text{th}}$ occurrence of a symbol from $\Sigma$ (starting with $i=0$)
corresponds to the  $i^{\text{th}}$ vertex of $F$ in preorder.
In \autoref{fig-val(E)} the preorder numbers are written next to the vertices. In the term representation
 of the forest $a(ba(a))bcb(c(ab))$ the preorder numbers are
\begin{alignat*}{10}
& a \;\, ( \, &&  b \;\, && a \;\, ( \, && a \, ) \;\, ) \, && b \;\, &&  c \;\,  && b \;\, ( \, && c \;\, ( \, && a \;\,  && b \;\, ) \;\, ) \ .
 \\[-2mm]
& \scriptstyle{0} && \scriptstyle{1} && \scriptstyle{2} && \scriptstyle{3} && \scriptstyle{4} && \scriptstyle{5} && \scriptstyle{6} && \scriptstyle{7} && \scriptstyle{8} && \scriptstyle{9} 
\end{alignat*}

\subsubsection{Vertex-labelled binary trees}

Vertex-labelled binary trees (binary trees for short) are the special case of the trees from the previous paragraph, where every vertex is either
a leaf or has two children (a left and a right child). It is then more common to replace the two relations $E$ (edge relation) and $R$ (sibling relation) 
by the relations
$E_\ell$ (left edges) and $E_r$ (right edges), where $(u,v) \in E_\ell$ (resp., $(u,v) \in E_r$)
 if $v$ is the left (resp., right) child of $u$. We write $E = E_\ell \cup E_r$ for the set of all edges.
Our binary trees have the additional property that 
$\Sigma$ is partitioned into two  disjoint sets $\Sigma_0$ and $\Sigma_2$ labelling leaves and internal vertices, respectively;
see \autoref{fig-DagFoldingExample} (left), where $\Sigma_0 = \{c,d\}$ and $\Sigma_2 = \{a,b\}$ for an example.
We use the above term representation for general trees also for binary trees.
With $\leaves(T)$ we denote the set of leaves of the binary tree $T$. 

Binary trees will be mainly used for describing algebraic expressions over algebras.

\subsubsection{Unordered trees}

Unordered trees are trees without vertex labels and without  an order on the children of a vertex. They will be used 
as auxiliary data structures in our algorithms. An unordered tree will 
be defined as a pair $(V,E)$, where $V$ is the set of vertices and 
$E$ is the edge relation. An unordered forest is a disjoint union of unordered trees.

\subsubsection{Decorated trees}

We also have to consider (ordered as well as unordered) trees, where the vertices and edges are decorated with objects and morphisms, respectively, from a 
category $\mathcal{C} = (A, (M_{a,b})_{a,b \in A},\circ)$. More precisely, a $\mathcal{C}$-decorated tree is equipped with a function $\gamma$ that maps every vertex $v \in V$ of the tree to an object $\gamma(v)$ and 
every edge $e = (u,v) \in E$ to a morphism $\gamma(e) \in M_{\gamma(u), \gamma(v)}$. These edge morphisms can be lifted from edges to paths in the natural way: let $v_1v_2 v_3 \cdots v_{d-1} v_d$ be the unique path from a vertex
$v_1$ to a descendant $v_d$ in the tree, i.e., $(v_i, v_{i+1}) \in E$ for all $i \in [d-1]$. We then define $\gamma(v_1,v_d)$ as the morphism $\gamma(v_1,v_2) \circ \gamma(v_2,v_3) \circ \cdots \circ \gamma(v_{d-1},v_d)$. For a leaf $v$ of the tree $T$ we define $\gamma^*(v) = \gamma(r,v)$,
where $r$ is the root of $T$, and for a set of leaves $S$ we define $\gamma^*(S) = \multiset{ \gamma^*(v) : v \in S}$. 

The outputs of our enumeration algorithms will be sets $\gamma^*(S)$ for certain leaf sets $S$ and a specific category $\mathcal{C}$.

\subsection{Directed acyclic graphs}
\label{sec-DAG}

As commonly defined, a \emph{directed acyclic graph} (DAG for short) is a directed graph $\DAG = (V, E)$ that has no cycles, i.e., there is no non-empty path from a vertex $v$ back to $v$.
We will need DAGs with multiple edges between vertices. For this, $E$ can be taken as a subset $E \subseteq V \times I \times V$ for some 
index set $I$. Then we can have two different edges $(u,i,v)$ and $(u,j,v)$ (with $i \neq j$) from $u$ to $v$. 
The size $|\DAG|$ of $\DAG$ is defined as $|E|+|V|$. 
The \emph{outdegree} (resp., \emph{indegree}) of a vertex $v$ is the number of edges of the form  $(v, i, u)$ (resp., $(u, i, v)$).
Analogously to trees, vertices of a DAG with outdegree $0$ are called leaves. A  \emph{path} (from $v_1$ to $v_n$) in $\DAG$ is a word  $\pi = v_1 i_1  \cdots v_{n-1} i_{n-1} v_n$ such that $n \geq 1$ and $(v_k, i_k, v_{k+1}) \in E$ for all $1 \le k \le n-1$. 
The length of this path $\pi$ is $|\pi| := n-1$.
We write $\omega(\pi)  = v_n$ for the terminal vertex of the path $\pi$.\label{omegaDef}
If $n=1$ (in which case we have $\pi = v_1$) we speak of the empty path at $v_1$. For $v \in V$ and $U \subseteq V$ let $\path_{\DAG}(v,U)$ be the set of all paths from $v$ to some vertex in $U$. We also write $\path_{\DAG}(v)$ for 
$\path_{\DAG}(v,V)$ (the set of all paths that start in $v$).
Paths in $\path_{\DAG}(v,L)$, where $L$ is the set of leaves of $\DAG$ will be also called \emph{$v$-to-leaf paths}.

If in the above definition the index set $I$ is $I = \{\ell, r\}$ then $\DAG$ is called a binary DAG. Then, edges in $E_\ell := E \cap (V \times \{\ell\} \times V)$ are called left edges and edges in $E_\ell := E \cap (V \times \{r\} \times V)$ are called right edges. We additionally assume that every
vertex $v$ is either a leaf or has a (necessarily unique) left and right outgoing edge. Note that $\card{V} \le \card{\DAG} \le 3 \card{V}$ for a 
binary DAG.

The DAGs in this paper will mostly contain some of the following additional components, where $\mathcal{C} = (A, (M_{a,b})_{a,b \in A},\circ)$
is a category and $\Sigma$ is a finite set of vertex labels.
\begin{itemize}
\item A decoration mapping $\gamma$ that assigns an object $\gamma(v)$ to every vertex $v \in V$ and a morphism $\gamma(e) \in M_{\gamma(u),\gamma(v)}$ to every edge $e = (u,i,v) \in E$. 
For a path $\pi = v_1 i_1  \cdots v_{n-1} i_{n-1} v_n$ we define its $\mathcal{C}$-morphism 
$$
\gamma(\pi) = \gamma(v_1, i_1, v_{2}) \circ  \gamma(v_2, i_2, v_3) \circ \cdots \circ   \gamma(v_{n-1}, i_{n-1}, v_{n}).
$$
For the empty path $\pi = v \in V$ at vertex $v$ we set $\gamma(\pi) = \iota_{\gamma(v)}$.\footnote{The reader familiar with category theory may notice 
that $\gamma$ is a functor from the path category of $\DAG$ to the category $\mathcal{C}$.}
The resulting structure $\DAG = (V,E,\gamma)$ is called a  \emph{$\mathcal{C}$-decorated DAG}.
\item A vertex labelling function $\lambda : V \to \Sigma$, which results in a vertex-labelled DAG.
We will need this concept only for a binary DAG $\DAG$. In this case, $\Sigma$ is the disjoint union of $\Sigma_0$ and $\Sigma_2$ 
and $\lambda(v) \in \Sigma_0$ for every leaf of $\DAG$ and $\lambda(v) \in \Sigma_2$ for every non-leaf vertex.
\end{itemize}
When a DAG contains one of the components $\gamma$ or $\lambda$, it will have the above meaning.

\subsection{Enumeration algorithms} \label{sec-enumeration-general}

We use the standard RAM model with a special assumption about handling the morphisms of the category $\mathcal{C}$ of $\mathcal{C}$-decorated DAGs (discussed towards the end of this section), and a restriction for the register length (discussed at the end of \autoref{subsubsection:FSLPs}). 

An enumeration problem is a function $E$ that maps an input $I$ to a finite set $E(I)$ of objects. We can assume that $I$ and the objects in $E(I)$ 
are written in RAM registers. An enumeration algorithm $A$ for $E$ is an algorithm that computes on input $I$ a sequence 
 $(s_1, s_2, \ldots, s_m, s_{m+1})$, where  $E(I) = \{ s_1, \ldots, s_m \}$, $s_i \neq s_j$ for all $i \neq j$ and 
  $s_{m+1} = \mathsf{EOE}$ is the \emph{end-of-enumeration} marker. The algorithm produces this sequence in order, i.e., the algorithm only starts with
  the computation of $s_{i+1}$, once it finishes outputting $s_i$.
 The \emph{preprocessing time} of $A$ on input $I$ is the time when the algorithm starts with outputting  $s_1$.
The preprocessing time  of $A$ is the maximum preprocessing time over all possible inputs $I$ of length at most $n$ (viewed as a function of $n$).

Besides the preprocessing time, the other important time measure for an enumeration algorithm $A$ is its \emph{delay}, which measures 
the maximal time between the computation of two consecutive outputs $s_i$ and $s_{i+1}$. In our situation, the output objects $s_i$ will be very large
(potentially much larger than the input $I$). Therefore, it does not make sense to measure the delay with respect to the input length $I$. In such situations, 
the notion of \emph{output-linear delay} makes sense. It requires that for every output sequence $(s_1, s_2, \ldots, s_m, s_{m+1})$ produced by the algorithm $A$,
if $t_i$ ($1 \leq i \leq m+1$) is the time when $A$ starts with outputting $s_i$, then $t_{i+1}-t_i = \bigO(|s_{i}|)$ for every $1 \leq i \leq m$.
If every output $s_i$ has constant size (which for the RAM model means that it occupies a constant number of registers), then output-linear delay is the same as constant delay.
The gold standard in the area of enumeration algorithms is (i) \emph{linear preprocessing} (i.e., the preprocessing time is $\bigO(|I|)$) and 
(ii) \emph{output-linear delay}.

We are interested in enumeration algorithms that enumerate subsets $U \subseteq V$ of a forest $F = (V,E,R,\lambda)$.
These subsets represent the results of a query $\mathcal{Q}$ (that is given by an MSO-formula or a tree automaton; see \autoref{sec-MSO-aut}).
The input is $F$, while the query  $\mathcal{Q}$ is fixed and not part of the input, i.e., we measure in data complexity. The special feature of this work is that the input forest $F$ is not given explicitly, but in a potentially highly compressed form, and the enumeration algorithm must be able to handle this compressed representation rather than decompressing it. This aspect shall be explained in detail in \autoref{sec-fslp}.

Since we also deal with $\mathcal{C}$-decorated DAGs, where morphisms from a category $\mathcal{C}$ are assigned to edges, we need the following assumption for the proof of our main result (recall that $\omega(\pi)$ is the terminal vertex of a path $\pi$; see \autoref{sec-DAG}):

\begin{assumption} \label{assumption-category}
Let $\mathcal{C} = (A, (M_{a,b})_{a,b \in A},\circ)$ be a category. If a $\mathcal{C}$-decorated DAG $\DAG = (V,E,\gamma)$ is part of the input then
\begin{itemize}
\item vertices from $V$ as well as
\item all morphisms $\gamma(\pi)$ for $\pi$ a path in $\DAG$ 
\end{itemize}
fit into a single register of our RAM. 
For two such morphisms $\gamma(\pi_1)$
and $\gamma(\pi_2)$ such that $\omega(\pi_1)$ is the first vertex of $\pi_2$, the morphism $\gamma(\pi_1) \circ \gamma(\pi_2)$
can be computed in constant time on the RAM. 
\end{assumption}

For the specific setting that arises in the proof of \autoref{mainResultForests}, this assumption will be justified later.
Only in \autoref{sec-further-app}, where we discuss some further applications of our path enumeration
technique from \autoref{sec-path-enumeration}, we will deviate from \autoref{assumption-category}.

While this is not our algorithmic focus, it will occasionally also be convenient to talk about enumeration with duplicates, which we formalise as enumerating a multiset. More precisely, we say that an algorithm $A$ on input $I$ enumerates a multiset $S$ if and only if the output sequence is $(s_1, s_2, \ldots, s_m, \mathsf{EOE})$, $|S| = m$ and $S = \multiset{s_1, s_2, \ldots, s_m}$. All other notions defined above apply in the same way also for enumerations of multisets. 

\section{Path Enumeration in DAGs} \label{sec-path-enumeration}

In this section, we provide a proof of the following result about path enumeration in decorated DAGs. This algorithmic result will be used in \autoref{sec-main-result} as a crucial building block for our enumeration algorithm for MSO-queries over SLP-compressed forests (see \autoref{mainResultForests}). Further applications that are discussed in \autoref{sec-further-app} demonstrate the 
independent interest of the result.

\begin{theorem} \label{thm-enumerate-paths}
Fix a category $\mathcal{C} = (A, (M_{a,b})_{a,b \in A},\circ)$.
Let $\DAG = (V,E,\gamma)$ be a $\mathcal{C}$-decorated (not necessarily binary) DAG such that \autoref{assumption-category} holds,
and let $V_0 \subseteq V$ be a distinguished set of target vertices. In time $\bigO(|\DAG|)$ one can compute a data structure that allows to enumerate for a given source vertex $s \in V$ in constant delay the multiset 
$\multiset{ \langle \omega(\pi), \gamma(\pi) \rangle : \pi \in \path_{\DAG}(s,V_0)}$.\footnote{In general, we might have $\langle\omega(\pi),\gamma(\pi)\rangle = \langle\omega(\pi'),\gamma(\pi')\rangle$ for 
different paths $\pi, \pi' \in \path_{\DAG}(s,V_0)$ (although this does not happen for the decoration used later on in the application of this result in the context of \autoref{mainResultForests}).
Therefore, we have to work with multisets.}
\end{theorem}
Let us write $M$ for the union of all the morphism sets $M_{a,b}$ from \autoref{thm-enumerate-paths}. In the following, we let $s \in V$ be the source vertex that we get as input according to the statement of \autoref{thm-enumerate-paths}, i.e., the vertex for which we wish to enumerate the multiset $\multiset{ \langle \omega(\pi), \gamma(\pi) \rangle : \pi \in \path_{\DAG}(s,V_0)}$. Note that according to \autoref{thm-enumerate-paths}, the data structure that allows enumeration should work for any possible $s \in V$, which means that our preprocessing must be independent from the choice of $s$.

\subsection{Preprocessing} \label{sec preproc path enum}

Let us start with some preprocessing for the DAG $\DAG = (V,E,\gamma)$. We first remove as long as possible  vertices $v \in V$ of outdegree zero that do not belong to the target set $V_0$ (and we remember these removed vertices so that in case the given source vertex $s$ is one of them, we can simply output the empty list). We can therefore assume that all vertices of outdegree zero belong to $V_0$. If there is a vertex $v \in V_0$ of non-zero outdegree, we can add a copy $v'$
together with a new edge $(v,i,v')$ and extend the decoration by $\gamma(v') = \gamma(v)$ and  $\gamma(v,i,v') = \iota_{\gamma(v)}$ (recall that $\iota_a$ is the identity
morphism for object $a$). Moreover, we remove $v$ from $V_0$ and 
add $v'$ to $V_0$.
By this, we can assume that $V_0$ consists of all vertices of outdegree zero (the leaves of the DAG).

We next eliminate vertices of outdegree one.
In time $\bigO(\card{\DAG})$ we first determine the set $V_1$ of vertices of outdegree one. 
 For every vertex $u \in V_1$ we then compute the unique vertex $f(u)$ such that the outdegree of $f(u)$ is not $1$ and $f(u)$ is reached
 from $u$ by the (unique) path $\pi_u$ consisting of edges $(v_1, i, v_2) \in E$ with $v_1 \in V_1$. We also compute the morphism $\gamma(\pi_u)$ for this
 path. This can be done bottom-up in time $\bigO(\card{\DAG})$ as follows: For every edge $(u, i, v)$ with $u \in V_1$ we set $f(u) = v$ and $\gamma(\pi_u) = \gamma(u, i, v)$ if $v \notin V_1$ (this includes the case where $v$ is a leaf), and we set $f(u) = f(v)$ and $\gamma(\pi_u) = \gamma(u, i, v) \circ \gamma(\pi_v)$ if $v \in V_1$. We then replace every edge $(v,i,u) \in E$ with $u \in V_1$ by $(v,i,f(u))$ and set $\gamma(v,i,f(u)) = \gamma(v,i,u)\circ\gamma(\pi_u)$. 
 After this step, there is no edge that ends in a vertex of outdegree one. In particular, if a vertex has outdegree one,
 its indegree is zero. We can then remove all vertices of outdegree one and their outgoing edges from $\DAG$ and store
 them together with their decorations in a separate table. In case that the source vertex $s$ has outdegree one and $(s,i,v)$ is its unique outgoing edge, we run the enumeration algorithm for $v$ instead of $s$
 and multiply $\gamma(s,i,v)$ on the left to every $\mathcal{C}$-morphism that is printed in the enumeration for $v$.
 We can now assume that all vertices of the DAG $\DAG$ have outdegree zero or at least two.
 The set $V_0$ still consists of all vertices of outdegree zero.

We next transform $\DAG$ into a binary DAG $\DAG_b = (V_b, E_b, \gamma_b)$, where $V \subseteq V_b$ and every vertex $v \in V_b$ has either outdegree zero or two.
For this, take a vertex $u_1 \in V$ with outdegree at least three (recall that all vertices have outdegree either $0$ or at least $2$). Let 
$(u_1, i_1, v_1), \ldots, (u_1, i_d, v_d)$ be all outgoing edges of $u$ ($d \geq 3$).
We then add new vertices $u_2, \ldots, u_{d-1}$ to $V_b$ 
and add the following edges to $E_b$: 
\begin{itemize}
\item all right edges $(u_k, r, u_{k+1})$ for $1 \leq k \leq  d-2$,
\item all left edges $(u_k, \ell, v_k)$ for $1 \leq k \leq  d-1$ and
\item $(u_{d-1}, r, v_d)$.
\end{itemize}
The decorations of the new vertices and edges are defined as follows:
\begin{itemize}
\item $\gamma_b(u_k) = \gamma(u_1)$ for all $2 \leq k \leq d-1$,
\item $\gamma_b(u_k, r, u_{k+1}) = \iota_{\gamma(u_1)}$ for $1 \leq k \leq  d-2$,
\item $\gamma_b(u_k, \ell, v_k) = \gamma(u_1, i_k, v_k)$ for $1 \leq k \leq  d-1$ and 
\item $\gamma_b(u_{d-1}, r, v_d) = \gamma(u_1, i_{d}, v_d)$.
\end{itemize}
The binary DAG $\DAG_b = (V_b, E_b, \gamma_b)$ can be easily computed in time $\bigO(\card{\DAG})$. Obviously, $V \subseteq V_b$ and the set of leaves of $\DAG_b$ is still $V_0$. Moreover, there is a one-to-one correspondence between the $s$-to-$V_0$ paths in $\DAG$ and the $s$-to-$V_0$ paths in $\DAG_b$, and the $\mathcal{C}$-morphisms of corresponding such paths are the same. Hence, we can now work with the binary DAG $\DAG_b$ and, for simplicity, we omit the subscript $b$, i.e., we write $\DAG$, $V$, $E$ and $\gamma$ instead of $\DAG_b$, $V_b$, $E_b$ and $\gamma_b$. We also define the set $V_2 = V \setminus V_0$ of internal vertices.

For a vertex $v \in V_2$ we will write in the following $v[\ell]$ and $v[r]$ for the left and right child of $v$, respectively, i.e., $(v, \ell, v[\ell]), (v, r, v[r]) \in E$.

\subsection{Enumeration} \label{subsec path enumeration} 

Recall that $s$ is our source vertex of $\DAG$ in \autoref{thm-enumerate-paths}.
For a word $\pi \in \{\ell,r\}^*$ we define
the vertex $s[\pi]$ and the $\mathcal{C}$-morphism $\gamma[s,\pi]$ inductively as follows:
We start with $s[\eword] = s$ and $\gamma[s,\eword] = \iota_{\gamma(s)}$.
Let us now assume that 
$s[\pi]$ and $\gamma[s,\pi]$ are already defined. If the vertex $s[\pi]$ has outdegree zero then for $d \in \{\ell,r\}$, 
$s[\pi d]$ and $\gamma[s,\pi d]$ are not defined. Otherwise, we define
$s[\pi d] = (s[\pi])[d]$ and $\gamma[s,\pi d] = \gamma[s,\pi] \circ \gamma(s[\pi],d,s[\pi d])$.
Finally, let $L_s \subseteq \{\ell,r\}^*$ be the set of all words $\pi \in \{\ell,r\}^*$
such that $s[\pi]$ is defined and $s[\pi] \in V_0$. The set $L_s$ is in a one-to-one-correspondence with the paths from $s$ to $V_0$, and
we will speak of paths for elements of $L_s$ in the following.

The goal of the enumeration algorithm can therefore be formulated as follows: enumerate the pairs
$\langle s[\pi],\gamma[s,\pi] \rangle$, where $\pi$ ranges over all words from $L_s$. 

Let us illustrate this with an example. \autoref{fig-ExampleDagForPathEnum} shows a possible input DAG for \autoref{thm-enumerate-paths} (after the preprocessing described above, i.e., it is a binary DAG without vertices of outdegree $1$ and $V_0$ is the set of leaves), and let us assume that the vertex $s$ is the vertex $3$. 
For the category $\mathcal{C}$ we take the monoid $(\mathbb{N},+)$, and we call $\gamma(e)$ the weight of the edge $e$.
Our task is now to enumerate all $\langle s[\pi], \gamma[s,\pi] \rangle$, where $\pi$ describes a path from $3$ to one of the leaves $11$, $12$ and $13$. There are $4$ individual paths from $3$ to $11$, which yield the pairs $\langle 11, 16 \rangle, \langle 11, 10 \rangle, \langle 11, 17 \rangle, \langle 11, 11 \rangle$, $8$ individual paths from $3$ to $12$, which yield the pairs $\langle 12, 12 \rangle, \langle 12, 6 \rangle, \langle 12, 13 \rangle, \langle 12, 7 \rangle$ (if we go via vertex $5$) and the pairs $\langle 12, 13 \rangle, \langle 12, 18 \rangle, \langle 12, 15 \rangle, \langle 12, 20 \rangle$ (if we go via vertex $6$), and $4$ individual paths from $3$ to $13$, which yield the pairs $\langle 13, 11 \rangle, \langle 13, 16 \rangle, \langle 13, 13 \rangle, \langle 13, 18 \rangle$. Note that the pair $\langle 12, 13 \rangle$ occurs twice, due to the path $(3, 0, 5), (5, 4, 7), (7, 8, 9), (9, 1, 12)$ with weight $13$ and the path $(3, 8, 6), (6, 2, 8), (8, 0, 10), (10, 3, 12)$ with weight $13$ (note that we write here paths as sequences of edges $(u, \gamma, v)$, where $\gamma$ is the weight of the edge $(u, v)$). As explained before, we do not have to take care of such duplicates, since in the application of the enumeration procedure for the proof of \autoref{mainResultForests} no duplicated entries appear.

\begin{figure}
\begin{center}

\scalebox{1}{
\includegraphics{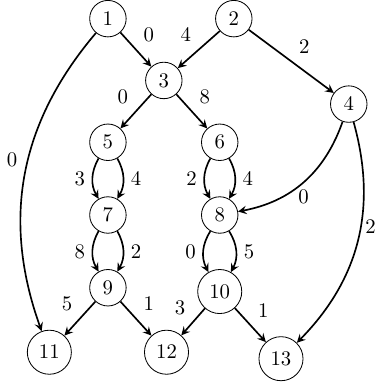}
}

\end{center}
\caption{A possible input DAG for \autoref{thm-enumerate-paths} after the preprocessing, i.e., the DAG is binary, it has no vertices of outdegree $1$ and the set $V_0$ is the set $\{11, 12, 13\}$ of leaves. Note that the edge decorations are integers labelling the edges.}
\label{fig-ExampleDagForPathEnum}
\end{figure}

Let us now discuss a simple algorithm that, on the one hand, produces all these desired pairs, but, on the other hand, \emph{does not} achieve constant delay (our actual algorithm will be an improvement of this algorithm). We just start in vertex $s$ and then carry out a depth-first search in order to explore all paths from $s$ to a leaf. While producing these paths, we can maintain their $\mathcal{C}$-morphisms, which allows us to output at every vertex $v \in V_0$ the pair consisting of $v$ and the current $\mathcal{C}$-morphism.
More formally, we can initialise variables $v := s$, $\gamma := \iota_{\gamma(s)}$ and $\pi := \eword$, so that $\pi$ is a path from $s$ to $v$ with morphism $\gamma$ (recall that we represent paths by words over $\{\ell, r\}$ as explained above). Now the depth-first search is done by moving from the current vertex $v$
 to its left child $v := v[\ell]$, updating the morphism by $\gamma := \gamma \circ \gamma(v, \ell, v[\ell])$ and the path by $\pi:= \pi \ell$. In addition we also have to store the triple $(v[r], \gamma \circ \gamma(v, r, v[r]), \pi r)$ on a stack to take care of it later. We repeat this until we reach a leaf, which means that we can produce an output (formally, we output $\langle v, \gamma \rangle$). Then we pop a triple $(v', \gamma', \pi')$ from the stack, we set $v := v'$, $\gamma := \gamma'$ and $\pi := \pi'$ and proceed as before. Obviously, this yields an enumeration of the desired pairs $\langle s[\pi],\gamma[s,\pi]\rangle$, but the delay can be non-constant (it can be proportional to the maximal length of a path in the DAG).

\label{page-omega_r}

For solving this problem, we need some more notations. For a vertex $v$ we denote with $\omega_r[v] \in V_0$ the unique leaf vertex that is reached from $v$ by following right edges: $\omega_r[v] = v$ if $v \in V_0$ and $\omega_r[v] = \omega_r[ v[r] ]$ if $v \in V_2$. Moreover, we define $\gamma_r[v]$ as the morphism of the unique path of right edges from $v$ to $\omega_r[v]$: $\gamma_r[v] = \iota_{\gamma(v)}$ if $v \in V_0$ and $\gamma_r[v] = \gamma(v, r, v[r]) \circ \gamma_r[ v[r] ]$ if $v \in V_2$. These data can be precomputed in time $\bigO(\card{\DAG})$ by a bottom-up computation on the DAG $\DAG$.

\SetAlgoLongEnd
\begin{algorithm}[t]
\SetKwComment{Comment}{(}{)}
\SetKwInput{KwGlobal}{variables}
\SetKw{KwPrint}{print}
\SetKw{KwStop}{stop}
\KwGlobal{$v \in V$, $\gamma \in M$, $\mathsf{stack} \in (V_2 \times M$ \colorbox{lightgray}{$\times \, \{\ell,r\}^*\!$}  $)^*$, $\mathsf{flag} \in \{0,1\}$, \colorbox{lightgray}{$\pi \in \{\ell,r\}^*$}} 
$v := s$  ; $\gamma := \iota_{\gamma(s)}$ ;  $\mathsf{stack} := \eword$  ;   $\mathsf{flag} := 1$ ;  \colorbox{lightgray}{$\pi := \eword$}\;
\While{{\sf true}}{
\If{$\mathsf{flag} = 1$\label{line:ifflag1}}{\KwPrint $\langle\omega_r[v], \gamma \circ \gamma_r[v]\rangle$\label{line:printright}}
$\mathsf{flag} := 1$\label{line:flag1}\;
\uIf{$v \in V_2$}
{ \If{$v[r] \in V_2$\label{line:vinV2}}{$\mathsf{stack.push} (v[r], \gamma \circ \gamma(v, r, v[r]), \colorbox{lightgray}{\text{$\pi r$}})$\label{line:push}}
  $v := v[\ell]$ ; $\gamma := \gamma \circ \gamma(v, \ell, v[\ell])$ ; \colorbox{lightgray}{$\pi:= \pi \ell$} \label{line:moveleft}}
\uElseIf{$\mathsf{stack} \neq \eword$}
{$(v,\gamma, \colorbox{lightgray}{\text{$\pi$}}) := \mathsf{stack.pop}$\label{line:pop}\;
 $\mathsf{flag} := 0$\label{line:flag0}}
\Else{\KwPrint {\sf EOE} \;
  \KwStop}
}
\caption{path\_enumeration$(s)$.  \label{path-enumerate-with-pi}}
\end{algorithm}

Our new algorithm (Algorithm~\ref{path-enumerate-with-pi}) has variables $v$, $\pi$ and $\gamma$ 
with the same meaning as in the above depth-first search. In 
each iteration of the main while loop, the algorithm behaves as follows:
\begin{enumerate}
\item If the current path $\pi$ ends with $\ell$ or is empty, then we print the pair $\langle \omega_r[v], \gamma \circ \gamma_r[v]\rangle$.
In other words: We extend the current path $\pi$ maximally to the right until we reach a leaf. Then we print the pair corresponding to this path.
\item \label{point--ii} If the current path $\pi$ ends with $r$ then the pair $\langle \omega_r[v], \gamma \circ \gamma_r[v]\rangle$ is \emph{not} printed.
This can be justified as follows: write $\pi = \xi r^k$, where $\xi$ does not end with $r$ ($\xi$ can be empty). 
Then, in a previous iteration of the while loop, the path variable $\pi$ had the value $\xi$,
and in this iteration the pair $\langle \omega_r[v], \gamma \circ \gamma_r[v]\rangle$ was already printed
(because $\omega_r[v]$ is also the leaf obtained from extending $\xi$ maximally to the right).
\end{enumerate}
The above behavior is achieved with the flag in Algorithm~\ref{path-enumerate-with-pi}.
Note that the current path $\pi$ ends with $r$ if and only if 
we pop from the stack, i.e., Line~\ref{line:pop} is executed. We then set the flag to zero in Line~\ref{line:flag0}. Moreover, when
we start a new iteration of the while loop we only print the pair $\langle \omega_r[v], \gamma \circ \gamma_r[v] \rangle$ if the flag is 1 (Line~\ref{line:ifflag1}~and~\ref{line:printright}).
Line~\ref{line:flag1}, where the flag is set to 1 is explained in a moment.

This modified algorithm with the flag variable is still correct, i.e., it enumerates the pairs $\langle s[\pi], \gamma[s,\pi]\rangle$ for $\pi \in L_s$.
In order to achieve constant delay it is crucial that the algorithm pushes a triple $(v[r], \gamma \circ \gamma(v, r, v[r]), \pi r)$ in Line~\ref{line:push}
on the stack
only if $v[r]$ is not a leaf of the DAG; see Line~\ref{line:vinV2}. First of all, this restriction 
does not harm the correctness of the algorithm: Assume that $v[r]$ is a leaf so that Line~\ref{line:push} is not executed.
The pair $\langle v[r], \gamma \circ \gamma(v, r, v[r])\rangle$ has been printed in a previous iteration of the while loop; see the above argument
in Point \eqref{point--ii}. Moreover, since $v[r]$ is a leaf, there
is no reason to return later to the vertex $v[r]$. Hence, it is not necessary to push $(v[r], \gamma \circ \gamma(v, r, v[r]), \pi r)$ on the stack.

To see that the restriction in Line~\ref{line:vinV2} is needed for constant delay, assume for a moment that we would push
a triple $(v[r], \gamma \circ \gamma(v, r, v[r]), \pi r)$ on the stack in Line~\ref{line:push} also if $v[r]$ is a leaf.
Consider for instance the case, where $L_s = \{\ell^n \} \cup \{\ell^i r : 0 \leq i \leq n-1\}$. In the first $n$ iterations, the algorithm prints the pairs $\langle s[\ell^i r],\gamma[s,\ell^i r] \rangle$ ($0 \leq i \leq n-1$) followed by
$\langle s[\ell^n],\gamma[s,\ell^n] \rangle$. Moreover, it pushes the triples $(s[\ell^i r],\gamma[s,\ell^i r], \ell^i r)$ for 
$0 \leq i \leq n-1$ on the stack. These triples will then be popped again from the stack in $n$ iterations but nothing is printed in these iterations. Hence, the delay would be not constant. 

By the above discussion, whenever Algorithm~\ref{path-enumerate-with-pi} pops the new path $\pi$ from the stack (Line~\ref{line:pop}), $\pi$ ends with $r$,
the flag is set to zero, and moreover, $s[\pi] \in V_2$, which means that $\pi \ell$ is a valid path of the DAG. In the next iteration of the while loop, nothing is printed (due to Line~\ref{line:ifflag1}).
Moreover, since $v \in V_2$, the variable $\pi$ will be set to $\pi \ell$ in Line~\ref{line:moveleft}. Hence, we can safely reset the flag to 1 in Line~\ref{line:flag1} (the new $\pi$
does not end with $r$).
Moreover, in the next iteration of the while loop a pair will be printed in Line~\ref{line:printright}. Hence, there cannot be two consecutive iterations of the while loop, where no pair is printed. This shows that Algorithm~\ref{path-enumerate-with-pi} works with constant delay.
 
Finally, notice that we can omit the code with gray background dealing with the variable $\pi$ in Algorithm~\ref{path-enumerate-with-pi}. It does not influence the control flow of the algorithm.

\subsection{Further applications} \label{sec-further-app}

Our \autoref{thm-enumerate-paths} serves as an important component in our algorithm for enumerating MSO-queries on SLP-compressed forests (\autoref{mainResultForests}). Nevertheless, as we shall briefly discuss in this section, it is rather general and can be used (with some slight variations) to solve other relevant enumeration problems.
The reader, who is only interested in  the proof of \autoref{mainResultForests} can skip Sections~\ref{sec free} and \ref{sec perm} and
continue with \autoref{sec-fslp}.

\subsubsection{Free monoids} \label{sec free}

A special case of \autoref{thm-enumerate-paths} that deserves further clarification is the one where $\mathcal{C}$ is a free monoid 
$\Sigma^*$ for some finite alphabet $\Sigma$ and the edges of the DAG $\DAG$ are decorated with elements from $\Sigma \cup \{\varepsilon\}$.
In this case the elements $\gamma(\pi)$ are words that can be as long as the path $\pi$.
\autoref{assumption-category} would mean that such words fit into a single RAM register, which can be hardly
justified. On the other hand, the free monoid case is relevant since it covers an important setting in the field of information extraction,
namely, it yields an enumeration algorithm for so-called annotation transducers.

An \emph{annotation transducer} is an NFA $\mathcal{T}$ whose transitions are labelled by pairs $(a,x) \in \Sigma \times (\Gamma \cup \{\diamond\})$ for an input alphabet $\Sigma$, and set of markers $\Gamma$, and the  \emph{empty marker} $\diamond \notin \Gamma$. An annotation transducer reads an input over $\Sigma$ by means of the first components of the transition labels just like a normal NFA, but due to the markers, it is interpreted as a query that maps an input string $w \in \Sigma^*$ to a set of output words $\mathcal{T}(w)$ as follows:
Take all words $(a_1, x_1) \cdots (a_n, x_n) \in (\Sigma \times \Gamma)^*$ that are accepted by $\mathcal{T}$ and 
such that  $w = a_1 \cdots a_n$. For each such word 
we only keep the word consisting of all pairs $(i, x_i)$ such that
$x_i \in \Gamma$ and put it into $\mathcal{T}(w)$. For example, if $\Sigma = \{a, b\}$, $\Gamma = \{x, y\}$ and
$\mathcal{T}$ accepts $(a, \diamond) (b, y) (a, \diamond) (b, \diamond) (b, x) (a, \diamond)$ as well as $(a, \diamond) (b, y) (a, x) (b, \diamond) (b, \diamond) (a, y)$ then $\mathcal{T}(ababba)$ contains the output words $(2, y) (5, x)$ and $(2, y) (3, x) (6, y)$. An important problem in the context of information extraction is to enumerate the set $\mathcal{T}(w)$ with a linear preprocessing (in data complexity) and output-linear delay (i.e., the delay depends linearly on the length of the next output word). See~\cite{BourhisEtAl2021,GawrychowskiEtAl2024,MunozRiveros2025} for variants of this problem, and~\cite{AmarilliEtAl2021,FaginEtAl2015,Schmid2024,SchmidSchweikardt2022,SchmidSchweikardt2024} for general information about information extraction. 

The above setting can be easily expressed in the setting of \autoref{thm-enumerate-paths}. We let $\mathcal{C}$ be the free monoid  generated by $\{1, 2, \ldots, |w|\} \times \Gamma$. We combine $w$ and $\mathcal{T}$ into a $\mathcal{C}$-decorated DAG $\mathcal{D}$ with a source vertex $s$ and a sink vertex $v$ such that the labels of the $s$-to-$v$-paths are exactly the output words from $\mathcal{T}(w)$. 
Constructing $\mathcal{D}$ is straightforward and similar to the product automaton construction. Thus, invoking the algorithm of \autoref{thm-enumerate-paths} on $\mathcal{D}$, vertex $s$ and $V_0 = \{v\}$ enumerates the output words of $\mathcal{T}(w)$. In case that $\mathcal{T}$ is \emph{unambiguous}, i.e., there are no two different accepting runs that produce on the same input word $w$ 
the same output word, the enumeration contains no duplicates. If we apply \autoref{thm-enumerate-paths} as stated above, then the delay is constant, but this assumes that the enumerated output words fit into single RAM registers, which is unrealistic, considering that they might be as long as the input string $w$. We now explain how to adapt Algorithm~\ref{path-enumerate-with-pi} such that we still get output-linear delay for the free
monoid case.

Let $\Sigma^*$ be a free monoid finitely generated by the alphabet $\Sigma$, and let $\DAG = (V, E, \gamma)$ be a DAG with $\gamma(e) \in \Sigma \cup \{\eword\}$ for every edge $e$ (since we have a category with a single object, we can omit the values 
$\gamma(v)$ for $v \in V$). We assume that all elements $x \in \Sigma \cup \{\eword\}$ fit into a single register of our RAM. We will also call $\gamma(e)$ the \emph{edge label} of an edge $e$ and $\gamma(\pi)$ the \emph{path label} of a path $\pi$ (i.e., the concatenation of the labels of its edges). 

The preprocessing slightly differs from \autoref{sec preproc path enum}. As before, we make sure that $V_0$ consists of all vertices of outdegree zero, and we compute the set $V_1$ of vertices with outdegree one. Next, we replace every edge $(v, i, u)$ with $u \in V_1$ by the edge $(v, i, f(u))$, where $f(u)$ is the unique vertex with $f(u) \notin V_1$ and $f(u)$ is reached from $u$ by the unique path $\pi_u$ of edges $(v_1, j, v_2)$ with $v_1 \in V_1$. The difference is that we do not label $(v, i, f(u))$ with $\gamma(v, i, u) \cdot \gamma(\pi_u)$ as before, but with the 
triple $(\gamma(v,i,u), u, f(u))$. Such a label $(\gamma(v,i,u), u, f(u))$ has constant size in our RAM model 
 and indicates that to $\gamma(v,i,u)$ we still have to append the label of the unique path from $u$ to $f(u)$. After this modification, every vertex has an outdegree of zero or of at least two, and, in the same way as before, we can transform the DAG into an equivalent DAG $\DAG_b$ whose vertices have all outdegree zero or two. Moreover, every edge is now labelled with an element from $\Sigma \cup \{\eword\}$ or a triple $(x,u, v)$, where $x \in \Sigma \cup \{\eword\}$ and $u$ and $v$ are vertices such that in the original DAG $\DAG$ there is a unique path from $u$ to $v$ with only vertices of outdegree one (except $v$). In particular, every edge label can be stored in a constant number of RAM registers. Obviously, the labels $(x, u, v)$ need to be turned into actual path labels later on, but we will take care of this at the end of our explanation. Let $\Delta$ be the set of all edge labels of $\DAG_b$ and define $\gamma(\zeta) \in \Sigma^*$ for $\zeta \in \Delta$
as follows: $\gamma(x) = x$ for $x \in \Sigma \cup \{\eword\}$ and $\gamma(x, u,v) = x w$ where $w \in \Sigma^*$ is the word 
labelling the unique path from $u$ to $v$ in $\mathcal{D}$. Let $\Delta_{\neq \eword} = \{ \zeta \in \Delta : \gamma(\zeta) \neq \eword \}$.
This set can be easily precomputed.

Algorithm~\ref{path-enumerate-with-pi} must now be adapted as follows. As an additional data structure, we need a trie $\mathsf{T}$ that stores path labels of $\DAG_b$. More precisely, $\mathsf{T}$ is a tree whose edge labels are elements from $\Delta_{\neq \eword}$.
A given node $\alpha$ of $\mathsf{T}$ represents the word $\gamma_{\mathsf{T}}(\alpha) = \gamma(\zeta_1) \cdots \gamma(\zeta_k) \in \Sigma^*$,
where $\zeta_1 \cdots \zeta_k$ is the sequence of edge labels in $\mathsf{T}$ from the root to $\alpha$.
 Note that $\gamma_{\mathsf{T}}(\alpha) \neq \eword$ if $\alpha$ is not the root.
 Initially, $\mathsf{T}$ consists of a single vertex. 
 The variable $\gamma$ in Algorithm~\ref{path-enumerate-with-pi} is replaced in our adapted version of the algorithm by
 a variable $\alpha$ that stores a node of $\mathsf{T}$ (initially it is the unique node of $\mathsf{T}$). The variable $\alpha$ should
 be seen as a succinct representation of the word $\gamma_{\mathsf{T}}(\alpha) \in \Sigma^*$ (which would be the value
 of the variable $\gamma$ in Algorithm~\ref{path-enumerate-with-pi}).
 In particular, at every time instance, $\gamma_{\mathsf{T}}(\alpha)$ will be the label of a path from the source vertex $s$ to $v$  in $\DAG_b$.

 In our adaptation of 
Algorithm~\ref{path-enumerate-with-pi} we push in
Line~\ref{line:push} the pair $(v[r], \alpha')$ on the stack, where $\alpha'$ is defined as follows: Let $\zeta \in \Delta$ be the label of the edge $(v, r, v[r])$. If $\gamma(\zeta) = \eword$
then $\alpha' = \alpha$ and $\mathsf{T}$ remains unchanged. If $\gamma(\zeta) \neq \eword$, then $\alpha'$ is a new vertex in the 
trie and we add the edge $(\alpha, \zeta, \alpha')$ to $\mathsf{T}$ (which can be done in constant time).\footnote{There might be already a node $\alpha''$ in the trie such that
$\gamma_{\mathsf{T}}(\alpha'') = \gamma_{\mathsf{T}}(\alpha')$. This does not cause problems.}
Finally, $\alpha$ is set to $\alpha'$.
Line~\ref{line:moveleft} is modified analogously except that we take the edge label of $(v,\ell, v[\ell])$ for $\zeta$ in the above
definition of the new node $\alpha'$ of $\mathsf{T}$. Moreover, we do not push on the stack but set $v$ to $v[\ell]$ and $\alpha$ to $\alpha'$.
Line~\ref{line:printright} requires a few more changes, since we have not precomputed the values $\omega_r(v)$ and $\gamma_r[v]$
(all the words $\gamma_r[v]$ cannot be produced explicitly in linear time).
 We first walk in the trie $\mathsf{T}$ from the root to the node $\alpha$ and output all edge labels along this path. Then we go to vertex $v$ of $\DAG_b$, we move along the right edges and thereby output each edge label $\zeta \in \Delta_{\neq \eword}$
 until we reach a vertex from $V_0$ that we will also output. 

This algorithm correctly enumerates the labels of all $s$-to-$V_0$ paths in $\DAG_b$. In order to
enumerate the path labels of all $s$-to-$V_0$ paths in the original DAG $\DAG$ in output-linear delay (i.e., 
a delay proportional to the length of the produced word), we have to resolve two small problems: Firstly, the labels of $s$-to-$V_0$ paths in $\DAG_b$ contain the succinct edge labels of the form $\zeta = (x, u, v) \in \Delta_{\neq \eword}$.
While producing the output in Line~\ref{line:printright}, such a label $\zeta$ has to be replaced by the word $\gamma(\zeta)$ by
producing $x$ followed by the sequence of edge labels along the unique path from $u$ to $v$ in $\DAG$.
 Now we enumerate the correct labels, but there is a slight problem with the delay: the path from $u$ to $v$ in $\DAG$ may contain 
  a long subpath of $\eword$-labelled edges, which do not contribute to the final output word. 
  The solution is simple: In the preprocessing, we compute shortcuts for these subpaths of $\eword$-labelled edges: For this we remove all edges in $\DAG$ except the outgoing edges of vertices with outdegree one, which gives us a forest (with edges pointing towards the roots) in which we can compute the required shortcuts in linear time. This problem of walking through sequences of $\eword$-labelled edges can also happen when we move along right edges of $\DAG_b$ in our modified Line~\ref{line:printright}. Such a path of right edges may contain a long subpath consisting of edges with labels
  from $\Delta \setminus \Delta_{\neq \eword}$. The solution is the same: In a preprocessing step we remove all edges in $\DAG_b$ except the right edges and then we compute shortcuts as before.

We summarise that by a minor modification of Algorithm~\ref{path-enumerate-with-pi}, \autoref{thm-enumerate-paths} can be adapted to the special case where $\mathcal{C}$ is a free monoid (see Corollary~\ref{thm-enumerate-paths-monoid-corollary-1} below). For this, we only need the assumption that edge labels of $\DAG$ fit into single registers of the RAM. As explained above, this covers the relevant case of enumerating the output words of an annotation transducer.

\begin{corollary} \label{thm-enumerate-paths-monoid-corollary-1}
Fix a free monoid $\Sigma^*$. Let $\DAG = (V,E,\gamma)$ be a (not necessarily binary) DAG such that $\gamma(e) \in \Sigma \cup \{\varepsilon\}$ for all edges $e \in E$
and all edge labels fit into single registers of the RAM. Let $V_0 \subseteq V$ be a distinguished set of target vertices. In time $\bigO(|\DAG|)$ one can compute a data structure that allows to enumerate for a given source vertex $s \in V$ in output-linear delay the multiset $\multiset{ \langle \omega(\pi), \gamma(\pi) \rangle : \pi \in \path_{\DAG}(s,V_0)}$.
\end{corollary}

\subsubsection{Permutation groups} \label{sec perm}

In this section we briefly discuss another application of \autoref{thm-enumerate-paths} to permutation groups.
The set $\mathsf{Sym}(n)$ of all permutations on $\{1,2,\ldots, n\}$ is a group under the operation of composition (the so-called
symmetric group of degree $n$). A permutation group $G$ is simply a subgroup of $\mathsf{Sym}(n)$ (denoted by $G \leq \mathsf{Sym}(n)$), which is usually given by
a generating set $A \subseteq \mathsf{Sym}(n)$. Thus, $G$ is the closure $\langle A \rangle$ of the set $A$ under composition. 
There are many algorithmic results for permutation groups; see, e.g., the monograph \cite{seress03}. Among others, there are polynomial 
time algorithms for testing membership in permutation groups and computing the size of a permutation group.

We can use \autoref{thm-enumerate-paths} to enumerate a given permutation group $G = \langle A \rangle \leq \mathsf{Sym}(n)$
with delay $\Theta(n)$ after a polynomial preprocessing time. Like many other permutation group algorithms,
our enumeration method is based
on the stabiliser chain $1 = G_n \leq G_{n-1} \leq \cdots \leq G_1 \leq G_0 = G$, where $G_i$ the subgroup of $G$ consisting  
of all $g \in G$ such that $g(a)=a$ for all $1 \leq a \leq i$. It is known that one can compute in polynomial time from $A$ for every
$0 \leq i \leq n-1$ a set of right coset representatives $R_i$ of $G_{i+1}$ in $G_i$; see \cite[Chapter~4]{seress03}. Thus, for every $g \in G_i$ there is a unique $r \in R_i$
such that $g \in G_{i+1} r$. The union $R = \bigcup_{0 \le i \leq n-1} R_i$ is called a strong generating set of $G$. It has the property 
that every element of $g \in G$ can be uniquely written as $g = r_{n-1} r_{n-2} \cdots r_0$ with $r_i \in R_i$ for all $0 \leq i \leq n-1$.
Hence, in order to enumerate $G$, we can first construct in polynomial time a strong generating set $R = \bigcup_{0 \le i \leq n-1} R_i$ and a corresponding DAG
$\mathcal{D}$ consisting of nodes $0,1, \ldots, n$ and $|R_i|$ many edges from $i+1$ to $i$ that are labelled with the elements from $R_i$
($0 \leq i \leq n-1$). We then use Algorithm~\ref{path-enumerate-with-pi} to enumerate all products labelling the paths from the source vertex $n$ to the target vertex $0$. This will not produce duplicates.

For complexity considerations, we assume a RAM of register length $\Theta(\log n)$ so that 
elements from $\{1,\ldots, n\}$ can be stored in a constant number of registers.
Hence, a permutation $g \in \mathsf{Sym}(n)$ is stored in $\Theta(n)$ registers and the multiplication of two permutations needs
time $\Theta(n)$. Under this assumption, the delay of Algorithm~\ref{path-enumerate-with-pi} is $\Theta(n)$, which is also the time needed to output
a single permutation. In other words: the enumeration algorithm works in output-linear delay.

\begin{corollary} \label{thm-enumerate-paths-monoid-corollary-2}
Given a set of permutations $A \subseteq \mathsf{Sym}(n)$, one can enumerate on a RAM with register length $\Theta(\log n)$
the permutation group $\langle A \rangle$ after polynomial preprocessing in delay $\Theta(n)$.
\end{corollary}

\section{Straight-Line Programs for Strings, Trees and Forests}\label{sec-fslp}

We now come back to the main topic of this paper: MSO-query enumeration in compressed forests.
In this section we will introduce \emph{forest straight-line programs} (f-SLPs for short), which is our formalism for unranked forest compression.
For better motivation we start with the simpler concept of \emph{string straight-line programs} (s-SLPs for short), which are a special
case of f-SLPs. Our definitions of s-SLPs and f-SLPs are equivalent to what is commonly found in the literature, but we choose a more algebraic point of view that is more convenient for our applications of SLPs. 

In \autoref{sec:SLPsBackground}, we provide some general background information on straight-line programs and their role in theoretical computer science. We also provide several references to the literature.

\subsection{Folding trees and unfolding DAGs} \label{folding}

A simple but important concept for SLPs are the operations of folding a labelled tree into a DAG and unfolding a DAG 
from a distinguished source vertex. To define this formally, let $T = (V, E_\ell, E_r, \lambda)$ be a vertex-labelled binary tree. Recall that every vertex $u$ of $T$ induces the subtree $T(u)$ rooted by $u$.
We can define an equivalence relation $\equiv$ on $V$ by $u \equiv v$ if and only if $T(u)$ and $T(v)$ are isomorphic (as vertex-labelled binary trees).
Then the \emph{DAG-folding} of $T$ is the quotient graph $T/_\equiv$.  Its vertices are the equivalence classes $[v]_\equiv$ ($v \in V$) and the label of $[v]_\equiv$ is $\lambda(v)$.
Moreover, if $(u,v) \in E_d$ ($d \in \{\ell,r\}$) 
then there is a DAG-edge $([u]_\equiv,d,[v]_\equiv)$. Intuitively, we merge all vertices of $T$ where isomorphic subtrees are rooted. See \autoref{fig-DagFoldingExample} for an example.

\begin{figure}
\begin{center}
\scalebox{1.4}{\includegraphics{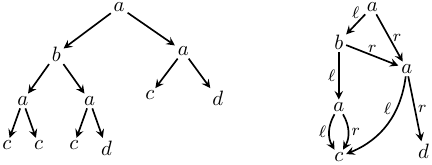}}
\end{center}
\caption{A binary tree $T$ with labels from $\Sigma_0 = \{c,d\}$ and $\Sigma_2 = \{a,b\}$
(left side) and its DAG-folding (right side) with edge labels $\ell$ and $r$ indicating left and right edges. The distinct names of the vertices are omitted for readability, i.e., we only show the labels. Observe that $T$ has the following $6$ distinct subtrees: $c, d, a(cc), a(cd), b(a(cc)a(cd)), a(b(a(cc)a(cd))a(cd))$; thus, its DAG-folding has $6$ vertices.}
\label{fig-DagFoldingExample}
\end{figure}

In the same way, we can define the DAG-folding of ordered trees with larger arity (we just need a larger index set $I$ for the DAG-edges since there can be more than
two edges between two vertices).  The DAG-folding of a syntax tree of an algebraic expression (which is a vertex-labelled ranked tree) is often called an algebraic circuit.
It can be easily seen that there is a one-to-one correspondence between the vertices of a tree $T$ and the paths of its DAG-folding that start in the root (i.e., the unique vertex of in-degree $0$). For example, in \autoref{fig-DagFoldingExample}, the first and second $c$-labelled leaves of $T$ are represented by the paths $a \xrightarrow{\ell} b \xrightarrow{\ell} a \xrightarrow{\ell} c$ and $a \xrightarrow{\ell} b \xrightarrow{\ell} a \xrightarrow{r} c$, respectively, in its DAG-folding.

The DAG-folding $T/_\equiv$ can be seen as a compressed representation of the tree $T$. DAG-compression is used in many different
areas of computer science (see \cite{DowneyST80,FlajoletSS90,FrickGK03} for further details), but it has also limitations. 
For instance, the size of $T/_\equiv$ is always lower
bounded by the height of $T$. In \autoref{sec:FSLP} we will introduce a compression scheme for trees (and forests) that overcomes this
limitation.

For our applications, more important than the DAG-folding is the opposite operation, i.e., the \emph{unfolding} of a DAG. We first define this concept for a DAG $\DAG = (V,E)$ without any further structure, where $E \subseteq V \times I \times V$ for an index set
$I$.  Then the tree $\unfold_{\DAG}(v)$ for a vertex $v$ (the unfolding of $\DAG$ from vertex $v$) has the vertex set
 $\path_{\DAG}(v)$ consisting of all paths 
 that start in $v$ and end in an arbitrary vertex. For two paths $\pi, \pi' \in \path_{\DAG}(v)$,
there is an edge $(\pi, \pi')$ if there are $i \in I$ and $v' \in V$ with $\pi' = \pi i v'$. 
If $\DAG$ is a binary DAG then $\unfold_{\DAG}(v)$ becomes a binary tree in a natural way by 
declaring the edge $(\pi, \pi i v')$ to be a left (resp., right) edge if $i = \ell$ (resp., $i = r$).
In addition, if the DAG $\DAG$ is $\mathcal{C}$-decorated (for a category $\mathcal{C}$) with the decoration mapping $\gamma$ then 
$\unfold_{\DAG}(v)$ becomes a $\mathcal{C}$-decorated tree by extending $\gamma$ to 
$\unfold_{\DAG}(v)$ as follows: $\gamma(\pi) = \gamma(\omega(\pi))$ and 
$\gamma(\pi, \pi i v') =  \gamma(\omega(\pi),i,v')$ (recall that $\omega(\pi)$ is the terminal vertex of $\pi$; see 
\autoref{sec-DAG}).
Similarly, if $\DAG$ is vertex-labelled with the labelling function $\lambda : V \to \Sigma$ then 
$\unfold_{\DAG}(v)$ becomes also vertex-labelled by setting $\lambda(\pi) = \lambda(\omega(\pi))$ for a path $\pi$.

Let $\DAG = (V, E, \lambda)$ be the vertex-labelled binary DAG (without a decoration mapping $\gamma$) on the right of \autoref{fig-DagFoldingExample} and let $v_0$ be its topmost vertex.
 Obviously, according to the above definition, $\unfold_{\DAG}(v_0)$ is the tree $T$ on the left of \autoref{fig-DagFoldingExample}.

\subsection{Straight-line programs for strings}\label{sec:sSLPs}

Let us denote for the moment the concatenation operation for strings over a finite alphabet $\Sigma$ by $\conch$ (usually, we write $uv$ for 
$u \conch v$, but in this section it will be convenient to treat string concatenation more prominently as a binary operation on strings). Then $(\Sigma^*, \conch)$ is an algebraic structure that is usually called the free monoid over $\Sigma$. An expression (or syntax tree) $T$ over $(\Sigma^*, \conch, (a)_{a \in \Sigma})$ (the free monoid with all alphabet symbols added as constants) is a binary tree with leaves labelled by symbols from $\Sigma$ and inner vertices labelled by $\conch$. This expression $T$ naturally evaluates in 
$(\Sigma^*, \conch, (a)_{a \in \Sigma})$ to a string $\valX{T}$. It is the word obtained by traversing the leaves of $T$ from left to right and thereby writing the labels of the leaves.

Usually, string straight-line programs are defined as context-free grammars in Chomsky normal form that derive exactly one string; see \autoref{sec:SLPsBackground}. For our extension to forest straight-line programs the following 
more algebraic definition is useful.
 A \emph{string straight-line program} (s-SLP for short) 
is a (binary vertex-labelled) DAG $\SLP = (V,E,\lambda)$ such that 
$\lambda(A) \in \Sigma$  for every leaf $A$ of $\SLP$ and $\lambda(A) = \conch$ for every non-leaf vertex $A$.
In other words, every tree $\unfold_{\SLP}(A)$ for $A \in V$ is a syntax tree over $(\Sigma^*, \conch, (a)_{a \in \Sigma})$.
We follow here the tradition that vertices of an s-SLP are denoted by capital letters (in the grammar-like definition of an s-SLP
they correspond to non-terminals). We then define for every vertex $A \in V$ the string 
 $\derivsub{\SLP}{A} = \valX{\unfold_{\SLP}(A)}$.

Often in the literature one adds to an s-SLP $\SLP = (V,E,\lambda)$ a distinguished start vertex $A \in V$ (the start non-terminal in the grammar-like
formalization) and
defines $\valX{\SLP} = \valX{\unfold_{\SLP}(A)}$ (the string \emph{represented} or \emph{compressed} by the s-SLP $\SLP$).
We then also say that $\SLP$ is an \emph{s-SLP for $\valX{\SLP}$}. 
For example, let $w = a a b b a b a a b a a b b a b \in \Sigma^*$. Then 
\begin{equation*}
T = ((((a \conch (a \conch b)) \conch b)) \conch (a \conch b)) \conch ((a \conch (a \conch b)) \conch (((a \conch (a \conch b)) \conch b) \conch (a \conch b))) 
\end{equation*}
is an expression over $(\Sigma^*, \conch, a,b)$ with $\valX{T} = w$ (due to associativity, parentheses can be removed, but they determine the structure of $T$, which is obviously crucial for s-SLPs). The syntax tree $T$ is shown on the left of \autoref{fig-strSLPExample}, and the right shows the DAG-folding of $T$ and therefore an s-SLP for $w$.

\begin{figure}
	\centering
	\scalebox{1.4}{\includegraphics{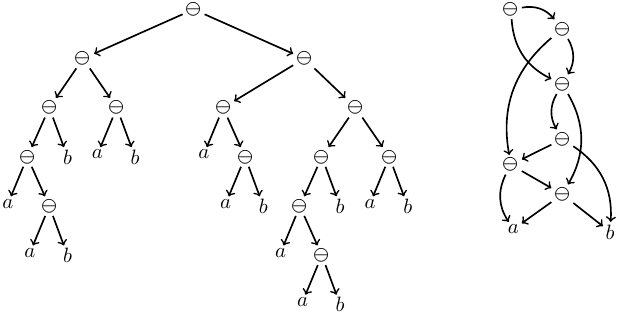}}
	\caption{A syntax tree  over $(\Sigma^*, \conch, \eword)$ (left side) and the corresponding s-SLP (right side). The labels $\ell$ and $r$ for left and right edges are implicitly represented by the drawing of the s-SLP (i.e., the left edge of a vertex is always drawn to the left of the right edge).}
\label{fig-strSLPExample}
\end{figure}

\subsection{Straight-line programs for unranked forests}\label{sec:FSLP}

Straight-line programs for trees or, more generally, forests can be defined in a similar way as for strings. We only have to replace the algebra for strings (i.e., the free monoid over $\Sigma^*$) by a suitable algebra $A$ for forests. We start with the definition of this algebra.

\subsubsection{Forest algebra} \label{sec-FA}

Recall that $\F(\Sigma)$ is the set of all forests with vertex labels from $\Sigma$.
Let us fix a distinguished symbol $\ast \not\in \Sigma$.
The set of forests $F \in \F(\Sigma \cup \{\ast\})$ such that $\ast$ has a unique
occurrence in $F$ and this occurrence is at a leaf vertex is denoted by $\F_{\ast}(\Sigma)$.
Elements of $\F_{\ast}(\Sigma)$ are called \emph{forest contexts}. Note that $\F(\Sigma) \cap \F_{\ast}(\Sigma) = \emptyset$.
Following~\cite{DBLP:conf/birthday/BojanczykW08}, we define the \emph{forest algebra}
as the 2-sorted algebra
$(\F(\Sigma) \cup \F_{\ast}(\Sigma), \conch, \concv, \varepsilon, \ast)$,
where $\varepsilon \in \F(\Sigma)$ is the empty forest, $\ast \in \F_{\ast}(\Sigma)$ is the empty forest context, and
$\conch$ (\emph{horizontal concatenation}) and $\concv$ (\emph{vertical concatenation}) are partially defined binary
operations on $\F(\Sigma) \cup \F_{\ast}(\Sigma)$ that are defined as follows, where we view forests and forest contexts as parenthesised expressions built from the 
binary symbols $\concv$ and $\conch$, and the constants from $\Sigma \cup \{ \ast \}$ (see also \autoref{forests}):
\begin{itemize}
\item For $F_1, F_2 \in \F(\Sigma) \cup \F_{\ast}(\Sigma)$ such that $F_1 \in \F(\Sigma)$ or $F_2 \in \F(\Sigma)$,
we set $F_1 \conch F_2 = F_1 F_2$ (i.e., we concatenate the corresponding sequences of trees).
\item For  $F_1 \in \F_{\ast}(\Sigma)$ and $F_2 \in \F(\Sigma) \cup \F_{\ast}(\Sigma)$, $F_1 \concv F_2$ is obtained by replacing in $F_1$ the unique
occurrence of $\ast$ by  $F_2$.
\end{itemize}
Consider $a(b \ast) \concv (a(bc) \conch b(ccb)) = a(b \ast) \concv a(bc)b(ccb) = a(b a(bc)b(ccb))$ as an example for these operations.

Note that $(\F(\Sigma), \conch)$ and $(\F_{\ast}(\Sigma), \concv)$ are monoids with the neutral elements $\varepsilon$ (the empty forest) and
$\ast$ (the empty forest context) and that $(\Sigma^*, \conch)$ is a submonoid of $(\F(\Sigma), \conch)$.
For $a \in \Sigma$, we write $a_\ast$ for the forest context $a(\ast)$ which consists
of an $a$-labelled root with a single child labelled with $\ast$. Note that $a = a_\ast \concv \varepsilon$.
More generally, for a forest $F$, $a_* \concv F$ yields the tree obtained from the forest $F$ by adding an $a$-labelled
root vertex on top of the forest $F$.
In \cite{DBLP:conf/birthday/BojanczykW08} the forest algebra is introduced as a two sorted algebra with the two sorts $\F(\Sigma)$ and $\F_{\ast}(\Sigma)$.
Our approach with partially defined concatenation operators is equivalent.

A \emph{forest algebra expression} is an expression over the algebra $(\F(\Sigma) \cup \F_{\ast}(\Sigma), \conch, \concv, \varepsilon, \ast)$ with atomic subexpressions
of the form $a$ and $a_\ast$ for $a \in \Sigma$. Such an expression can be identified with a vertex-labelled binary tree, where
every internal vertex is labelled with the operator $\conch$ or $\concv$
and every leaf is labelled with a symbol $a$ or $a_\ast$ for $a \in \Sigma$. 
Not all such trees are valid in the sense that they evaluate to an element from $\F(\Sigma) \cup \F_{\ast}(\Sigma)$, e.g., $a_\ast \conch a_\ast$ is not valid, since it would produce a forest with two occurrences of $\ast$. 
We define  \emph{valid forest algebra expressions} and the \emph{type} $\tau(T) \in \{0,1\}$ of a  valid forest algebra expression inductively as follows:
\begin{itemize}
\item For every $a \in \Sigma$, $a$ and $a_{\ast}$ are valid and $\tau(a)=0$ and $\tau(a_{\ast})=1$.
\item If $T_1$ and $T_2$ are valid and $\tau(T_1) + \tau(T_2) \leq 1$ then $T_1 \conch T_2$ is valid 
and $\tau(T_1 \conch T_2) = \tau(T_1) + \tau(T_2)$.
\item If $T_1$ and $T_2$ are valid and $\tau(T_1)=1$ then $T_1 \concv T_2$ is valid 
and $\tau(T_1 \concv T_2) = \tau(T_2)$.
\end{itemize}
We will only consider valid forest algebra expressions
in the following. We write $\EXP(\Sigma)$ for the set of all valid forest algebra expressions.
Elements of $\EXP(\Sigma)$ will be denoted with $\expr$ in the following. Moreover, we set $\EXP_i(\Sigma) = \{ \expr \in \EXP(\Sigma) : \tau(\expr) = i \}$ for $i \in \{0,1\}$.

With $\valX{\expr} \in \F(\Sigma) \cup \F_\ast(\Sigma)$ we denote the forest or forest context obtained by evaluating $\expr$ in the forest algebra.
If $\tau(\expr) = 0$ then $\valX{\expr} \in \F(\Sigma)$ and if  $\tau(\expr) = 1$ then $\valX{\expr} \in \F_\ast(\Sigma)$.
The empty forest $\varepsilon$ and the empty forest context $\ast$ are not allowed in forest algebra expressions, which is not a restriction as long as we only want to produce non-empty forests and forest contexts; see \cite[Lemma~3.27]{GanardiJL21}.
Note that the vertices of $\valX{\expr}$ can be mapped bijectively on the leaves of $\expr$. 
For the rest of the paper we always assume that the vertices of $\valX{\expr}$ are the leaves of $\expr$.
For a leaf $v$ of $\expr$ we write $\po(v)$ for preorder number of the corresponding vertex of 
$\valX{\expr}$.

\autoref{fig-fa} shows a forest algebra expression $\expr$, which evaluates to the forest on the right of \autoref{fig-fslp}. 
Every vertex of $\deriv{\expr}$ is labelled with its preorder number in blue and the vertex with preorder number $k$ is the leaf
with number $k$ in \autoref{fig-fa}. If $v$ is the leaf of $\expr$ identified by the green path in \autoref{fig-fa}, then
$\po(v) = 14$.

\begin{figure}
	\centering
	\scalebox{1.4}{\includegraphics{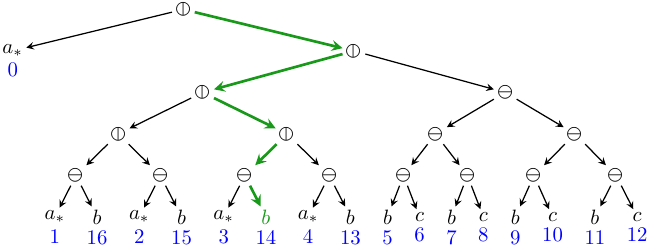}}
	\caption{A forest algebra expression.}
\label{fig-fa}
\end{figure}

\subsubsection{Forest straight-line programs}\label{subsubsection:FSLPs} 

Based on the forest algebra, we can now define forest straight-line programs in an analogous way as we did for string straight-line programs (forest straight-line programs were introduced in~\cite{GasconLMRS20} in a more grammar-like way that is nevertheless equivalent to the following approach). A \emph{forest straight-line program} (over $\Sigma$), f-SLP for short, is a binary DAG $\FSLP = (V, E, \lambda)$  such that $\unfold_{\FSLP}(A) \in \EXP(\Sigma)$ for all $A \in V$. We can assign to every vertex $A \in V$ its type $\tau(A) = \tau(\unfold_{\FSLP}(A))$.
We also say that $A$ is a \emph{forest vertex} if $\tau(A)=0$. Moreover, we define $\derivsub{\FSLP}{A} = \valX{\unfold_{\FSLP}(A)}$. If $\FSLP$ is clear from the context then we write 
$\deriv{A}$ instead of $\derivsub{\FSLP}{A}$.
Note that an s-SLP is an f-SLP $\FSLP = (V, E, \lambda)$ such that $\lambda(A) \in \{ \conch \} \cup \Sigma$ for every $A \in V$.

\autoref{fig-fslp} shows an f-SLP $\FSLP$ on the left, where $A$ is the topmost vertex. Then the forest
 $\deriv{A}$ (actually, a tree) is shown on the right. Every vertex is additionally labelled with its preorder number in blue.
 The forest algebra expression 
$\unfold_{\FSLP}(A)$ is shown in \autoref{fig-fa}. 
The green path $\pi$ in $\FSLP$ (resp., $\unfold_{\FSLP}(A)$) determines the green $b$-labelled vertex in the tree $\deriv{A}$. 
Recall that we identify the path $\pi$ with a leaf of $\unfold_{\FSLP}(A)$ (namely the leaf labelled with 14 in 
\autoref{fig-fa}).

\begin{figure}
\begin{center}
\scalebox{1.4}{\includegraphics{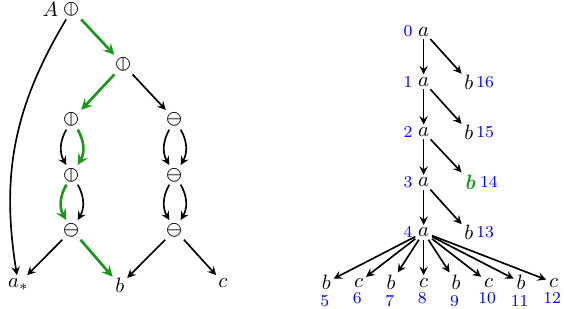}}
\end{center}
\caption{An example f-SLP $\FSLP$ (left side) that describes the tree $\deriv{A}$ on the right side. The forest algebra expression that corresponds to the unfolding $\unfold_{\FSLP}(A)$ is shown in \autoref{fig-fa}.}
\label{fig-fslp}
\end{figure}

Occasionally, (and when the underlying f-SLP $\FSLP = (V,E,\lambda)$ is clear from the context)
we will write $N_A$ for the number of vertices of $\derivsub{\FSLP}{A}$ ($A \in V$).
Since for every vertex $A \in V$ with left child $B$ and right child $C$ we have
$N_A \leq N_B + N_C$, we get the following simple fact:

\begin{lemma} \label{lemma size FSLP}
For every f-SLP $\FSLP = (V,E,\lambda)$ and every vertex $A \in V$ we have $N_A \leq 2^{\card{V}} \leq 2^{\card{\FSLP}}$.
\end{lemma} 

We make the assumption that RAM-algorithms for an f-SLP $\FSLP = (V,E,\lambda)$ have registers 
 of word length $\bigO(\card{\FSLP})$. This is a standard assumption in the area of algorithms for compressed data
 (see for instance also \cite{BilleLRSSW15,GanardiJL21}).
With Lemma~\ref{lemma size FSLP} it allows to store for every vertex $A \in V$ and every vertex $v$ of $\derivsub{\FSLP}{A}$ 
 the preorder number of $v$ in a single register.

We introduced f-SLPs without a distinguished root vertex $S$. In the literature such a root vertex $S$ is often added to
an f-SLP $\FSLP$ and one defines $\deriv{\FSLP} = \derivsub{\FSLP}{S}$.
Let us call such an $\FSLP$ with a distinguished root vertex a \emph{rooted f-SLP}.
Rooted f-SLPs will only appear in \autoref{sec:relabellingUpdates}, all other f-SLPs in this paper are unrooted.

\subsection{Computing preorder numbers}\label{sec:preorderNumbers}

Recall that for a forest vertex $A$ of an f-SLP $\FSLP$ the vertices of the forest $\derivsub{\FSLP}{A}$ correspond to $A$-to-leaf paths in $\FSLP$ (i.e., leaves of $\unfold_{\FSLP}(A)$). Hence, to every such path $\pi$ we assign the preorder number $\po(\pi)$ of the corresponding vertex
of $\derivsub{\FSLP}{A}$.
In the enumeration produced by our algorithm for \autoref{mainResultForests}, every vertex of $\derivsub{\FSLP}{A}$
will be represented by its preorder number (see \autoref{main-preorder} for a more precise version of 
\autoref{mainResultForests}). Hence, it is crucial that, after some preprocessing, we can compute $\po(\pi)$ along the path $\pi$. In this section, we shall discuss the details of this technique, which, to the best knowledge of the authors, has not already been reported in the literature on f-SLPs.

For the further consideration we fix a forest algebra expression $\expr \in \EXP_0(\Sigma)$.
Recall that the vertices of $\deriv{\expr}$ are the leaves of $\expr$ and that for a leaf $v$ of $\expr$,
$\po(v)$ denotes the preorder number of the corresponding node of $\deriv{\expr}$. These are the blue numbers in \autoref{fig-fa}.
 When moving from the root $r$ of $\expr$ down to the leaf $v$ we can compute the preorder number $\po(v)$. 
To see this, consider the path from the root $r$ of the expression tree $\expr$ to the leaf $v$ and let $u$ be the current vertex.
Recall that $\expr(u)$ is the subtree rooted in $u$. In accordance with our notation for f-SLPs we 
write $\valX{u} \in \F(\Sigma) \cup \F_\ast(\Sigma)$ for $\valX{\expr(u)}$ and $\tau(u) \in \{0,1\}$ for $\tau(\expr(u))$ (the type of $u$).
In order to compute $\po(v)$ we have to compute for every $u$ on the path from $r$ to $v$
the so-called \emph{preorder data} $\pre(u)$. It is a single natural number in case $\tau(u)=0$
and it is a pair of natural numbers in
case $\tau(u)=1$.
 The meaning of these numbers is the following, where we identify $\valX{u}$ with the set of vertices of $\valX{\expr}$
 that belong to the subforest (resp., subcontext) $\valX{u}$. If $\pre(u) = x \in \mathbb{N}$ (resp., 
 $\pre(u) = (x,y) \in \mathbb{N} \times \mathbb{N}$) then
\begin{itemize}
\item 
$x$ is the smallest preorder number of the vertices in $\valX{u}$ and
\item $y$ (which only exists if $\tau(u)=1$)
is the size of the subforest of $\valX{\expr}$ that replaces the special symbol $\ast$ in the subcontext $\valX{u}$.
\end{itemize}
For a leaf $v$ of $\expr$ with $\tau(v)=0$
we have $\pre(v)=\po(v)$ and for a leaf $v$ with $\tau(v)=1$
the first component of $\pre(v)$ is $\po(v)$.

In order to compute the preorder data for every vertex of $\expr$, we first have to compute the \emph{leaf size} $s(u)$ and the \emph{left size} $\ell(u)$ for every vertex $u$ of $\expr$:
\begin{itemize}
\item $s(u)$ is the number of leaves in the subexpression rooted in $u$
(where a leaf labelled with $a_\ast$ also counts as one leaf).
This is the same as the number of vertices of $\valX{u}$ without the unique $\ast$-labelled vertex
in case $\tau(u)=1$.
\item $\ell(u)$ is only defined if $\tau(u)=1$ and  is the number of vertices of $\valX{u}$ that are in preorder
smaller than the unique occurrence of the special symbol $\ast$ in $\valX{u}$. Thus, it is the preorder number of $\ast$
in $\valX{u}$.

\end{itemize}
The leaf sizes are computed bottom-up as follows: if $u$ is a leaf of $\expr$ then $s(u)=1$ and if 
$u$ has the left (resp., right) child $v_1$ (resp., $v_2$) then $s(u) = s(v_1)+s(v_2)$.

The left size for a leaf $u$ of $\expr$ with $\tau(u)=1$ (i.e., $u$ is labelled with a symbol $a_\ast$) is $\ell(u)=1$.
Now assume that $u$ has the left (resp., right) child $v_1$ (resp., $v_2$) and that $u$ is labelled with the operator
$\conc \in \{\conch, \concv\}$:
\begin{itemize}
\item Case $\conc = \conch$, $\tau(v_1)=0$, and $\tau(v_2)=1$: 
$\ell(u) = s(v_1) + \ell(v_2)$
\item Case $\conc = \conch$, $\tau(v_1)=1$, and $\tau(v_2)=0$: 
$\ell(u) = \ell(v_1)$
\item Case $\conc = \concv$, and $\tau(v_1)=\tau(v_2)=1$: 
$\ell(u) = \ell(v_1) + \ell(v_2)$
\end{itemize}
Finally, the preorder data are computed top-down as follows: For the root vertex $r$ of $\expr$ we set $\pre(r) = 0$ (recall that $\tau(\expr)=0$).
Now assume that $u$ is an internal vertex of $\expr$ with  left (resp., right) child $v_1$ (resp., $v_2$).
Moreover, let $u$ be labelled with the operator $\conc \in \{\conch, \concv\}$ and let
$\pre(u) = x$ in case $\tau(u)=0$ and 
$\pre(u) = (x,y)$ in case $\tau(u)=1$:
\begin{enumerate}[label=Case (\arabic*)\,\,, leftmargin=2.1cm]
\item $\conc = \conch$,  and $\tau(v_1)=\tau(v_2)=0$: 
$\pre(v_1) = x \text{ and } \pre(v_2) = x+s(v_1)$
\item $\conc = \conch$,  $\tau(v_1)=0$, and $\tau(v_2)=1$:
$\pre(v_1) = x \text{ and }  \pre(v_2) = (x+s(v_1), y)$
\item $\conc = \conch$, $\tau(v_1)=1$, and $\tau(v_2)=0$: 
$\pre(v_1) = (x,y) \text{ and }  \pre(v_2) = x+s(v_1) + y$
\item $\conc = \concv$, $\tau(v_1)=1$, and $\tau(v_2)=0$: 
$\pre(v_1) = (x,s(v_2)) \text{ and }  \pre(v_2) = x+\ell(v_1)$
\item $\conc = \concv$ and $\tau(v_1)=\tau(v_2)=1$:
$\pre(v_1) = (x,y+s(v_2)) \text{ and }  \pre(v_2) = (x+\ell(v_1),y)$
\end{enumerate}
Every edge $e = (u,v)$ in the expression tree $\expr$ can be labelled with the function $f_e$
that represents the effect on the preorder data. We call this function the \emph{preorder effect} of the edge $e$.
It depends on the operator computed in vertex $u$, on the types $\tau(u)$, $\tau(v)$, the values $s(v)$, $\ell(v)$, $s(v')$ and $\ell(v')$  (where $v'$ is the other child of $u$),
and whether $v$ is the left or right child of $u$. 
If $\tau(u)=i$ and $\tau(v)=j$ for $i,j \in \{0,1\}$ then $f_e : \mathbb{N}^{i+1} \to \mathbb{N}^{j+1}$.
The functions $f_e$ are defined in \autoref{fig-effect-pre} (note that these functions correspond to the $5$ cases from above that show how the preorder data is computed).

\begin{figure}
    \begin{center}
    	\scalebox{1.4}{\includegraphics{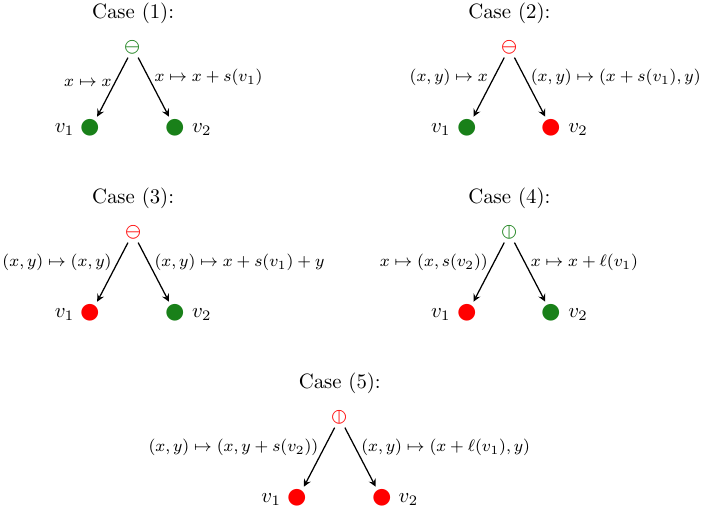}}
    \end{center}
    \caption{The effect of edges on the preorder data. Green (resp., red) vertices have type $0$ (resp., $1$).}
	\label{fig-effect-pre}
\end{figure}

\begin{example} \label{ex-preorder-effect}
In \autoref{fig-decorated-expression-tree} we reconsider the forest algebra expression from  \autoref{fig-fa}.
Green (resp., red) vertices have type $0$ (resp., $1$).
Every vertex is labelled 
with its leaf size (in black) and its left size (in red).
The edges along the green path are labelled with their preorder effects.
By composing these preorder effects one obtains the total preorder effect
$x \mapsto x+14$ for the whole path.
\end{example}

\begin{figure}
	\centering
	\scalebox{1.4}{\includegraphics{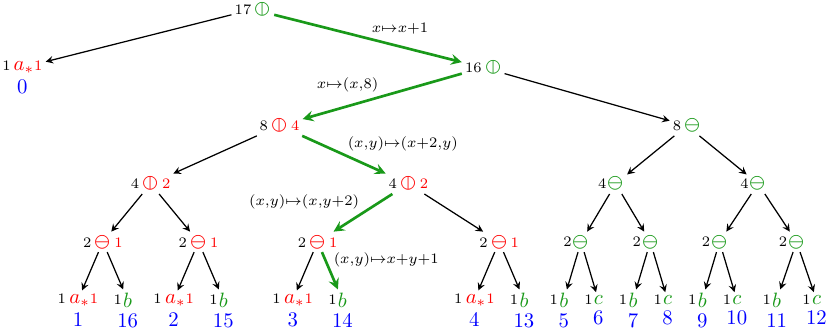}}
	\caption{The forest algebra expression $\expr$ from \autoref{fig-fa}, where
	 green (resp., red) vertices have type $0$ (resp., $1$).
	Every vertex is in addition labelled with its leaf size (in black on the left of the vertex), and its left size (in red on the right of the vertex, if the vertex has type $1$). Leaves are labelled with the preorder numbers of the corresponding vertices in $\deriv{\expr}$ in blue.}
	\label{fig-decorated-expression-tree}
\end{figure}

If we take the closure of all possible preorder effects (for all possible integer constants $s(v_1)$, $s(v_2)$, $\ell(v_1)$ that appear in 
\autoref{fig-effect-pre}) under function composition, we obtain a category $\mathcal{C}_{\text{pre}}$ with two objects $0$ and $1$ (the possible types of vertices in $\expr$). The 
set of morphisms  $M_{i,j}$ from object $i$ to object $j$ ($i,j \in \{0,1\}$) are the affine functions from $\mathbb{N}^{i+1}$ to $\mathbb{N}^{j+1}$
of the following form,  where $c,d \in \mathbb{N}$ are constants and $x,y \in \mathbb{N}$ are the arguments:
\begin{alignat}{4}
& M_{0,0}                     &                &  M_{0,1}             &                    &  M_{1,0}            &                      &  M_{1,1}  \nonumber \\
x     & \mapsto x+c       & \qquad x & \mapsto (x+c,d) & \qquad (x,y) & \mapsto x+c      & \qquad  (x,y) & \mapsto (x+c,y+d) \label{preorder-morph1}\\
       &                           &               &                            & \qquad (x,y) & \mapsto x+y+c  & \qquad  (x,y) & \mapsto (x+y+c,d) \label{preorder-morph2}
\end{alignat}
Note that the definition of the leaf size $s(\cdot)$ and left size $\ell(\cdot)$ make also sense for vertices of an f-SLP $\FSLP$ (since these values are computed bottom-up). More precisely, if $\FSLP$ is an f-SLP and $B$ is a vertex of $\FSLP$ then 
$s(B)$ is the number of vertices of $\deriv{B}$ (without the unique $\ast$-labelled vertex) and $\ell(B)$ is the 
preorder number of the unique $\ast$-labelled vertex in $\deriv{B}$  (which is only defined if $\tau(B)=1$).
Moreover, for an edge  $e = (B,i,C)$ of $\FSLP$, we can define the function $f_e$ in the same way as the function 
$f_e$ for an edge $e = (u,v)$ of a forest algebra expression. The reason is that this function $f_e$ only depends on the types $\tau(u)$, $\tau(v)$, the values $s(v)$, $\ell(v)$, $s(v')$ and $\ell(v')$  (where $v'$ is the other child of $u$), and whether $v$ is the left or right child of $u$. 
Hence, we obtain a $\mathcal{C}_{\text{pre}}$-decorated f-SLP with the decoration mapping $\gamma(e) = f_e$ for every edge $e$.
This will be the decoration of f-SLPs that we always choose implicitly in the rest of the paper.

Recall from the beginning of this section that our goal is to compute for 
 a forest vertex $A$ of an f-SLP $\FSLP$  and an $A$-to-leaf path $\pi$ in $\FSLP$
  the preorder number $\po(\pi)$ while walking along the path $\pi$. 
This problem is subsumed by computing the affine function $\gamma(\pi)$ (which belongs to
$M_{0,0} \cup M_{0,1}$ since $A$ is a forest vertex) in the 
$\mathcal{C}_{\text{pre}}$-decorated f-SLP $\FSLP$. As can be seen from above, this function
$\gamma(\pi)$ is a mapping of the form $x \mapsto x+c$ or
$x \mapsto (x+c,d)$. In both cases, the  preorder number $\po(\pi)$ is $c$.
Note that the green path in \autoref{fig-decorated-expression-tree} leads to the leaf with preorder number 14 and indeed
(as remarked in Example~\ref{ex-preorder-effect}) the total preorder effect of the green path is $x \mapsto x+14$.

We have to argue that \autoref{assumption-category} is satisfied for an input f-SLP $\FSLP$ (viewed as a $\mathcal{C}_{\text{pre}}$-decorated DAG with the above decoration mapping $\gamma$). Recall from \autoref{subsubsection:FSLPs} that we assume a RAM with register length $\bigO(\card{\FSLP})$. This clearly allows to store vertices of $\FSLP$ in a single RAM register.
Consider now a path $\pi$ in the f-SLP $\FSLP$ and the corresponding affine function $\gamma(\pi)$. It can be stored by at most two 
natural numbers (the constants $c$ and $d$ in \eqref{preorder-morph1} and \eqref{preorder-morph2}).
We claim that these numbers are bounded by $2^{\bigO(\card{\FSLP})}$ and therefore have bit length
$\bigO(\card{\FSLP})$. First, recall that the constants appearing in the functions $f_e$ for edges $e$ of 
$\FSLP$ are the numbers $s(B)$ and $\ell(B)$ (for $B$ a vertex of $\FSLP$).
These numbers are bounded by $2^{\card{\FSLP}}$ by Lemma~\ref{lemma size FSLP}. It is easy to see that the 
constants appearing in $\gamma(\pi)$ are sums of at most
$|\pi| \le \card{\FSLP}$ many of these numbers $s(B)$ and $\ell(B)$. Hence, they are bounded by $\card{\FSLP} \cdot
2^{\card{\FSLP}} \leq 2^{2\card{\FSLP}}$. Finally, computing the composition of two preorder effects 
$f$ and $g$ involves a constant number of additions. This shows that
\autoref{assumption-category} is satisfied.

\section{MSO Logic and Automata over Trees and Forests} \label{sec-MSO-aut}

In \autoref{sec-fslp} we talked about the compressed representation of the queried forest.
In this section we discuss the formalisms for representing queries over forests.

\subsection{Monadic second order logic}  \label{sec-MSO}

We consider formulas of \emph{monadic second order logic} (MSO) \cite{CouEng2012} that are interpreted over forests $F = (V, E, R, \lambda)$ as described in \autoref{forests}.
Since any first-order variable $x$ (that takes elements from the universe as values) can be replaced by a set variable $X$ (an MSO-formula can express that $X$ is a singleton set), we can restrict to MSO-formulas where all free variables are set variables. 

If $\Psi$ is an MSO-formula over the signature of unranked forests with free set variables $X_1, \ldots, X_k$ (written as $\Psi(X_1, \ldots, X_k)$) and $S_1, \ldots, S_k \subseteq V$ are vertex sets of some forest $F = (V, E, R, \lambda)$, then we write $(F,S_1, \ldots, S_k) \models \Psi$ if the formula $\Psi$ holds in the forest $F$ if the variable $X_i$ is set to $S_i$. Hence, we can interpret MSO-formulas $\Psi(X_1, \ldots, X_k)$ as \emph{MSO-queries} that, for a given forest $F$, define a \emph{result set} $\Psi[F] = \{(S_1, \ldots, S_k) : (F,S_1, \ldots, S_k) \models \Psi\}$. 

To make the exposition less technical, we further restrict to MSO-formulas with a single free set variable $X$. This is a common restriction that can be found elsewhere in the literature (see, e.g.,~\cite{Bagan09, FrickGK03}) and it is without loss of generality. For the sake of a self-contained exposition, we will show next how to reduce the number of set variables in MSO-formulas to one. 

Consider an MSO-formula $\Psi(X_1, \ldots, X_k)$ with $k \ge 2$ and let $F = (V, E, R, \lambda)$ be a forest with vertex labels from $\Sigma$.
We then take the new set of vertex labels $\Sigma' = \Sigma \times [k]$ and define the new forest $F' = (V',E',R',\lambda')$ with 
\begin{itemize}
\item $V' = V \times [k]$, 
\item $E' = \{ ((u,k),  (v,i)) : (u,v) \in E, i \in [k] \}$,
\item $R' = \{ ((u,i), (u,i+1)) : u \in V, i \in [k-1] \} \cup \{ ((u,k), (v,1)) : (u,v) \in R \}$, and
\item $\lambda'(u,i) = (\lambda(u),i)$.
\end{itemize}
So, intuitively, we add to each vertex $k-1$ siblings on the left; the original vertex $u$ corresponds to the vertex $(u,k)$.

It  is then straightforward to translate $\Psi(X_1, \ldots, X_k)$ into an MSO-formula $\Psi'(X)$ with a single free set variable such that 
for all subsets $S_1, \ldots, S_k \subseteq V$ we have 
$(F,S_1, \ldots, S_k) \models \Psi$ if and only if $(F',S') \models \Psi'$, where
$S' = \{ (v,i) : v \in S_i, i \in [k] \}$.
Hence, an algorithm that enumerates the set 
$\Psi'[F']$  with output-linear delay directly yields an algorithm that 
enumerates the set $\Psi[F]$ with output-linear delay. This is also true if we insist that forest vertices are represented by their preorder numbers:
If the algorithm for enumerating $\Psi'[F']$ outputs the set $S'$, where
every $m \in S'$ is a preorder number in $F'$, then the algorithm for enumerating $\Psi[F]$ outputs
the tuple $(S_0, \ldots, S_{k-1})$, where $S_i = \{ \lfloor m/k \rfloor : m \in S', m \bmod k = i \}$.

As mentioned before, our task is to enumerate the result set of an MSO-query on an f-SLP-compressed forest.
Consequently, we have to explain how the above reduction to the one-variable case can be done on the level of f-SLPs.
For this we will add additional vertices to an f-SLP $\FSLP = (V,E,\lambda)$ such that the resulting f-SLP $\FSLP'$ has the following property
for every forest vertex $A \in V$: $\derivsub{\FSLP'}{A}$ is the forest obtained by applying the above transformation $F \mapsto F'$
 to $F=\derivsub{\FSLP}{A}$. Moreover, the size of $\FSLP'$ should be linear in the size of $\FSLP$. This, however, is easy:
 We can assume that in $\FSLP$ there is for every $a \in \Sigma$ exactly one vertex 
 $A_a$ labelled with $a$ and exactly one vertex $A_{a_\ast}$ labelled with  $a_\ast$. We obtain $\FSLP'$
 by adding in total $\bigO(|\Sigma| \cdot k)$ many new vertices such that $\derivsub{\FSLP'}{A_a} = (a,1) (a,2) \cdots (a,k)$ 
 and $\derivsub{\FSLP'}{A_{a_\ast}} = (a,1) \cdots (a,k-1) (a,k)_\ast$.
 This yields an f-SLP $\FSLP'$ of size $\bigO(\card{\FSLP})$ with the above property.

\subsection{Tree automata} \label{sec-ta}

We consider two types of tree automata: deterministic bottom-up tree automata that work on binary trees and nondeterministic
stepwise tree automata that work on general unranked forests. Since they should implement queries on trees and forests, they will be interpreted as selecting vertices from trees or forests (this aspect is explained in more detail later on).

\subsubsection{Deterministic bottom-up tree automata}\label{sec:dBUTA}

A \emph{deterministic bottom-up tree automaton} (over the alphabets $\Sigma_0$ and $\Sigma_2$) is a 6-tuple $\mathcal{B} = (Q, \Sigma_0, \Sigma_2, \delta_0, \delta_2, Q_f)$, where $Q$ is a finite set of states, $\Sigma_0$ is the set of leaf vertex labels, $\Sigma_2$ is the set of labels for internal vertices, $Q_f \subseteq Q$ is the set of final states, $\delta_0 : \Sigma_0 \to Q$ assigns states to leaves of a tree, and $\delta_2 : Q \times Q \times \Sigma_2 \to Q$ assigns states to internal vertices depending on the vertex label and the states of the two children. For a given binary tree $T$ we define the state $\mathcal{B}(T)$ as the unique state to which $\mathcal{B}$ evaluates the tree $T$. It is inductively defined as follows, where $a \in \Sigma_0$ and  $f \in \Sigma_2$:
\begin{itemize}
\item $\mathcal{B}(a) = \delta_0(a)$ and
\item $\mathcal{B}(f(T_1,T_2)) = \delta_2( \mathcal{B}(T_1), \mathcal{B}(T_2), f)$ for binary trees $T_1$ and $T_2$.
\end{itemize}
The binary tree $T$ is accepted by $\mathcal{B}$ if and only if $\mathcal{B}(T) \in Q_f$. With $L(\mathcal{B})$ we denote the set of binary trees accepted by $\mathcal{B}$. We use the acronym dBUTA for deterministic bottom-up tree automaton.

As an example, consider a dBUTA $\mathcal{B}_\tau$ with $\Sigma_0 = \{a, a_\ast : a \in \Sigma\}$, $\Sigma_2 = \{\conch, \concv\}$, $Q = \{0,1,\mathsf{failure}\}$, $Q_f = \{0, 1\}$ and the transition functions $\delta_0$ and $\delta_2$ are defined as follows: 
\begin{align*}
& \delta_0(a) = 0, \delta_0(a_\ast) = 1 \\
& \delta_2(0,0,\conch) = 0,  \delta_2(0,1,\conch) = \delta_2(1,0,\conch) = 1, \delta_2(1,1,\conch) = \mathsf{failure} \\
& \delta_2(1,0,\concv) = 0,  \delta_2(1,1,\concv) = 1, \delta_2(0,0,\concv) = \delta_2(0,1,\concv) = \mathsf{failure} \\
& \delta_2(p,q,\concv) = \mathsf{failure} \text{ if }  p = \mathsf{failure} \text{ or } q = \mathsf{failure}
\end{align*}
Obviously, $\mathcal{B}_\tau$ accepts exactly the set of valid forest algebra expressions.

\subsubsection{Stepwise tree automata} 

Stepwise tree automata are an automaton model for forests that is equivalent to MSO-logic \cite{CarmeNT04}.\footnote{Stepwise tree automata are defined in such a way that they can run on forests, so they should be called stepwise forest automata, but we prefer to use the existing terminology.} We follow the definition from \cite{MMN22}.
A \emph{nondeterministic stepwise tree automaton} (nSTA for short) over the alphabet $\Sigma$ 
is a tuple $\mathcal{A} = (Q, \Sigma, \delta, \iota, q_0, q_f)$ with the following
properties:
\begin{itemize}
\item $Q$ is a finite set of states,
\item $\delta \subseteq Q \times Q \times Q$ is the transition relation,
\item $\iota : \Sigma \to 2^Q$ assigns a set of local initial states to each alphabet symbol,
\item $q_0$ is the global initial state, and
\item $q_f$ is the global final state.
\end{itemize}
Let $F = (V, E, R, \lambda) \in \F(\Sigma)$ be a forest with root vertices $v_1, \ldots, v_k \in V$, where $v_1$ is the left-most root and $v_k$ 
is the right-most root. If $F$ is a tree,
we have $v_1 = v_k$. For states $q_1, q_2 \in Q$, a $(q_1,q_2)$-run of $\mathcal{A}$ on the forest $F$ is given by three
mappings $\rho_0 : V \to Q$ (called $\lambda_{\mathsf{pre}}$ in \cite{MMN22}), $\rho_1 : V \to Q$ (called $\lambda_{\mathsf{self}}$ in \cite{MMN22}), and $\rho_f : V \to Q$ 
(called $\lambda_{\mathsf{post}}$ in \cite{MMN22}) such that the following conditions hold (an intuitive explanation follows below):
\begin{itemize}
\item $\rho_0(v_1) = q_1$,
\item $\rho_f(v_k) = q_2$,
\item $\rho_0(v) \in \iota(\lambda(u))$ if $v$ is the first child of $u$,
\item $\rho_0(v) = \rho_f(u)$ if $u$ is the left sibling of $v$ (this includes the case where $u = v_i$ and $v = v_{i+1}$ for some $1 \leq i \leq k-1$),
\item $\rho_1(v) \in \iota(\lambda(v))$ if $v$ is a leaf, 
\item $\rho_1(v) = \rho_f(u)$ if $u$ is the last child of $v$, and
\item $(\rho_0(v), \rho_1(v), \rho_f(v)) \in \delta$ for all vertices $v$.
\end{itemize}
A forest $F$ is accepted by $\mathcal{A}$ if its has a $(q_0, q_f)$-run. With $L(\mathcal{A})$ we denote the set of forests accepted by $\mathcal{A}$.

Let us explain this model on an intuitive level. To this end, we first observe that all the triples $(q, r, s)$ from $\delta$ can be interpreted as a string automaton $M_{\mathcal{A}}$ over the alphabet $Q$, i.e., $(q, r, s)$ means that we can change from $q$ to $s$ by reading $r$. Now an nSTA processes an unranked tree by treating each sibling-sequence as a string that is then processed by this string automaton $M_{\mathcal{A}}$. More precisely, let $v_1, v_2, \ldots, v_k$ be the children (ordered from left to right) of some vertex $u$. Each sibling $v_i$ gets some state $\rho_1(v_i)$, which is either from $\iota(\lambda(v_i))$ if $v_i$ is a leaf, or it is propagated from $v_i$'s last child $w_i$ via the condition $\rho_1(v_i) = \rho_f(w_i)$. Now we read the string $\rho_1(v_1) \rho_2(v_2) \ldots \rho_k(v_k)$ with $M_{\mathcal{A}}$ as follows. We start in some state $\rho_0(v_1) \in \iota(\lambda(u))$ (recall that $u$ is $v_1$'s parent). Then reading $\rho_1(v_1)$ changes the state from $\rho_0(v_1)$ to $\rho_f(v_1)$ ($= \rho_0(v_2)$), reading $\rho_1(v_2)$ changes the state from $\rho_0(v_2)$ to $\rho_f(v_2)$ ($= \rho_0(v_3)$), reading $\rho_1(v_3)$ changes the state from $\rho_0(v_3)$ to $\rho_f(v_3)$ ($= \rho_0(v_4)$) and so on until we reach the state $\rho_f(v_k)$, which then serves as the state associated to the parent vertex $u$ of the siblings $v_i$, i.e., $\rho_f(v_k) = \rho_1(u)$. 

Consequently, a computation of an nSTA can be seen as a preorder traversal of the trees of the forest: We read a sequence of siblings as described above until we reach a sibling $v$ that is not a leaf. Then we go down one step and process $v$'s children and so on. Whenever we finish reading a sequence of siblings, we have determined the $\rho_1(\cdot)$-state of its parent vertex $u$ and we can therefore continue reading the sequence of siblings that contains $u$ and so on until we end up with the $\rho_f(\cdot)$-state of the root of the rightmost tree of the forest. 

In this work, stepwise tree automata will only serve as an intermediate model that, in the process of our algorithm, will be transformed into a deterministic bottom-up tree automaton. Hence, we will not further discuss the model of nSTA and refer the reader to~\cite{MMN22} for further details and explanations.

\subsubsection{Representing MSO-queries by tree automata}\label{sec:representingTreeAutomata}

Nondeterministic stepwise tree automata can represent queries on forests as follows. 
For a forest $F$ and a subset $S$ of its vertices, we identify the pair $(F,S)$ with the forest that is obtained from $F$ by relabelling every $a$-labelled vertex $v$ of $F$ ($a \in \Sigma$) with $(a, \beta) \in \Sigma \times \{0,1\}$, where $\beta = 1$ if and only if $v \in S$. Intuitively, $(F,S)$ represents the forest $F$ from which the vertices in $S$ have been selected (or the forest $F$ together with a possible query result $S$). 
Our nSTAs become vertex-selecting, by taking $\Sigma \times \{0,1\}$ as the set of vertex labels. Such an nSTA $\mathcal{A}$ selects the vertex set $S$ from a forest $F \in \F(\Sigma)$ if and only if
$(F,S) \in L(\mathcal{A})$.

Our dBUTAs only need the ability to select leaves of binary trees, which means that we define them over the alphabets $\Sigma_0 \times \{0, 1\}$ (for leaf vertices) and $\Sigma_2$ (for internal vertices), i.e., we run them on pairs $(T,S)$, where $T$ is a binary tree and $S \subseteq \leaves(T)$.

In the following, we assume that all nSTAs and dBUTAs are vertex-selecting in the  above sense.
For the forest $F = (V,E,R,\lambda)$ and an nSTA $\mathcal{A}$
we write 
\[ \select(\mathcal{A},F) = \{ S \subseteq V :  (F,S) \in L(\mathcal{A}) \}\] 
for the set of vertex sets selected by the nSTA $\mathcal{A}$.
Similarly, for a binary tree $T$ and a dBUTA $\mathcal{B}$ we define
\[ \select(\mathcal{B},T) = \{ S \subseteq \leaves(T) :  (T,S) \in L(\mathcal{B}) \}.
\]
It is known that MSO-formulas (that are interpreted over forests) can be translated into equivalent automata (and vice versa). 
More precisely, we use the following well-known fact:

\begin{theorem}[cf.~\cite{CarmeNT04}] \label{thm-MSO->automata}
From an MSO-formula $\Psi(X)$ one can construct an nSTA
 $\mathcal{A}$
such that for every forest $F \in \F(\Sigma)$ with vertex set $V$ we have 
$\select(\mathcal{A},F) = \Psi[F]$.
\end{theorem}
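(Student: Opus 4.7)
The result is essentially folklore, combining the classical Thatcher--Wright--Doner theorem for MSO on trees with the known equivalence between MSO on unranked forests and nondeterministic stepwise tree automata due to Carme, Niehren and Tommasi. The plan is therefore to reduce the statement to a pair of well-established translations and to make the treatment of the free set variable $X$ completely explicit via the pairing $(F,S)$.

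The first step is to re-encode the satisfaction relation. Given $\Psi(X)$ over $\Sigma$, I would construct an MSO-sentence $\widehat{\Psi}$ over the extended label alphabet $\Sigma \times \{0,1\}$ such that, for every forest $G \in \F(\Sigma \times \{0,1\})$, we have $G \models \widehat{\Psi}$ iff $G$ is of the form $(F,S)$ for some $F \in \F(\Sigma)$ and $S \subseteq V$ with $(F,S) \models \Psi(X)$. This is syntactic: replace every atomic formula $\lambda(x)=a$ by $\bigvee_{\beta \in \{0,1\}} \lambda(x) = (a,\beta)$, and replace every atomic formula $x \in X$ by $\bigvee_{a \in \Sigma} \lambda(x) = (a,1)$. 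The variable $X$ then no longer appears free, and $\widehat{\Psi}$ is a plain MSO-sentence over $\Sigma \times \{0,1\}$ defining exactly the forest language
\[
L_\Psi \;=\; \{ (F,S) : F \in \F(\Sigma),\ S \subseteq V(F),\ (F,S) \models \Psi(X)\}.
\]

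The second step is to apply the equivalence between MSO-definable forest languages over $\Sigma \times \{0,1\}$ and languages recognised by nondeterministic stepwise tree automata, as shown by Carme, Niehren and Tommasi~\cite{CarmeNT04}. By that theorem, from $\widehat{\Psi}$ one effectively obtains an nSTA $\mathcal{A}$ over $\Sigma \times \{0,1\}$ with $L(\mathcal{A}) = L_\Psi$. Unwinding the definition of $\select(\mathcal{A},F)$ yields, for every $F$ with node set $V$,
\[
\select(\mathcal{A},F) \;=\; \{ S \subseteq V : (F,S) \in L(\mathcal{A}) \} \;=\; \{ S \subseteq V : (F,S) \models \Psi(X)\},
\]
which is the desired equality.

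The only genuine obstacle is the second step, i.e.\ obtaining a bona fide MSO-to-nSTA translation for unranked forests rather than for ranked binary trees; but this is precisely what is proved in~\cite{CarmeNT04} (via the first-child/next-sibling encoding, followed by a passage from the ranked to the stepwise model), and I would simply cite it. The first step, which handles the free set variable, is entirely syntactic and carries no difficulty beyond bookkeeping. No claim about determinism of $\mathcal{A}$ is needed for the statement, so there is no need to perform a subset construction here.
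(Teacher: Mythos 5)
Your proposal is correct and matches the paper's treatment: the paper does not prove this statement but cites it as a well-known consequence of \cite{CarmeNT04}, having already built your first step into its definitions by declaring node-selecting nSTAs to be nSTAs over the alphabet $\Sigma \times \{0,1\}$ and identifying $(F,S)$ with the relabelled forest. Your explicit syntactic elimination of the free variable $X$ followed by the MSO-to-nSTA equivalence of Carme, Niehren and Tommasi is exactly the intended argument.
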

Our main goal is to enumerate the set of query results $\Psi[F]$.
By \autoref{thm-MSO->automata} this is equivalent to enumerating all $S$ such that 
$(F,S)$ is accepted by an nSTA. 
We will therefore ignore MSO-logic in the following and directly start from an nSTA.

\section{Enumerating MSO Queries Over Forest SLP} \label{sec-main-result}

In this section, we present our enumeration algorithm for MSO-queries on f-SLP-compressed forests. In particular, we provide a proof for \autoref{mainResultForests}. As already mentioned before \autoref{mainResultForests} we assume the data complexity
setting, where the size of the query $\Psi$ is assumed to be constant.

In \autoref{sec-MSO}, we have already discussed that we can restrict to MSO-queries with only one free set variable. Moreover, \autoref{thm-MSO->automata} mentioned in \autoref{sec:representingTreeAutomata} means that instead of an MSO-query, we can directly start with an nSTA $\mathcal{A}$. The size of the latter can be non-elementary in the size of the MSO-query $\Psi$ but in the data
complexity setting this is not an issue.
In summary, this means that in order to prove \autoref{mainResultForests} it is sufficient to prove the following theorem.

\begin{theorem} \label{main-preorder}
From an nSTA $\mathcal{A}$ over the alphabet $\Sigma \times \{0,1\}$  with $m$ states and an f-SLP $\FSLP$ one can compute in preprocessing time $\card{\FSLP} \cdot 2^{\bigO(m^4)}$ a data structure that allows to enumerate for a given forest vertex $A$ of $\FSLP$ the set $\select(\mathcal{A},\derivsub{\FSLP}{A})$ with output-linear delay. In the enumeration, every vertex of $\derivsub{\FSLP}{A}$ is represented by its preorder number.
\end{theorem}
Strictly speaking, for a set $S \in \select(\mathcal{A},\derivsub{\FSLP}{A})$ the delay in \autoref{main-preorder} is $f(m) \cdot |S|$  for 
a function $f$ ($f$ can be bounded by $\bigO(m^4)$ if states of $\mathcal{A}$ fit into $\bigO(1)$ registers), but, as already remarked above,
$m$ is considered to be a constant in the data complexity setting.

Compared to \autoref{mainResultForests}, the formulation of \autoref{main-preorder} is more general, since it does not only apply to the forest described by the whole f-SLP, but to every forest described by any forest vertex $A$, i.e., to every forest $\derivsub{\FSLP}{A}$, where $A$ is a forest vertex of the f-SLP. In this setting, one should view the f-SLP $\FSLP$ as the specification of a collection of forests
$\derivsub{\FSLP}{A}$ (for every  forest vertex $A$ of $\FSLP$). After the preprocessing the user can choose for which of these forests
the query results shall be enumerated.
We stress the fact that the vertex $A$ in \autoref{mainResultForests} is not known during the preprocessing phase, i.e., the data structure computed in the preprocessing enables enumeration for every  given forest vertex $A$ of $\FSLP$.

Next, we will argue that in order to prove \autoref{main-preorder}, it is sufficient to prove \autoref{main2} below, which talks about path enumeration for arbitrary  vertex-labelled DAGs
with decorations from a category $\mathcal{C}$.

The input data of our MSO-evaluation problem can be interpreted on three different levels: (1) the actual f-SLP $\FSLP$  together with a 
forest vertex $A$ (see left side of \autoref{fig-fslp}), (2) the forest algebra expression 
$\expr = \unfold_{\FSLP}(A) \in \EXP_0(\Sigma)$ (see \autoref{fig-fa}), and (3) the unranked forest $\valX{\expr} = \derivsub{\FSLP}{A}$ that $\expr$ evaluates to (see right side of \autoref{fig-fslp}). Obviously, (1) is our actual input, while (3) is the structure on which the MSO-query (in form of an nSTA) should be evaluated. Let us ignore perspective (1) for a moment and focus on the forest algebra expression $\expr$ and the unranked forest $\valX{\expr}$ that it describes (that these objects are not explicitly given and that we cannot afford to construct them is a problem we have to deal with later). 

Our task is to enumerate all those vertex-sets $S$ of $\valX{\expr}$ such that $(\valX{\expr}, S) \in L(\mathcal{A})$ (recall that $(\valX{\expr}, S)$ is the variant of $\valX{\expr}$ with all vertices from $S$ being marked). By definition, the vertices of $S$ uniquely correspond to some leaf-set $S'$ of $\expr$. Hence, we can define $(\expr, S') \in \EXP_0(\Sigma \cup \Sigma \times \{0,1\})$ as the forest algebra expression obtained from $\expr$ by marking the leaves from $S'$ with $1$ and all other leaves with $0$. More precisely, every leaf labelled with some $a \in \Sigma$ is relabelled to $(a, 1)$ or $(a, 0)$ depending on whether or not it is from $S'$, and every leaf labelled with $a_\ast$ for some $a \in \Sigma$ is relabelled to $(a, 1)_\ast$ or $(a, 0)_\ast$ depending on whether or not it is from $S'$.
In particular, we have that $\valX{(\expr, S')} = (\valX{\expr}, S)$. Our task is therefore to enumerate all leaf-sets $S'$ of $\expr$ such that $\valX{(\expr, S')} \in L(\mathcal{A})$. Thereby, for every leaf $v$ of $\expr$ we output the preorder number $\po(v)$ of the vertex $v$ in $\valX{\expr}$.

In order to enumerate all leaf-sets $S'$ of $\expr$ such that $\valX{(\expr,S')} \in L(\mathcal{A})$,
we can actually turn $\mathcal{A}$ into a dBUTA $\mathcal{B}$ such that for every forest algebra expression $\expr$ and every subset $S'\subseteq \leaves(\expr)$ we have: $\valX{(\expr,S')} \in L(\mathcal{A})$ if and only if $(\expr,S') \in L(\mathcal{B})$. This means that our problem reduces to the following general enumeration problem: Given a dBUTA $\mathcal{B}$ and a vertex-labelled binary tree $T$, enumerate all leaf-sets $S$ of $T$ such that $(T, S) \in L(\mathcal{B})$. 
The required dBUTA $\mathcal{B}$ can be computed from $\mathcal{A}$ by known techniques (see~\cite{MMN22}). For completeness, we give a full proof of the following result in Appendix~\ref{sec:proofOfthm-MNN}. Let $\Sigma_0 = \{ a,a_\ast : a \in \Sigma\}$ and $\Sigma_2 = \{\conch, \concv \}$.

\begin{theorem}[cf.~\cite{MMN22}] \label{thm-MNN}
From an nSTA $\mathcal{A}$ over $\Sigma$ with $m$ states one can construct
a dBUTA  $\mathcal{B}$ over $\Sigma_0, \Sigma_2$ with $2^{m^2} + 2^{m^4} +1$ states
such that $L(\mathcal{B}) = \{ \expr \in \EXP(\Sigma) : \valX{\expr} \in L(\mathcal{A})\}$.
\end{theorem}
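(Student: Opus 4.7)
The plan is a Myhill--Nerode-style construction: I would use as the states of $\mathcal{B}$ the $\mathcal{A}$-equivalence classes of forests and forest contexts, together with one sink state $\bot$ for binary trees that do not encode valid forest algebra expressions. The key point, matching the state bound in the theorem, is that the classes split into a ``forest part'' of size at most $2^{m^2}$ and a ``context part'' of size at most $2^{m^4}$, and that horizontal and vertical concatenation act on classes in a way that can be computed from the factors alone.

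First, for $F \in \F(\Sigma)$, I would define a \emph{type} $\tau(F) \subseteq Q \times Q$ by declaring $(q,q') \in \tau(F)$ iff $\mathcal{A}$ has a run on $F$ that, when $F$ is processed as a child sequence, transitions the ambient computation from state $q$ (before $F$) to state $q'$ (after $F$); the precise meaning is dictated by the nSTA semantics. This bounds the forest classes by $2^{m^2}$. For a forest context $F \in \F_{\ast}(\Sigma)$, the natural invariant is the function $\tau(F') \mapsto \tau(F \concv F')$, which is additive in $\tau(F')$ because a run splits independently at the hole; hence it is fully encoded by the relation
\[
  \sigma(F) \;=\; \{((p,p'),(q,q')) \in (Q \times Q)^2 : \exists F' \in \F(\Sigma),\ (p,p') \in \tau(F') \text{ and } (q,q') \in \tau(F \concv F')\},
\]
bounding the context classes by $2^{m^4}$. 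Together with the one sink state this gives the announced count $2^{m^2} + 2^{m^4} + 1$.

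Next I would establish compositionality: $\tau(F_1 \conch F_2)$ is the relational composition of $\tau(F_1)$ and $\tau(F_2)$; when one of the factors of $F_1 \conch F_2$ is a context, and for $F_1 \concv F_2$, the resulting $\sigma$ is obtained by the analogous composition, threading the hole through the appropriate factor. These identities translate verbatim into the transitions of $\mathcal{B}$: set $\delta_0(a) = \tau(a)$ and $\delta_0(a_\ast) = \sigma(a(\ast))$; an internal $\conch$-node combines its children's types as above, going to $\bot$ exactly when both are context types or one is already $\bot$; a $\concv$-node insists that its left child is a context type and combines, else maps to $\bot$. The final states are the forest types that contain some pair (initial state of $\mathcal{A}$, accepting state of $\mathcal{A}$). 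A routine induction on the shape of $\expr \in \EXP(\Sigma)$ then shows that $\mathcal{B}(\expr)$ equals $\tau(\valX{\expr})$ (respectively $\sigma(\valX{\expr})$), giving $\expr \in L(\mathcal{B}) \iff \valX{\expr} \in L(\mathcal{A})$.

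The main obstacle I anticipate is the additivity argument underlying the $2^{m^4}$ bound: one must verify that any nSTA run on $F_1 \concv F_2$ decomposes cleanly at the hole into a run-skeleton on $F_1$ parameterised by a single pair $(p,p') \in Q \times Q$ of ``hole behaviours'' and a matching run on $F_2$, so that $\sigma(F_1)$ really captures all usable information about $F_1$ in any ambient context. This rests on the memoryless, step-by-step nature of the transition semantics of stepwise tree automata, which is precisely where the nSTA definition deferred to the appendix enters the argument; the analogous compositionality identities for $\conch$ between a forest and a context, and for $\concv$, require a corresponding case analysis but follow from the same principle.
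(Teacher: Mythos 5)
Your overall route is the same as the paper's: the proof given in the appendix (following \cite{MMN22}) also takes states of the form $P \subseteq Q^2$ for subexpressions evaluating to forests (recording exactly your $\tau$, i.e.\ the set of pairs $(q,q')$ for which a $(q,q')$-run of $\mathcal{A}$ exists), states of the form $R \subseteq Q^4$ for subexpressions evaluating to forest contexts, plus one $\mathsf{failure}$ state for invalid expressions; the transitions at $\conch$- and $\concv$-nodes are precisely the relational compositions you describe, and acceptance is membership of the pair (global initial state, global final state). So the architecture, the state count $2^{m^2}+2^{m^4}+1$, and the acceptance condition all match.

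There is, however, a genuine defect in your displayed definition of $\sigma(F)$. Declaring $((p,p'),(q,q')) \in \sigma(F)$ whenever \emph{there exists} $F'$ with $(p,p') \in \tau(F')$ and $(q,q') \in \tau(F \concv F')$ does not determine the function $\tau(F') \mapsto \tau(F \concv F')$: the witnessing $F'$ may have several pairs in its type, and the $(q,q')$-run of $F \concv F'$ may be driven by a hole behaviour other than $(p,p')$. Concretely, if $\tau(F')=\{(p,p'),(p_3,p_4)\}$ and only the hole behaviour $(p_3,p_4)$ produces $(q,q')$, your $\sigma(F)$ still contains $((p,p'),(q,q'))$; applying it to another forest $F''$ with $\tau(F'')=\{(p,p')\}$ would then wrongly predict $(q,q') \in \tau(F\concv F'')$, so compositionality fails and the induction breaks. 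The invariant must be defined intrinsically on the context, exactly as you anticipate in your final paragraph: $((p,p'),(q,q')) \in \sigma(F)$ iff $F$ admits a partial run (run-skeleton) with overall behaviour $(q,q')$ that requires hole behaviour $(p,p')$. This is what the paper's base case $\delta_0(a_\ast) = \{\langle p_1,p_2,p_3,p_4\rangle : p_3 \in \iota(a),\ (p_1,p_4,p_2)\in\delta\}$ and the subsequent compositions compute. With that corrected definition your additivity and compositionality identities do hold and the rest of your argument goes through; with the existential definition they do not.
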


As explained above, with \autoref{thm-MNN}, our problem reduces to enumerating all leaf-sets $S$ of a vertex-labelled binary tree $T$ such that $(T, S) \in L(\mathcal{B})$, where $\mathcal{B}$ is a given dBUTA. For this problem, we could now use one of the existing algorithms from the literature, for example Bagan's algorithm from~\cite{Bagan06}. However, now is the time to remember that above we have ignored the fact that we do not have the explicit vertex-labelled binary tree $T$ but only a DAG $\DAG$ together with a vertex $v_0$ such that $T = \unfold_{\DAG}(v_0)$. In our setting, $\DAG$ is an f-SLP $\FSLP$ and $v_0=A$ is a vertex
of $\FSLP$.

Let us fix an f-SLP $\FSLP$ and a forest vertex $A$.
The vertices of the forest $\derivsub{\FSLP}{A}$ (or, equivalently, the leaves of $\unfold_{\FSLP}(A)$)
are the $A$-to-leaf paths in the DAG $\FSLP$. Moreover, every such path $\pi$ will be represented in the output of the enumeration
phase by the preorder number $\po(\pi)$ of the corresponding vertex of $\derivsub{\FSLP}{A}$.
We have seen in \autoref{sec:preorderNumbers} that $\po(\pi)$ can be directly obtained from $\gamma(\pi)$, where $\gamma$
is the $\mathcal{C}_{\text{pre}}$-decoration of the f-SLP $\FSLP$ from
\autoref{sec:preorderNumbers}. 
This means that it suffices to prove the following result that is stated for an arbitrary category $\mathcal{C}$ (instead of 
$\mathcal{C}_{\text{pre}}$):

\begin{theorem} \label{main2}
Fix a category $\mathcal{C}$. From a dBUTA  $\mathcal{B}$ with $m$ states and a vertex-labelled $\mathcal{C}$-decorated binary DAG $\DAG = (V, E, \lambda, \gamma)$ that satisfies \autoref{assumption-category}, one can compute in preprocessing time $\bigO(|\DAG| \cdot m^2)$ a data structure that allows to enumerate for a given vertex $v_0 \in V$ the multiset $\multiset{\gamma^*(S) : S \in \select(\mathcal{B}, \unfold_{\DAG}(v_0))}$ with output-linear delay.\footnote{Strictly speaking, output-linear delay only holds under the natural assumption that also states of $\mathcal{B}$ fit
into a constant number of RAM registers. In our application of \autoref{main2} this is clearly satisfied since $m$ only depends on the MSO-query $\Psi$, whose size is a constant with respect to data complexity.} 
\end{theorem}

Let us explain some aspects of this theorem. First, recall that 
in \autoref{sec-trees} we defined $\gamma^*(S) = \multiset{ \gamma^*(v) : v \in S}$, where the $\mathcal{C}$-morphism $\gamma^*(v)$ of a leaf $v$ of $\unfold_{\DAG}(v_0)$ is just the $\mathcal{C}$-morphism defined by the unique $v_0$-to-$v$ path  in $\unfold_{\DAG}(v_0)$ (which is the composition of the morphisms assigned to the edges along the path). 
In \autoref{main2}, we talk about enumerating the \emph{multiset} $\multiset{\gamma^*(S) : S \in \select(\mathcal{B}, \unfold_{\DAG}(v_0))}$ (see the end of \autoref{sec-enumeration-general} for our formalisation of enumeration with duplicates), since for a general decoration function $\gamma$, there may exist different sets $S_1, S_2 \in \select(\mathcal{B}, \unfold_{\DAG}(v_0))$ such that $\gamma^*(S_1)=\gamma^*(S_2)$. We stress the fact that in our application of \autoref{main2} in the proof of \autoref{main-preorder}, we will use the decoration function 
from \autoref{sec:preorderNumbers}, which satisfies $\gamma^*(S_1) \neq \gamma^*(S_2)$ whenever $S_1 \neq S_2$
(since different leaves of a forest algebra expression tree $\expr$ have different preorder numbers in $\deriv{\expr}$).
In other words, we have
$\multiset{\gamma^*(S) : S \in \select(\mathcal{B}, \unfold_{\DAG}(v_0))} = \{\gamma^*(S) : S \in \select(\mathcal{B},  \unfold_{\DAG}(v_0))\}$.

Once we have proven \autoref{main2}, we obtain \autoref{main-preorder} by using for $\mathcal{B}$ the dBUTA of 
\autoref{thm-MNN}, specializing $\mathcal{C}$ to the category $\mathcal{C}_{\text{pre}}$ and taking for $\DAG$ an f-SLP $\FSLP$, whose decoration function $\gamma$ is defined in \autoref{sec:preorderNumbers}. As argued at the end of \autoref{sec:preorderNumbers}, $\mathcal{C}_{\text{pre}}$
and $\FSLP$ satisfy \autoref{assumption-category}.

Our algorithm for \autoref{main2} will be an extension of Bagan's algorithm~\cite{Bagan06}, which handles the case where the input tree has no decoration mapping $\gamma$ and is explicitly given instead of a DAG. Therefore, it will be necessary to first explain Bagan's original algorithm in some detail, which we do in \autoref{sec-bagan-tree} below. After this, we show in \autoref{sec-bagan-dag} how to extend Bagan's algorithm to the case where the input tree is given by a DAG. For this, our path enumeration algorithm from \autoref{sec-path-enumeration} 
(see \autoref{thm-enumerate-paths}) will be a crucial component.

\subsection{Bagan's algorithm for explicit binary trees} \label{sec-bagan-tree}

In this section, we discuss Bagan's algorithm~\cite{Bagan06}, which proves the following result.

\begin{theorem}[Bagan~\cite{Bagan06}] \label{BaganThm}
From a dBUTA $\mathcal{B}$ with $m$ states and a vertex-labelled binary tree $T = (V,E_\ell, E_r, \lambda)$ one can compute in preprocessing time $\bigO(|T| \cdot m^2)$ a data structure that allows to enumerate $\select(\mathcal{B}, T)$ with output-linear delay.
\end{theorem}

\begin{figure}
\begin{center}
\scalebox{1.4}{\includegraphics{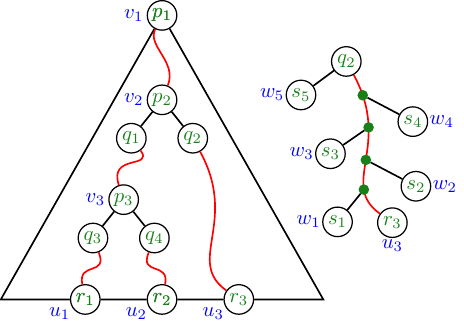}}
\end{center}
\caption{A witness tree: automaton states are in green, node names (if written) are in blue. The path from $q_2$ to $r_3$ is shown on the right with
the nodes that branch off from the path.}
	\label{witness-tree2}
\end{figure}

We will now explain this algorithm in greater detail (and in a slightly different version compared to~\cite{Bagan06}), since it is needed for our extension to the DAG-compressed case in \autoref{sec-bagan-dag}. Our treatment of Bagan's algorithm will be largely self-contained. Note that in this subsection, we are dealing with input trees without decorations.

Let $\mathcal{B}$ and $T$ be as described in \autoref{BaganThm}, i.e., $\mathcal{B} = (Q, \Sigma_0 \times \{0,1\}, \Sigma_2, \delta_0, \delta_2, Q_f)$ is a dBUTA, and $T = (V,E_\ell, E_r, \lambda)$ is a vertex-labelled binary tree as described in \autoref{sec-trees}. Recall that for a vertex $v \in V$ we denote by $T(v)$ the subtree rooted in $v$. For $S \subseteq \leaves(T)$ and $v \in V$ we define $S_v = S \cap \leaves(T(v))$. Recall also that $(T,S)$ denotes the tree obtained from marking all leaves in $S$ with a $1$. Then $\mathcal{B}(T,S)$ is the state of 
$\mathcal{B}$ to which $(T,S)$ evaluates to.
Our goal is to enumerate the set  $\select(\mathcal{B}, T) \setminus \{\emptyset\}$.
Whether $\emptyset \in \select(\mathcal{B},T)$ can be checked in the preprocessing.

\subsubsection{Witness trees} \label{sec-generalIdea}

 Bagan's algorithm enumerates all non-empty sets $S \in \select(\mathcal{B},T)$
together with a witness that $(T,S) \in L(\mathcal{B})$ holds. Consider such a non-empty set $S \in \select(\mathcal{B},T)$.
A first step towards a witness for $(T,S) \in L(\mathcal{B})$
 is to replace in the tree $T$ every node $v \in V$ 
by the pair $(v,q)$ (a so-called \emph{configuration}), where $q$ is the unique state 
$q = \mathcal{B}(T(v),S_v)$ at which $\mathcal{B}$ arrives in node $v$. Let us call this tree the \emph{configuration tree}.

The delay for producing a solution $S$ must be in $\bigO(|S|)$ (i.e., we need output-linear delay), but the configuration tree has size $|T|$ and is therefore too big. A next step towards a solution is to prune the configuration tree by keeping only those nodes that are on a path from the root to a leaf
from $S$. This yields a tree with only $|S|$ many leaves that we call the \emph{pruned configuration tree}.
 It is shown in \autoref{witness-tree2}
for an example, where $S = \{ u_1, u_2, u_3\}$. The original node names of $T$ are written in blue, automaton states
are written in green. The nodes of the pruned configuration tree
are from the following set of so-called \emph{active
configurations}:
\begin{eqnarray*}
\Conf^a(T) & = & \{ (v,q) \in V \times Q : 
\mathcal{S}^a(v,q) \neq \emptyset \} \text{ with } \\
\mathcal{S}^a(v,q) & = & \{ S \subseteq \leaves(T(v)) : S \neq \emptyset, q = \mathcal{B}(T(v),S) .
\end{eqnarray*}
The pruned configuration tree is still too big (it does not fit in space $\bigO(|S|)$)
because it may contain long paths of unary nodes (nodes with a single child except for the last node on
the path). In \autoref{witness-tree2} these are the red paths. The edges on these paths can be described as follows.
The configurations that were removed in the pruning are from the set
\[ 
\Conf^\emptyset(T)  = \{ (v,q) \in V \times Q :  q = \mathcal{B}(T(v),\emptyset) \}.
\]
The configurations $(w_1,s_1), \ldots, (w_5,s_5)$ in \autoref{witness-tree2} on the right are from this set. 
On the set of active configurations $\Conf^a(T)$ we define a new edge relation $\to$ as follows:
For active configurations $(u,p), (v,q) \in \Conf^a(T)$ with $u$ internal in $T$ and labelled with $f \in \Sigma_2$, there is an edge
$(u,p) \to (v,q)$ if there is $(v',q') \in \Conf^\emptyset(T)$ such that
one of the following two cases holds:
\begin{itemize}
\item $(u,v) \in E_\ell$, $(u,v') \in E_r$ and $\delta_2(q,q',f) = p$,
\item $(u,v) \in E_r$, $(u,v') \in E_\ell$ and $\delta_2(q',q,f) = p$.
\end{itemize}
Then all the edges of the unary paths in the pruned 
configuration tree (the red paths in \autoref{witness-tree2})
are of the above form $(u,p) \to (v,q)$. The configuration 
$(v',q') \in \Conf^\emptyset(T)$ is a configuration to which an additional edge branches off from the red unary paths
(configurations $(w_1,s_1), \ldots, (w_5,s_5)$ in \autoref{witness-tree2}).
We define the graph
$$
T \otimes \mathcal{B}  = (\Conf^a(T), \to).
$$
Since $\mathcal{B}$ is deterministic and $T$ is a tree, $T \otimes \mathcal{B}$ is an unordered forest.

 The final idea is to contract the red paths
in \autoref{witness-tree2} to single edges; this results in a tree of size $\bigO(|S|)$, which is called
a \emph{witness tree} $W$. To formally define (and construct) witness trees it is useful to define a further set of configurations, 
the so-called \emph{useful configurations}: An active configuration $(v,q) \in \Conf^a(T)$ is useful if either $v$ is a leaf in $T$
or $v$ has the children $v_1, v_2$ in $T$ and there exist states $q_1, q_2 \in Q$ such that $\delta_2(q_1, q_2, \lambda(v)) = q$
and $(v_1,q_1), (v_2,q_2) \in \Conf^a(T)$. In other words:
\begin{eqnarray*}
\Conf^u(T) & = & \{ (v,q) \in V \times Q : 
\mathcal{S}^u(v,q) \neq \emptyset \} \text{ with } \\
\mathcal{S}^u(v,q) & = & \{ S \in \mathcal{S}^a(v,q) : S_{v'} \neq \emptyset \text{ for every child $v'$ of $v$} \} .
\end{eqnarray*}
Note that $\Conf^u(T) \subseteq \Conf^a(T)$.
In a witness tree, all leaves and binary nodes are  useful configurations. In \autoref{witness-tree2} these are the 
configurations $(u_1,r_1), (u_2,r_2), (u_3,r_3)$ and $(v_2,p_2), (v_3, p_3)$.

Let us now give the formal definition of witness trees. For this and the enumeration of witness trees
it is convenient to define the following two set of successor tuples:
For $(v,q) \in \Conf^u(T)$ with $(v,v_1) \in E_\ell$, $(v,v_2) \in E_r$ and $\lambda(v) = f \in \Sigma_2$, let 
\begin{equation}
\suc_T^a(v,q) = \{ (v_1, q_1, v_2, q_2) : \;  (v_1,q_1), (v_2, q_2) \in \Conf^a(T), \delta_2(q_1,q_2,f)=q \} \label{def-suc-a}.
\end{equation}
 For $(v,q) \in \Conf^a(T)$ let 
\begin{equation} \label{suc-u}
\suc_T^u(v,q) = \{ (v',q') \in \Conf^u(T) : (v,q)  \xrightarrow{*} (v',q') \}.
\end{equation}
Note that the set $\suc_T^a(u,p)$ contains only pairs of active configurations, and the set $\suc_T^u(u,p)$ contains only useful configurations (and also contains $(u, p)$ in the case that it is useful). 

\begin{definition} \label{def-witness} \rm
A \emph{witness tree} $W$ for an active configuration $(v,q) \in \Conf^a(T)$ is a tree with root $(v,q)$. 
It is constructed recursively: 
\begin{itemize}
\item If $v$ is a leaf in $T$ then $(v,q)$ is the only vertex of $W$.
\item Assume that $v$ is not a leaf in $T$. Then in $W$ the root $(v,q)$ has a single child $(v',q') \in \suc_T^u(v,q) \subseteq \Conf^u(T)$ (we may have $(v',q') = (v,q)$ in which case we introduce a copy of the vertex $(v,q)$). If $v'$ is a leaf of $T$, then $(v',q')$ is a leaf of $W$. Otherwise, $(v',q')$ has a left child $(v_1,q_1) \in \Conf^a(T)$ and a right child $(v_2,q_2) \in \Conf^a(T)$  in $W$ such that $(v_1,q_1, v_2, q_2) \in \suc_T^a(v',q')$. The construction then continues from $(v_1,q_1)$ and $(v_2,q_2)$ in the same way as for $(v,q)$.
\end{itemize}
\end{definition}
For a witness tree $W$, let 
$$
S(W) = \{  v \in \leaves(T) :  \text{ $(v,q)$ is a leaf of $W$ for some $q \in Q$} \}.
$$
The main lemma about witness trees is:
\begin{lemma} \label{lemma-witness-tree} 
The following holds for every $(v,q) \in \Conf^a(T)$:
\begin{itemize}
\item Every witness tree $W$ for $(v,q)$ contains at most $4 |S(W)|-2$ many nodes.
\item $\mathcal{S}^a(v,q) = \{ S(W) : \text{$W$ is a witness tree for $(v,q)$} \}$ 
\item For every $S \in \mathcal{S}^a(v,q)$ there is a unique witness tree $W$ for $(v,q)$ with $S = S(W)$.
\end{itemize}
\end{lemma}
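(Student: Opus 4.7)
All three bullets can be handled together by a single induction on the height of the subtree $T_v$, following the recursive structure of Definition~\ref{def-witness}. The base case is $v \in \leaves(T)$: the only witness tree for $(v,q)$ is the singleton $(v,q)$, so $|W|=1 \le 4$, $S(W)=\{v\}$, and since $(v,q)\in\Conf^a(T)$ one has $\mathcal{S}^a(v,q)=\{\{v\}\}$, making all three claims immediate.

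For the size bound I would prove the slightly sharper inequality $|W| \le 4|S(W)| - 2$ whenever $v$ is $T$-internal, from which $|W| \le 4|S(W)|$ follows. In the inductive step, the root $(v,q)$ has a unique useful child $(v',q') \in \suc^u(v,q)$. If $v' \in \leaves(T)$, then $(v',q')$ is the only leaf of $W$, so $|W|=2$, $|S(W)|=1$, and $2 \le 4\cdot 1 - 2$. Otherwise $(v',q')$ has two active children $(v_1,q_1),(v_2,q_2)$ that root subwitness trees $W_1,W_2$, giving $|W|=2+|W_1|+|W_2|$ and $|S(W)| = |S(W_1)|+|S(W_2)|$; plugging in the inductive bound yields $|W|\le 2 + (4|S(W_1)|-2)+(4|S(W_2)|-2) = 4|S(W)|-2$.

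For the second and third bullets I would prove the stronger statement that $W \mapsto S(W)$ is a bijection from witness trees for $(v,q)$ onto $\mathcal{S}^a(v,q)$. In the direction showing $S(W)\in\mathcal{S}^a(v,q)$, the induction hypothesis applied to $W_1,W_2$ gives $S(W_i) \in \mathcal{S}^a(v_i,q_i)$; since the $S(W_i)$ lie in disjoint leaf sets, $S(W) = S(W_1) \cup S(W_2) \in \mathcal{S}^a(v_1,q_1) \oplus \mathcal{S}^a(v_2,q_2)$, which is contained in $\mathcal{S}^u(v',q')$ by Lemma~\ref{lemma-disjoint2} and hence in $\mathcal{S}^a(v,q)$ by Lemma~\ref{lemma-disjoint1} (the $T$-leaf case for $v'$ is handled directly). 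For surjectivity plus injectivity, given $S \in \mathcal{S}^a(v,q)$, the \emph{disjointness} of the union in Lemma~\ref{lemma-disjoint1} forces a unique $(v',q') \in \suc^u(v,q)$ with $S \in \mathcal{S}^u(v',q')$; then either $v'\in\leaves(T)$ and $S=\{v'\}$ is already determined, or the disjointness in Lemma~\ref{lemma-disjoint2} pins down a unique $(v_1,q_1,v_2,q_2) \in \suc^a(v',q')$ with $S \cap \leaves(T_{v_i}) \in \mathcal{S}^a(v_i,q_i)$, and the induction hypothesis provides unique subwitness trees $W_i$ realising these sets.

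The main obstacle is keeping the two layers of disjointness (across $\suc^u$ at the active-to-useful step, and across $\suc^a$ at the useful-to-active step) in lockstep with the recursion, so that uniqueness is transported cleanly through both Lemma~\ref{lemma-disjoint1} and Lemma~\ref{lemma-disjoint2} at each inductive step; beyond that, the argument is a routine structural induction.
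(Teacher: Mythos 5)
Your proof is correct and follows essentially the same route as the paper: the size bound comes from the fact that a witness tree is a full binary tree on the useful nodes with one unary (active) node inserted above each, giving at most $4|S(W)|-2$ nodes, and the bijection $W \mapsto S(W)$ onto $\mathcal{S}^a(v,q)$ is exactly what Lemmas~\ref{lemma-disjoint1} and~\ref{lemma-disjoint2} together with the disjointness of the unions in \eqref{eq-Sb} and \eqref{eq-Sa} deliver. Your structural induction merely spells out the details that the paper's two-sentence argument leaves implicit.
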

Hence, there is a one-to-one correspondence between witness trees for $(v,q) \in \Conf^a(T)$ and the leaf-sets in $\mathcal{S}^a(v,q)$. 
For the proof of this lemma we can use the following lemmas from~\cite{Bagan06}, where $\biguplus$ indicates
that the sets in the union are pairwise disjoint:

\begin{lemma} \label{lemma-disjoint1}
For $(v,q) \in \Conf^a(T)$ we have
\begin{equation} \label{eq-Sb}
\mathcal{S}^a(v,q) = \biguplus_{(v',q') \in \, \suc_T^u(v,q)}  \mathcal{S}^u(v',q').
\end{equation}
\end{lemma}

\begin{lemma} \label{lemma-disjoint2}
For $(v,q) \in \Conf^u(T)$ with
$(v, v_1) \in E_\ell$ and $(v, v_2) \in E_r$ we have
\begin{equation} \label{eq-Sa}
\mathcal{S}^u(v,q) = \biguplus_{(v_1, q_1, v_2, q_2) \in \, \suc_T^a(v,q)} 
\{ B_1 \cup B_2 : B_1 \in \mathcal{S}^a(v_1,q_1), B_2 \in \mathcal{S}^a(v_2,q_2) \}.
\end{equation}
\end{lemma}
The unions in  \eqref{eq-Sb} and \eqref{eq-Sa} go over pairwise disjoint sets since $\mathcal{B}$ is deterministic.

\begin{proof}[Proof of Lemma~\ref{lemma-witness-tree}]
The first statement holds, since $W$ has exactly $|S(W)|$ many leaves (since there do not exist different leaves of the form $(v,q)$ and $(v,q')$)
and it arises from a binary tree by inserting a unary node above every node, which doubles the number of nodes.
The other two statements follow easily from Lemmas~\ref{lemma-disjoint1} and \ref{lemma-disjoint2}. For the last point,
the disjointness of the unions in \eqref{eq-Sb} and \eqref{eq-Sa} is important.
\end{proof}

\subsubsection{Enumerating witness trees} \label{sec:enumWitnessTrees}

We can now describe Bagan's enumeration algorithm \cite{Bagan06} with the concept of witness trees. Let $r$ be the root of our binary tree $T$.
The goal is to enumerate all sets $S \subseteq \leaves(T)$ such that $\mathcal{B}(T,S) \in Q_f$.
After every solution $S$ the algorithm prints the separator symbol $\#$. For every solution $S$, time $\bigO(|S|)$
can be spend (since the delay should be output-linear). We start with checking whether $(r,q_f) \in \Conf^\emptyset(T)$ holds for some $q_f \in Q_f$ (this is part of the preprocessing phase). 
If this holds, then the algorithm starts the enumeration phase with printing a $\#$ (for the empty set).
Then all non-empty solutions $S \in \bigcup_{q_f \in Q_f} \mathcal{S}^a(r,q_f)$ have to be enumerated. For every $S \in \mathcal{S}^a(r,q_f)$, the algorithm prints a list of all elements of $S$ (viewed as numbers from $[0,\ell-1]$).

The algorithm runs over all final states $q_f \in Q_f$. Since $\mathcal{B}$ is deterministic, for two different final states $q_f, q'_f \in Q_f$ the sets $\mathcal{S}^a(r,q_f)$ and $\mathcal{S}^a(r,q'_f)$ are disjoint. 
Let us fix a final state $q_f \in Q_f$ for the further consideration. 
By Lemma~\ref{lemma-witness-tree}, it suffices to enumerate all sets $S(W)$, where $W$ is a
witness tree for $(r,q_f)$. For this, it suffices to enumerate witness trees themselves.
Thereby, every witness tree $W$ has to be produced in time $\bigO(|W|) = \bigO(|S(W)|)$ (see Lemma~\ref{lemma-witness-tree}).
To this end, we define a kind of lexicographical order on witness trees. For this, we have to fix some linear orders: 
For every configuration $(v,q) \in \Conf^a(T)$ we fix and precompute an arbitrary linear order on $\suc_T^u(v,q)$.
Moreover, for every configuration $(v,q) \in \Conf^u(T)$ we additionally fix and precompute an arbitrary linear order on $\suc_T^a(v,q)$.

Consider next a witness tree $W$ and a node $(v,q)$ of $W$. We say that $(v,q)$ is a maximal node if one of the following three cases holds:
\begin{itemize}
\item $(v,q)$ is a leaf of $W$.
\item $(v,q)$ is a unary node whose unique child is the largest $(v',q') \in \suc_T^u(v,q)$.
\item $(v,q)$ is a binary node with left (resp., right) child $(v_1,q_1)$ (resp., $(v_1,q_2)$) and $(v_1,q_1,v_2,q_2)$ is the largest 
4-tuple in $\suc_T^a(v,q)$.
\end{itemize}
If in the second (resp., third) point we take the smallest $(v',q') \in \suc_T^u(v,q)$ (resp., the smallest $(v_1,q_1,v_2,q_2) \in \suc_T^a(v,q)$), 
then we speak of a minimal node. Leaves of a witness tree are maximal as well as minimal.

We say that the witness tree $W$ is maximal (resp., minimal) if all nodes of $W$ are maximal (resp., minimal). 
Intuitively, this means that we construct the witness tree according to Definition~\ref{def-witness}, but at each
extension step (where the children of a node are defined) we take the largest (resp., smallest) available choice.
Clearly, there is a unique maximal (resp., minimal) witness tree for every $(v,q) \in \Conf^a(T)$.

The enumeration algorithm for $\mathcal{S}^a(r,q_f)$ starts with producing the unique minimal witness tree $W_0$ for $(r,q_f)$ in time $\bigO(|W_0|)$. 
For a single enumeration step, assume that $W$ is the previously produced witness tree for $(r,q_f)$. If $W$ is maximal, then the enumeration moves on to producing the witness trees for $(r, q'_f)$, where $q'_f$ is the next final state, or if $q_f$ was the last final state, the algorithm stops and prints {\sf EOE}.
Otherwise, we produce the lexicographically
next witness tree $W'$ as follows: Let $w_1, w_2, \ldots, w_n$ be the set of nodes of $W$ listed in preorder (left-to-right depth-first order).\footnote{Every order on the nodes of the witness tree would be suitable, as long as (i) one can traverse the nodes in the chosen order in constant time per node and 
(ii) the parent node of a node $w$ comes before $w$.} We can assume that this list was produced in the previous enumeration step in time 
$\bigO(|W|)$.
Let $w_i$ be the last non-maximal node in the list, i.e., $w_{i+1}, \ldots, w_n$ are maximal.
Also $w_i$ has been computed in the previous enumeration step. We then copy all nodes $w_1, \ldots, w_{i-1}$ together with their children 
to $W'$. The edges between these copied nodes are also copied from $W$ to $W'$.
In this way, we obtain a partial witness tree for $(r,q_f)$. Note that the parent node of $w_i$ belongs to 
$\{w_1, \ldots, w_{i-1}\}$. Hence, $w_i$ is also copied to $W'$. 
 Let $w_i = (v,q)$. Since leaves of $W$ are always maximal, $(v,q)$ is either a unary or a binary inner node in $W$.
 We then extend the partial witness tree at node $(v,q)$ by taking the next largest choice (compared to the choice taken in $W$). More precisely,
 if $(v,q)$ is unary and its unique child is $(v',q')$ in $W$, then we add in $W'$ an edge from $(v,q)$ to 
the configuration that comes after $(v',q')$ in our fixed linear order on $\suc_T^u(v,q)$.
Now assume that $(v,q)$ is binary and let $(v_1,q_1)$ (resp., $(v_2,q_2)$) be the left (resp., right) child of $(v,q)$ 
in $W$. Let $(v_1, q'_1, v_2, q'_2)$ be the 4-tuple that comes after $(v_1,q_1,v_2,q_2)$ in our fixed order on $\suc_T^a(v,q)$. Then
$(v_1,q'_1)$ becomes the left child and $(v_2,q'_2)$ becomes the right child of $(v,q)$ in $W'$.

In the last step, it remains to complete the partial witness tree $W'$ constructed so far to a (complete) witness tree by extending $W'$
below leaves $(v,q)$ of $W'$ such that $v$ is \emph{not} a leaf in $T$.
During this extension we follow the recursive definition of witness trees (Definition~\ref{def-witness})
but always choose the smallest element from $\suc_T^u(v,q)$ (resp., $\suc_T^a(v,q)$) if $(v,q)$ is the leaf where we currently extend the partial witness tree. 
Since we spend constant time for each node of the final witness tree $W'$,
it follows that $W'$ can be constructed in time $\bigO(|W'|)$.

\subsubsection{Preprocessing} \label{sec bagan preproc}

It remains to argue that all preprocessing can be done in time linear in $|T|$. For this, we need the following lemma:

\begin{lemma} \label{remark-precompute2}  
The sets $\Conf^a(T)$, $\Conf^u(T)$, $\Conf^\emptyset(T)$, and the forest $T \otimes \mathcal{B}$ can be computed bottom-up on the tree $T$
in time $\bigO(|T| \cdot |Q|^2)$.
\end{lemma}

\begin{proof}
We first show how to compute $\Conf^a(T)$, $\Conf^u(T)$, and $\Conf^\emptyset(T)$.
Recall that $\delta_0$ maps from $\Sigma_0 \times \{0,1\}$ to $Q$.
For an $a$-labelled leaf $v$ of $T$ and $q \in Q$ we have:
\begin{itemize}
\item $(v,q) \in \Conf^a(T)$ iff  $(v,q) \in \Conf^u(T)$ iff $q = \delta_0(a,1)$,
\item $(v,q) \in \Conf^\emptyset(T)$ iff $q = \delta_0(a,0)$.
\end{itemize}
Assume now that $(v,v_1) \in E_\ell$, $(v, v_2) \in E_r$ and $\lambda(v) = f \in \Sigma_2$. We have:
\begin{itemize}
\item $(v,q) \in \Conf^\emptyset(T)$ if and only if  there are $(v_1,q_1), (v_2,q_2) \in \Conf^\emptyset(T)$ with
$\delta_2(q_1,q_2,f)=q$.
\item $(v,q) \in \Conf^a(T)$ if and only if  there are $(v_1,q_1), (v_2,q_2) \in \Conf^a(T) \cup \Conf^\emptyset(T)$ with
$\{ (v_1,q_1), (v_2,q_2)\}  \cap \Conf^a(T) \neq \emptyset$ and $\delta_2(q_1,q_2,f)=q$.
\item $(v,q) \in \Conf^u(T)$ if and only if  there are
$(v_1,q_1), (v_2,q_2) \in \Conf^a(T)$ with $\delta_2(q_1,q_2,f)=q$.
\end{itemize}
This allows to compute for each node $v \in V$ the set of states $q \in Q$ such that $(v,q) \in \Conf^x(T)$ for $x \in \{a,u,\emptyset\}$.
To obtain the time bound $\bigO(|T| \cdot |Q|^2)$, we iterate for each $f$-labelled node $v$ over all state pairs $(q_1, q_2) \in Q \times Q$, compute
$q := \delta_2(q_1, q_2, f)$ and then add the configuration $(v,q)$ to the set $\Conf^a(T)$, $\Conf^u(T)$, or $\Conf^\emptyset(T)$,
depending on the membership of $(v_1, q_1)$ and $(v_2, q_2)$ in these sets.

In order to compute $T \otimes \mathcal{B}$, we proceed again bottom-up on $T$. Assume that $(v,v_1) \in E_\ell$, $(v, v_2) \in E_r$ and $\lambda(v) = f$. We iterate over all state pairs $(q_1, q_2) \in Q \times Q$, and compute $q := \delta_2(q_1,q_2,f)$. We then add the edge $(v, q) \to (v_1, q_1)$ if  $(v,q), (v_1, q_1) \in \Conf^a(T)$ and  $(v_2, q_2) \in \Conf^\emptyset(T)$. Similarly, we add the edge $(v, q) \to (v_2, q_2)$ if 
$(v,q), (v_2, q_2) \in \Conf^a(T)$ and  $(v_1, q_1) \in \Conf^\emptyset(T)$.
This procedure needs time $\bigO(|T| \cdot |Q|^2)$. 
\end{proof}
Let us assume now that the sets  $\Conf^a(T)$, $\Conf^u(T)$, $\Conf^\emptyset(T)$ and the forest $T \otimes \mathcal{B}$ 
have been precomputed (in time  $\bigO(|T| \cdot |Q|^2)$). \label{pageRefComputeLinOrder}
The sets $\suc_T^a(v,q)$ (for $(v,q) \in \Conf^u(T)$) are pairwise disjoint and their  union has size $\bigO(|T| \cdot |Q|^2)$.
We can compute  in time $\bigO(|T| \cdot |Q|^2)$ this union together with a linear order where every set $\suc_T^a(v,q)$ forms an interval.
 For the sets $\suc_T^u(v,q)$ for  $(v,q) \in \Conf^a(T)$ the situation is not so clear.
These sets have size $\bigO(|T|\cdot |Q|)$ but they are not disjoint. Nevertheless, one can compute a global linear order on
the set $\Conf^u(T)$ such that every set $\suc_T^u(v,q)$ is an interval of this global linear order; see \cite{Bagan06}.\footnote{We do not need the argument from
\cite{Bagan06}, since at this step, we anyway need another solution 
 for our extension of Bagan's algorithm to DAGs; see \autoref{sec-bagan-dag}.}
Then, every $\suc_T^u(v,q)$ can be represented by the smallest and largest configuration of the corresponding
interval. This is good enough for the above enumeration algorithm.

\subsection{Extending Bagan's algorithm to DAG-foldings of binary trees} \label{sec-bagan-dag}

In this section we prove \autoref{main2}.
For this, we have to take care of the setting where the input binary tree $T$ is not given explicitly (as in the previous section), but by a vertex-labelled $\mathcal{C}$-decorated binary DAG $\DAG = (V, E, \lambda, \gamma)$ and a distinguished vertex $v_0$ (where the latter is not known during the preprocessing phase) with
$\mathcal{C}$ being a category that satisfies the computational assumptions from \autoref{sec-enumeration-general}.
More precisely, for $T = \unfold_{\DAG}(v_0)$ we want to enumerate all sets $\gamma^*(S) = \multiset{ \gamma^*(v) : v \in S}$ with $S \in \select(\mathcal{B}, T)$. Since we are interested in enumerating $\gamma^*(S)$ for every $S \in \select(\mathcal{B}, T)$ and $\gamma^*(S) = \gamma^*(S')$ is possible for distinct sets $S, S' \in \select(\mathcal{B}, T)$, our task is to enumerate the \emph{multiset} $\multiset{\gamma^*(S) : S \in \select(\mathcal{B}, T)}$.\footnote{Recall that in our application of this algorithm with the decoration mapping $\gamma$ from \autoref{sec:preorderNumbers}, we have
$\gamma^*(S) \neq \gamma^*(S')$ whenever $S \neq S'$.}
This generalises Bagan's original algorithm in two regards: Firstly, we have to deal with the $\mathcal{C}$-morphisms, i.e., instead of enumerating sets of leaves, we have to enumerate the sets of the corresponding $\mathcal{C}$-morphisms, and, secondly, we have to deal with the situation that the input tree is given by a DAG.

Handling the $\mathcal{C}$-morphisms is more or less straightforward. Since $\DAG$ is $\mathcal{C}$-decorated, every tree $T = \unfold_{\DAG}(v_0)$ is a $\mathcal{C}$-decorated tree. 
The $\mathcal{C}$-decoration of $T$ yields a $\mathcal{C}$-decoration of $T \otimes \mathcal{B}$ in the natural way. More precisely,
$T \otimes \mathcal{B}$ becomes the unordered and $\mathcal{C}$-decorated forest $(\Conf^a(T), \to, \gamma)$, where we set $\gamma(v,q) = \gamma(v)$ for every $(v,q) \in \Conf^a(T)$ and $\gamma((u,p),(v,q)) = \gamma(u,v)$ for every
edge $((u,p), (v,q))$ of $T$. Now, the $\mathcal{C}$-morphism of a path from some vertex $(u,p)$ to a vertex $(v,q)$ in $T \otimes \mathcal{B}$ is exactly the 
$\mathcal{C}$-morphism of the unique path from $u$ to $v$ in $T$. 

Similarly, we can decorate witness trees. By definition, for every edge $((u,p),(v,q))$ in a witness tree $W$, there is a unique path $\pi$ from $u$ to $v$ in the tree $T$ and we define
$\gamma((u,p), (v,q)) = \gamma(\pi)$. These decorated witness trees share the crucial properties of their undecorated counterparts; in particular, Lemma~\ref{lemma-witness-tree} still holds. Moreover, if the configuration $(v,q)$ is a leaf of a witness tree for the configuration $(u,p)$, then the $\mathcal{C}$-morphism of $(v,q)$ (which by definition
is the $\mathcal{C}$-morphism of the path from the root $(u,p)$ to the leaf $(v,q)$) is equal to the $\mathcal{C}$-morphism of the leaf $v$ in the tree $T(u)$. Consequently, from a $\mathcal{C}$-decorated witness tree $W$ that represents a leaf-set $S$, we can easily obtain $\gamma^*(S)$ from $W$'s $\mathcal{C}$-decoration (i.e., we can compute $\gamma^*(S)$ in time $\bigO(|W|)$ by one top-down traversal of $W$, or while constructing the decorated witness tree).

Consequently, in order to enumerate all sets $\gamma^*(S)$ with $S \in \select(\mathcal{B}, \unfold_{\DAG}(v_0))$, it is sufficient to enumerate all decorated witness trees, but now we have to do this in the setting where the input tree is compressed by $\DAG$. We will now explain how this is possible.

Recall that the vertices of $T = \unfold_{\DAG}(v_0)$ are paths $\pi \in \path_{\DAG}(v_0)$ that start in vertex $v_0$ of $\DAG$ and end in an arbitrary vertex of $\DAG$. 
Consequently, we will denote vertices of a tree $T= \unfold_{\DAG}(v_0)$ in the following with 
$\pi, \pi'$, etc., whereas vertices of $\DAG$ will be denoted with $u$, $v$, etc.
Consider now two paths $\pi \in \path_{\DAG}(v_0)$ and $\pi' \in \path_{\DAG}(v'_0)$  
with the same terminal vertex $\omega(\pi) = \omega(\pi')$ (the start vertices may differ) and let 
$T= \unfold_{\DAG}(v_0)$ and $T'= \unfold_{\DAG}(v'_0)$. Since 
$\omega(\pi) = \omega(\pi')$, the subtrees $T(\pi)$ and $T'(\pi')$ rooted in $\pi$ and $\pi'$, respectively, are isomorphic. 
The following lemma is a direct consequence of this fact:

\begin{lemma}
Let $\pi$, $\pi'$, $T$, and $T'$ be as above. Then for all states $q$ of $\mathcal{B}$ and
all $x \in \{a,u,\emptyset\}$ we have: $(\pi,q) \in \Conf^x(T)$ if and only if $(\pi',q) \in \Conf^x(T')$.
\end{lemma}

This allows to define for the DAG $\DAG$ configuration sets $\Conf^x(\DAG) \subseteq V \times Q$: $(v,q) \in \Conf^x(\DAG)$ if and only if $(\pi,q) \in \Conf^x(\unfold_{\DAG}(v_0))$, where  $v_0 \in V$ and the path $\pi$ are such that $\pi$ is a path from $v_0$ to $v$.
This definition does not depend on the choice of $v_0$ and $\pi$.
The sets $\Conf^x(\DAG)$ can be computed by a bottom-up parse of $\DAG$ in exactly the same way as the sets $\Conf^x(T)$ in the proof
of Lemma~\ref{remark-precompute2}. We therefore obtain:

\begin{lemma} \label{remark-precomputeCompressSets} 
The sets  $\Conf^a(\DAG)$, $\Conf^u(\DAG)$ and $\Conf^\emptyset(\DAG)$ can be computed  in time $\bigO(|\DAG| \cdot |Q|^2)$.
\end{lemma}
Based on the sets $\Conf^a(\DAG)$, $\Conf^u(\DAG)$ and $\Conf^\emptyset(\DAG)$ we also define a $\DAG$-version 
$\DAG \otimes \mathcal{B} = (\Conf^a(\DAG), E', \gamma')$
of the forest $T \otimes \mathcal{B}$ from
\autoref{sec-generalIdea}, which is a $\mathcal{C}$-decorated DAG:  
The edges in $E'$ use the index set $I = \{\ell, r\}$ in order to distinguish multiple edges between two vertices. To define $E' \subseteq \Conf^a(\DAG) \times \{\ell, r\} \times \Conf^a(\DAG)$, let $d \in \{\ell,r\}$ and $(u,p), (v,q) \in \Conf^a(\DAG)$ such that $\lambda(u) = f \in \Sigma_2$. Then, there is an edge $((u,p), d, (v,q))  \in E'$  iff there is $(v',q') \in \Conf^\emptyset(\DAG)$ such that one of the following two cases holds
(recall that $E \subseteq V \times \{\ell,r\} \times V$ since $\DAG$ is a binary DAG):
\begin{itemize}
\item $d = \ell$, $(u,\ell,v), (u, r, v') \in E$ and $\delta_2(q,q',f) = p$,
\item $d = r$, $(u,r,v), (u, \ell, v') \in E$ and $\delta_2(q',q,f) = p$.
\end{itemize}
The decoration mapping $\gamma'$ is inherited from $\DAG$: we set $\gamma'(u,p) = \gamma(u)$ for $(u,p) \in \Conf^a(\DAG)$ and
$\gamma'((u,p),d,(v,q)) = \gamma(u,d,v)$ for an edge $((u,p),d,(v,q)) \in E'$.

The following lemma is shown in the same way as the corresponding statement for 
the forest $T \otimes \mathcal{B}$ in Lemma~\ref{remark-precompute2}.

\begin{lemma} \label{remark-precomputeCompressDAGStructure} 
The DAG $\DAG \otimes \mathcal{B}$ can be computed bottom-up on the DAG $\DAG$ in time $\bigO(|\DAG| \cdot |Q|^2)$.
\end{lemma}

We need the following fact, which follows directly from the definition of the forest $T \otimes \mathcal{B}$ and the DAG $\DAG \otimes \mathcal{B}$.

\begin{lemma} \label{lemma-TB-DB}
Let $\pi \in \path_{\DAG}(v_0)$ be a vertex of $T = \unfold_{\DAG}(v_0)$ with $v = \omega(\pi)$. Moreover, let $(v,d,v') \in E$ be an edge of the DAG $\DAG$ and let $q, q' \in Q$.
Then $(\pi, q) \to (\pi d v',q')$ in the forest $T \otimes \mathcal{B}$ if and only if $((v, q), (v',q'))$ is an edge in $\DAG \otimes \mathcal{B}$.
\end{lemma}

Consider now a witness tree $W$ for $T= \unfold_{\DAG}(v_0)$ with root $(v_0, q) \in \Conf^a(T)$ (recall that the root of $T$ is the empty path $v_0$ in $\DAG$). The vertices of $W$ 
are pairs $(\pi, q) \in \Conf^a(T)$. Of course, the names of the vertices of $W$ are not important; it is perfectly fine to enumerate for every witness tree $W$ an isomorphic copy.
But for doing this, we do not have to know the name $(\pi,q)$ of a vertex, when we construct the children of $(\pi,q)$ according to Definition~\ref{def-witness}. Only the pair $(\omega(\pi), q) \in \Conf^a(\DAG)$ is
important for this. This means that instead of storing the pair $(\pi,q)$ we can store an abstract vertex that is labelled with the pair $(\omega(\pi),q)$.
Let us explain this in more detail. 

In the following, let $\pi \in \path_{\DAG}(v_0)$ be a vertex in $T$ and let $v = \omega(\pi)$.
There are two extension steps in the definition of witness trees from Definition~\ref{def-witness}:
\begin{itemize}
\item For a configuration $(\pi,q) \in \Conf^a(T)$ choose a vertex $(\pi',q') \in \suc^u_T(\pi,q)$ as the unique child of $(\pi,q)$ in a witness tree.
\item For a configuration $(\pi,q) \in \Conf^u(T)$ choose a 4-tuple $(\pi \ell v_1, q_1, \pi r v_2, q_2) \in \suc^a_T(\pi,q)$.
Here, $v_1$ is the left child of $v$ in $\DAG$ and $v_2$ is the right child of $v$ in $\DAG$. By definition of $\suc^a_T(\pi,q)$,
we have $(\pi \ell v_1, q_1), (\pi r v_2, q_2) \in \Conf^a(T)$
and these two configurations become the two children of $(\pi,q)$ in the witness tree.
\end{itemize}
These two steps can be done without actually unfolding $\DAG$ into $T$.
For the second step this is easy to see: For a configuration $(v,q) \in \Conf^u(\DAG)$ with $(v, \ell, v_1), (v, r, v_2) \in E$ and $\lambda(v) = f$ we define
\begin{equation} \label{def sucDAGa}
\sucDAG^a(v,q) = \{ (v_1, q_1, v_2, q_2) : \;  (v_1,q_1), (v_2, q_2) \in \Conf^a(\DAG), \delta_2(q_1,q_2,f)=q \}\,.
\end{equation}
We can compute linear orders for the sets $\sucDAG^a(v,q)$ in time $\bigO(|\DAG| \cdot|Q|^2)$ analogously to the uncompressed setting; see  \autoref{sec bagan preproc}. More precisely, for every vertex $v \in V$ of $\DAG$ labelled with $f$ that has the left child $v_1$ and the right child $v_2$, we proceed as follows. For all $q_1, q_2 \in Q$ with $(v_1,q_1), (v_2, q_2) \in \Conf^a(\DAG)$, we compute $q = \delta_2(q_1, q_2, f)$ (we then have $(v,q) \in \Conf^u(\DAG)$) and add $(v_1, q_1, v_2, q_2)$ to $\sucDAG^a(v,q)$. Clearly, we can do this in such a way that we store each set $\sucDAG^a(v,q)$ as a list of its elements in some order.

Clearly, there is a natural bijection between the sets $\suc^a_T(\pi,q)$ and $\sucDAG^a(\omega(\pi,q))$.
Hence, if we want to extend a partially constructed witness tree in an abstract vertex (let us call it $\hat{v}$) that is labelled with $(v,q) \in \Conf^u(T)$ (which means that $\hat{v}$ represents 
a vertex $(\pi,q)$ with $\omega(\pi)=v$) then we have to choose a 4-tuple $(v_1, q_1, v_2, q_2)  \in \sucDAG^a(v,q)$ and add to the witness tree two abstract vertices, namely
a left child of $\hat{v}$ that is labelled with $(v_1, q_1)$ (it represents $(\pi \ell v_1,q_1)$) and a right child of 
$\hat{v}$ that is labelled with $(v_2, q_2)$ (it represents $(\pi r v_2,q_2)$). The corresponding edges are decorated with $\gamma(v,v_1)$ and $\gamma(v,v_2)$.

The first extension step in the witness tree construction, where one has to choose a vertex $(\pi',q') \in \suc^u_T(\pi,q)$ for  $(\pi,q) \in \Conf^a(T)$, is a bit more subtle.
Recall that $\suc^u_T(\pi,q)$ is the set of all pairs $(\pi',q') \in \Conf^u(T)$ that can be reached from $(\pi,q)$ in the forest $T \otimes \mathcal{B}$.
Hence, one has to choose an arbitrary path $\xi = v_1 d_1 v_2 d_2 \cdots v_{k-1} d_{k-1} v_k$ in the DAG $\DAG$ and states $q_1, \ldots, q_k$
such that the following holds:
\begin{itemize}
\item $v_1 = \omega(\pi)$ and $q_1 = q$,
\item there is an edge from $(\pi d_1 v_2 \cdots d_{i-2} v_{i-1} d_{i-1} v_i, q_i)$ to $(\pi d_1 v_2 \cdots d_{i-1} v_{i} d_{i} v_{i+1}, q_{i+1})$ in the
forest  $T \otimes \mathcal{B}$ for all $1 \leq i \leq k-1$, and
\item $(\pi d_1 v_2 \cdots d_{k-2} v_{k-1} d_{k-1} v_k, q_k) \in \Conf^u(T)$.
\end{itemize}
By Lemma~\ref{lemma-TB-DB} we can equivalently choose a path from $(\omega(\pi), q)$ to a vertex $(v',q') \in \Conf^u(\DAG)$ in the DAG 
$\DAG \otimes \mathcal{B}$.
In other words, there is a canonical bijection between the set  $\suc^u_T(\pi,q)$ for $(\pi,q) \in \Conf^a(T)$ with $v = \omega(\pi)$
 and the set
$$
\sucDAG^u(v,q) := \{ \xi : \xi \text{ is a path from $(v,q)$ to a configuration $(v',q') \in  \Conf^u(\DAG)$ in } \DAG \otimes \mathcal{B} \}.
$$
Therefore, if we want to extend a partially constructed witness tree in an abstract vertex $\hat{v}$ that is labelled with $(v,q) \in \Conf^a(\DAG)$ 
 then we have to choose a path $\xi \in \sucDAG^u(v,q)$ ending in $(v',q') \in  \Conf^u(\DAG)$ 
 and add to the witness tree a new abstract vertex $\hat{v}'$ labelled with $(v',q')$ as the unique child of $\hat{v}$. The edge from $\hat{v}$ to 
 $\hat{v}'$ is decorated with the morphism $\gamma(\xi)$. Notice that the latter depends on the path $\xi$ and not just its terminal vertex.

By the previous discussion, the enumeration of witness trees for $T = \unfold_{\DAG}(v_0)$ works in the same way as the enumeration of witness trees for an explicitly
given tree $T$ in \autoref{sec:enumWitnessTrees}, with the only difference that we use the above sets $\sucDAG^a(v,q)$ (for $(v,q) \in \Conf^u(\DAG)$)
and $\sucDAG^u(v,q)$ (for $(v,q) \in \Conf^a(\DAG)$). We have already remarked that for the sets $\sucDAG^a(v,q)$ one can compute some
fixed linear orders and use these orders for the enumeration phase (in the same way as we did in \autoref{sec:enumWitnessTrees} for an explicitly given tree $T$).

However, the sets $\sucDAG^u(v,q)$ cannot be constructed explicitly (they may have size exponential in the DAG $\DAG$). Instead of explicitly computing them in the preprocessing, we will invoke the path enumeration algorithm of \autoref{thm-enumerate-paths} for enumerating them on demand as follows. 

In the enumeration of the $\mathcal{C}$-decorated witness trees for $T = \unfold_{\DAG}(v_0)$, we use \autoref{thm-enumerate-paths} for the DAG $\DAG \otimes \mathcal{B}$ and the set $V_0 = \Conf^u(\DAG)$. The preprocessing from \autoref{thm-enumerate-paths} is obviously carried out in the overall preprocessing of the whole enumeration algorithm. Moreover, this preprocessing is independent of the vertex $v_0$ in $T = \unfold_{\DAG}(v_0)$.
If during the enumeration of witness trees we created a new abstract vertex $\hat{v}$ that is labelled with $(v, q) \in \Conf^a(\DAG)$ and that should have a single child in the witness tree,
then we start the enumeration phase from \autoref{thm-enumerate-paths} for the DAG $\DAG \otimes \mathcal{B}$, the target set $V_0 = \Conf^u(\DAG)$, and $s = (v, q)$. As soon as we receive
an element $\langle (v',q'), \alpha\rangle$ (here $(v',q') \in \Conf^u(\DAG)$ and $\alpha$ is a morphism from our category $\mathcal{C}$) we create the child  $\hat{v}'$ of $\hat{v}$, label
$\hat{v}'$ with $(v',q')$ and label the edge from  $\hat{v}$ to  $\hat{v}'$ with $\alpha$.  We then freeze the enumeration and resume it later when we 
have to modify the outgoing edge for $\hat{v}$. 

We should emphasise that the preprocessed data structure in \autoref{thm-enumerate-paths} is persistent in the sense that it is not destroyed during an enumeration phase for a certain source vertex $s$. In fact, at each time instant during the witness tree enumeration, there are in general several active (but frozen) enumeration phases for different start vertices
$s$. Therefore, persistence is important. 

Notice that if the DAG $\DAG$ fulfills \autoref{assumption-category} then the same holds for the DAG 
$\DAG \otimes \mathcal{B}$ under the additional assumption that states of $\mathcal{B}$ fit into a constant number of RAM registers. 
This ensures that vertices of $\DAG \otimes \mathcal{B}$ fit into a constant number of RAM registers.
Moreover, the path decorations
$\gamma(\pi)$ that appear in the DAG $\DAG \otimes \mathcal{B}$ are the same as the path decorations in the DAG $\DAG$.
Recall that \autoref{assumption-category} is needed in order to apply \autoref{thm-enumerate-paths}.

Note further that for the proof of \autoref{mainResultForests}, the DAG $\DAG$ is the input f-SLP $\FSLP$ 
(for which \autoref{assumption-category} is justified  at the end
of \autoref{sec:preorderNumbers}),
whereas the dBUTA $\mathcal{B}$ only depends on the MSO-query $\Psi$. Since we assume the data complexity setting, the 
size of $\Psi$ and hence the number of states of $\mathcal{B}$  is considered to be a constant. 

This concludes the proof of \autoref{main2} and hence the proof of our main result, \autoref{main-preorder}.

\section{Dealing with Updates}\label{sec:updates}

We have seen that even if unranked forests are highly compressed by f-SLPs, we can still enumerate the result set of an MSO-query with output-linear delay and after linear preprocessing. A natural question is whether our approach can be extended to the dynamic setting, i.e., after updating our data, we want to enumerate the query result again but with respect to the updated data and without having to repeat the whole preprocessing from scratch. Solving this task aims at two objectives: On the one hand, we have to be able to perform the update directly on the compressed data (clearly, we do not want to decompress our data, update it and then compress it again) and, on the other hand, our updates have to maintain the data structures that are necessary for running the enumeration algorithm. 

\subsection{The enumeration data structure}

As mentioned above, in order to perform an update for an f-SLP-compressed forest, we do not only have to update the f-SLP accordingly, but we also have to update the data structures that are needed for our linear preprocessing and output-linear delay enumeration algorithm. Let us formally define these data structures. 

First, let us recall that as a component of the enumeration algorithm of \autoref{main-preorder}, we had to devise an algorithm for a certain enumeration problem on DAGs in \autoref{sec-path-enumeration}. As a preprocessing of this algorithm, we first applied several simplification steps that eventually produced a binary version of the input DAG. In the following, we use the notation $\DAG_b$ for the binary 
decorated DAG obtained from the decorated DAG $\DAG$ by applying the construction from \autoref{sec-path-enumeration} (recall that in the course of the overall algorithm from \autoref{main-preorder}, the decorated DAG $\DAG$ on which we applied this construction is the DAG $\FSLP \otimes\,\mathcal{B}$ for an f-SLP $\FSLP$ and a dBUTA $\mathcal{B}$). 

Let $\FSLP$ be an f-SLP (viewed as a $\mathcal{C}_{\text{pre}}$-decorated DAG), $\mathcal{A}$ be an nSTA and $\mathcal{B}$ be the corresponding dBUTA from \autoref{thm-MNN}.
The \emph{enumeration data structure} for $\FSLP$ and $\mathcal{B}$ consists of the following data:
\begin{itemize}
\item the sets $\Conf^x(\FSLP)$ for all $x \in \{a,u,\emptyset\}$ (as defined in \autoref{sec-bagan-dag} for $\DAG = \FSLP$),
\item a linear order on the set $\sucFSLP^a(v,q)$ for every configuration $(v,q) \in \Conf^u(\FSLP)$; see \eqref{def sucDAGa},
\item the preorder effects $f_e$ for every edge $e$ of the DAG $\FSLP$ (see \autoref{sec:preorderNumbers}),
\item the binary DAG $(\FSLP \otimes\,\mathcal{B})_b$ (see \autoref{sec-path-enumeration}) together with the vertex $\omega_r[v]$
and the $\mathcal{C}_{\text{pre}}$-morphism $\gamma_r[v]$ (it is also a preorder effect) for every vertex $v$ of $(\FSLP \otimes\,\mathcal{B})_b$.
\end{itemize}
In \autoref{sec-main-result}, we have seen how the enumeration data structure can be computed in time
$\bigO(\card{\FSLP})$ (in data complexity) and, provided that we have the enumeration data structure 
at our disposal, how it can be used in order to enumerate $\select(\mathcal{A}, \derivsub{\FSLP}{A})$ 
for any chosen forest vertex $A$ of $\FSLP$
with output-linear delay in data complexity.

\subsection{Extensions of f-SLPs}

The relabelling updates considered in the next \autoref{sec:relabellingUpdates} are achieved by manipulating the initial f-SLP by only adding new vertices to it (together with their vertex labels and outgoing edges). 

Let $\FSLP = (V, E, \lambda)$ be an f-SLP and let $\newNodeMarker{V}$ be a finite set of vertices 
with $V \cap \newNodeMarker{V} = \emptyset$.
 An f-SLP $\FSLP' = (V \cup \newNodeMarker{V}, E', \lambda')$ is called a $\newNodeMarker{V}$-\emph{extension} of $\FSLP$ if 
 $E' \cap (V \times \{\ell,r\} \times V) = E$ and $\lambda(v) = \lambda'(v)$ for all $v \in V$.
 We call $V$ the \emph{old} vertices and $\newNodeMarker{V}$ the \emph{new} vertices of $\FSLP'$. 
 If $\newNodeMarker{V}$ is not important, we speak of an extension of $\FSLP$.
 Note that the extension $\FSLP'$ still contains all old vertices with exactly the same outgoing edges and labels as in $\FSLP$.
Hence, for every $v \in V$ we have $\derivsub{\FSLP'}{v} = \derivsub{\FSLP}{v}$.
Therefore we can omit the indices $\FSLP$ and $\FSLP'$  (for a vertex from $\newNodeMarker{V}$ only the index $\FSLP'$ makes sense).

\begin{lemma} \label{lemma-1-ext}
Let $\mathcal{B}$ be a dBUTA with state set $Q$ and let
$\FSLP, \FSLP'$ be f-SLPs, where 
$\FSLP'$ is a $\newNodeMarker{V}$-extension of $\FSLP$.
From $\FSLP'$ and the enumeration data structure for $\FSLP$ and $\mathcal{B}$ one can compute in time 
$\bigO(\card{Q}^2 \cdot \card{\newNodeMarker{V}})$ (i.e., time $\bigO(\card{\newNodeMarker{V}})$  in data complexity)
 the enumeration data structure for $\FSLP'$ and $\mathcal{B}$.
\end{lemma}

\begin{proof} 
It suffices to prove the lemma for the case $\card{\newNodeMarker{V}}=1$ since a $\newNodeMarker{V}$-extension 
can be obtained by doing $\card{\newNodeMarker{V}}$ many one-node extensions. Hence, assume that 
$\newNodeMarker{V} = \{\newNodeMarker{v}\}$ and let 
$v_1$ and $v_2$ be the children of $\newNodeMarker{v}$ (thus, $v_1, v_2$ belong to $\FSLP$).
We proceed as follows:
\begin{itemize}
\item For every $q \in Q$, we determine whether $(\newNodeMarker{v},q)$ belongs to the set
$\Conf^x(\FSLP')$ for all $x \in \{a,u,\emptyset\}$. For this, we only need to access the sets
$\{ v_1, v_2 \} \times Q \cap \Conf^x(\FSLP)$, which are available in the enumeration data structure for $\FSLP$
(see also the proof of Lemma~\ref{remark-precompute2}).
\item For every new configuration $( \newNodeMarker{v},q) \in \Conf^u(\FSLP)$, we compute a linear order of the 
set $\suc^a( \newNodeMarker{v},q)$ as described in \autoref{sec bagan preproc}. 
\item We compute the leaf size and left size (see \autoref{sec:preorderNumbers}) for the new vertex $\newNodeMarker{v}$
from the corresponding values for $v_1$ and $v_2$. Then we compute the preorder effects for the two edges
from $\newNodeMarker{v}$ to $v_1$ and $v_2$, respectively.
\item For every new configuration from $(\newNodeMarker{v}, q) \in \Conf^a(\FSLP')$, we compute the children of this configuration in the DAG $\FSLP' \otimes\,\mathcal{B}$, but we have to do this in such a way that we actually compute the binary DAG $(\FSLP' \otimes\,\mathcal{B})_b$. It is important here that the number of children of a new vertex $(\newNodeMarker{v}, q)$ in $\FSLP' \otimes\,\mathcal{B}$ is bounded by $2 |Q|$. This ensures that for every new vertex $(\newNodeMarker{v}, q)$ only a constant number of new vertices and edges have to be added to $(\FSLP \otimes\,\mathcal{B})_b$. Thereby we can also compute the vertex $\omega_r[y]$ and the weight $\gamma_r[y]$
(see \autoref{subsec path enumeration}, page~\pageref{page-omega_r})
 for every new vertex $y$ of $(\FSLP \otimes\,\mathcal{B})_b$. 
\end{itemize}
Each of the above steps needs time  $\bigO(\card{Q}^2)$.
\end{proof}

\subsection{Relabelling updates for f-SLP-compressed unranked forests}\label{sec:relabellingUpdates}

We now argue that our enumeration algorithm can be easily extended with relabelling updates, i.e., updates
that change the symbol of a specified vertex of the queried forest.
Formally, we define for a forest $F = (V,E,R,\lambda)$, a vertex $v \in V$ and a symbol $a \in \Sigma$ the new forest
$\mathsf{relabel}(F,v,a) = (V,E,R,\lambda')$, where $\lambda'(v) = a$ and $\lambda'(v') = \lambda(v')$ for all
$v' \in V \setminus \{v\}$.

Let us now consider an f-SLP $\FSLP$. It should be seen as a specification of a set of forests, one for each forest vertex
$A$. Given a forest vertex
$A$, a vertex $v$ in $\derivsub{\FSLP}{A}$ (represented by its preorder number in $\derivsub{\FSLP}{A}$) and
a symbol $a \in \Sigma$, it is our goal to compute a $\newNodeMarker{V}$-extension $\FSLP'$ (for some set $\newNodeMarker{V}$ of new
vertices) that contains a new vertex $\newNodeMarker{A} \in \newNodeMarker{V}$
with $\derivsub{\FSLP'}{\newNodeMarker{A}}=\mathsf{relabel}(\derivsub{\FSLP}{A},v,a)$.
The time needed for this depends on the height $\ordersub{\FSLP}{A}$ of $A$, which is
the maximal length of an $A$-to-leaf path in the DAG $\FSLP$. We write $\order{A}$ if the f-SLP $\FSLP$ is clear from the context. For instance, we have $\order{A} = 5$ in the f-SLP from \autoref{fig-fslp}.

For a vertex $B$ of an f-SLP recall its type $\tau(B)$ (see \autoref{subsubsection:FSLPs}), its
leaf size $s(B)$ and its left size $\ell(B)$ (see \autoref{sec:preorderNumbers}).

\begin{theorem}\label{relabellingTheorem}
Assume that the following is given:
\begin{itemize}
\item an f-SLP $\FSLP = (V,E,\lambda)$ together with the values $\tau(B)$, $s(B)$ and $\ell(B)$ (the latter only 
in case $\tau(B)=1$) for all $B \in V$,
\item a forest vertex $A$ of $\FSLP$,
\item the preorder number $k$ of a vertex $v$ from $\derivsub{\FSLP}{A}$ and 
\item a symbol $a \in \Sigma$.
\end{itemize}
One can then compute in time $\bigO(\order{A})$ a $\newNodeMarker{V}$-extension $\FSLP'$ of $\FSLP$ and a vertex $\newNodeMarker{A} \in \newNodeMarker{V}$ such that
\begin{itemize}
\item $\derivsub{\FSLP'}{\newNodeMarker{A}}=\mathsf{relabel}(\derivsub{\FSLP}{A},v,a)$,
\item $\order{\newNodeMarker{B}} \leq \order{A}$ for all  $\newNodeMarker{B} \in \newNodeMarker{V}$ and 
\item $|\newNodeMarker{V}|\leq \order{A}+1$.
\end{itemize}
If in addition the enumeration data structure for $\FSLP$ and some dBUTA $\mathcal{B}$ with state set $Q$ is given then one can also compute the enumeration data structure for $\FSLP'$ and $\mathcal{B}$
in time $\bigO(\card{Q}^2 \cdot  \order{A})$.
\end{theorem}

\begin{proof}
Let us first explain how to compute the unique path $\pi$ from $A$ to a leaf of $\FSLP$ that
corresponds to the vertex $v$ (with preorder number $k$) of $\derivsub{\FSLP}{A}$.
The algorithm walks from $A$ down in $\FSLP$ and thereby stores in each step
a pair $(B,m) \in V \times \mathbb{N}$ if $\tau(B)=0$ and a triple $(B,m,p) \in V \times \mathbb{N} \times \mathbb{N}$
if $\tau(B)=1$. The 
pair $(B,m)$ means that the current goal is to compute the unique path $\xi$ from $B$ to a leaf of $\FSLP$
that corresponds to the vertex with preorder number $m$ in the forest $\derivsub{\FSLP}{B}$.
A triple $(B,m,p)$ means that the current goal is to compute the unique path $\xi$ from $B$ to a leaf of $\FSLP$
that corresponds to the vertex with preorder number $m$ in the forest context $\derivsub{\FSLP}{B}$
under the additional assumption that the unique occurrence of $\ast$ in $\derivsub{\FSLP}{B}$ is replaced by a forest of size $p$.
It is always ensured by the algorithm that $m$ belongs to the range of 
preorder numbers of vertices belonging to the forest (resp., forest context) produced from vertex $B$.

We start with the pair $(B,m) := (A, k)$. Assume now that $(B, \ell, B_1)$ and $(B, r, B_2)$ are the two outgoing edges
of vertex $B$ in $\FSLP$. First assume that $\tau(B)=0$. Hence, $\lambda(B) = \conch$ implies $\tau(B_1) = \tau(B_2) = 0$ and
$\lambda(B) = \concv$ implies $\tau(B_1) = 1$ and $\tau(B_2) = 0$.
The algorithm currently stores a pair $(B,m)$ and updates the data as follows (recall that the preorder
numbers start with $0$):

\medskip
\noindent
Case $\lambda(B) = \conch$ and $m < s(B_1)$: $(B,m)$ is replaced by $(B_1, m)$.

\medskip
\noindent
Case $\lambda(B) = \conch$ and $s(B_1) \leq m$: $(B,m)$ is replaced by $(B_2, m-s(B_1))$.

\medskip
\noindent
Case $\lambda(B) = \concv$ and ($m < \ell(B_1)$ or $\ell(B_1) + s(B_2) \leq m$): $(B,m)$ is replaced by $(B_1,m, s(B_2))$.

\medskip
\noindent
Case $\lambda(B) = \concv$ and $\ell(B_1) \leq m < \ell(B_1) + s(B_2)$: $(B,m)$ is replaced by $(B_2,m-\ell(B_1))$.

\medskip
\noindent
Now assume that $\tau(B)=1$.  Hence, if $\lambda(B) = \concv$ then $\tau(B_1) = \tau(B_2) = 1$ and if 
$\lambda(B) = \conch$ then either $\tau(B_1) = 1$ and $\tau(B_2) = 0$ or $\tau(B_1) = 0$ and $\tau(B_2) = 1$.
The algorithm currently stores a triple $(B,m,p)$ and updates the data as follows:

\medskip
\noindent
Case $\lambda(B) = \conch$, $\tau(B_1) =0$, $\tau(B_2) =1$, and $m < s(B_1)$: $(B,m,p)$ is replaced by $(B_1, m)$.

\medskip
\noindent
Case $\lambda(B) = \conch$, $\tau(B_1) =0$, $\tau(B_2) =1$, and $s(B_1) \leq m$: $(B,m,p)$ is replaced by $(B_2, m-s(B_1),p)$.

\medskip
\noindent
Case $\lambda(B) = \conch$, $\tau(B_1) =1$, $\tau(B_2) =0$, and $m < s(B_1)+p$: $(B,m,p)$ is replaced by $(B_1, m,p)$.

\medskip
\noindent
Case $\lambda(B) = \conch$, $\tau(B_1) =1$, $\tau(B_2) =0$, and $s(B_1)+p \leq m$: $(B,m,p)$ is replaced by $(B_2, m-s(B_1)-p)$.

\medskip
\noindent
Case $\lambda(B) = \concv$ and ($m < \ell(B_1)$ or $\ell(B_1) + s(B_2)+p \leq m$): $(B,m,p)$ is replaced by $(B_1, m, s(B_2)+p)$.

\medskip
\noindent
Case $\lambda(B) = \concv$ and $\ell(B_1) \leq m < \ell(B_1) + s(B_2)+p$: $(B,m,p)$ is replaced by $(B_2, m-\ell(B_1), p)$.

\medskip
\noindent
The algorithm terminates when the first component of the current pair (resp., triple) is a leaf of the DAG $\FSLP$. 
The first components of the pairs (resp., triples) that are produced by the algorithm form exactly the path that corresponds
to the vertex with the initial preorder number $k$.

Note that once the above path $\pi$ is computed, one can obtain
from $\FSLP$ the $\newNodeMarker{V}$-extension $\FSLP'$ and the vertex $\newNodeMarker{A} \in \newNodeMarker{V}$
from the lemma using
$|\pi| \leq \order{A}$ many one-node extensions by adding copies of the vertices from the path $\pi$ bottom-up (starting with the leaf where $\pi$ ends). An example for this construction can be found in \autoref{fig-fslp-update}.

The final statement of the theorem concerning the computation of the enumeration data structure follows directly from Lemma~\ref{lemma-1-ext}.
\end{proof}
\begin{figure}
\begin{center}
\scalebox{1}{
\scalebox{1.4}{\includegraphics{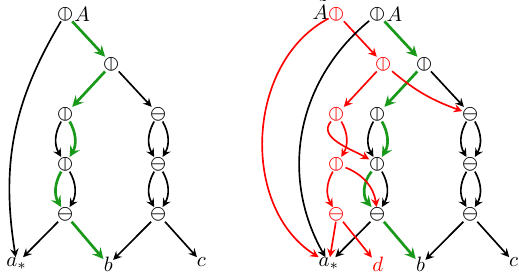}}
         }
\end{center}
\caption{An update in the f-SLP $\FSLP$ from \autoref{fig-fslp}. The vertex identified by the green path on the left is relabelled with the symbol $d$. The red vertices and edges on the right are new.}
	\label{fig-fslp-update}
\end{figure} 
Recall the notion of a rooted f-SLP $\FSLP$ from \autoref{subsubsection:FSLPs}, which defines a single forest $\deriv{\FSLP}$.
We assume in the following that for a rooted f-SLP all vertices $A$ can be reached from the root vertex $S$.
One then defines the height of $\FSLP$ as $\order{\FSLP} = \ordersub{\FSLP}{S}$. We also write $N_{\FSLP}$ for the size
of the produced forest $\deriv{\FSLP}$.
Note that $\ordersub{\FSLP}{A} \leq \order{\FSLP}$ for all vertices of $A$. 
The following balancing theorem for rooted f-SLPs from \cite{GanardiJL21} allows us to 
bound the height of a rooted f-SLP logarithmically in the size of the produced forest:

\begin{theorem}[cf.~\mbox{\cite[Corollary 3.28]{GanardiJL21}}] \label{thm-balance-f-SLP}
Given a rooted f-SLP $\FSLP$,
one can compute in time $\bigO(\card{\FSLP})$ a rooted f-SLP 
$\FSLP'$ such that $\deriv{\FSLP'} = \deriv{\FSLP}$, $|\FSLP'| = \Theta(\card{\FSLP})$ 
and $\order{\FSLP'} = \Theta(\log N_{\FSLP})$.
\end{theorem}
The following variant of \autoref{relabellingTheorem} for 
rooted f-SLPs is shown in the same way as \autoref{relabellingTheorem}. 
By \autoref{thm-balance-f-SLP}, the height bound $\Theta(\log N_{\FSLP})$ can be assumed without loss of generality.

\begin{theorem}\label{relabellingTheorem-rooted}
Assume that the following is given:
\begin{itemize}
\item a rooted f-SLP $\FSLP = (V,E,\lambda)$ with $\order{\FSLP} = \Theta(\log N_{\FSLP})$
 together with the values $\tau(B)$, $s(B)$ and $\ell(B)$ (the latter only in case $\tau(B)=1$) for all $B \in V$,
\item the preorder number $k$ of a vertex $v$ from $\deriv{\FSLP}$ and 
\item a symbol $a \in \Sigma$.
\end{itemize}
One can then compute in time $\bigO(\log N_{\FSLP})$ a rooted $\newNodeMarker{V}$-extension $\FSLP'$ of $\FSLP$ such that
\begin{itemize}
\item $\deriv{\FSLP'}=\mathsf{relabel}(\deriv{\FSLP},v,a)$,
\item $\order{\FSLP'} = \order{\FSLP} = \Theta(\log N_{\FSLP})$ and 
\item $|\newNodeMarker{V}| = \bigO(\log N_{\FSLP})$.
\end{itemize}
If in addition the enumeration data structure for $\FSLP$ and some dBUTA $\mathcal{B}$ with state set $Q$ is given then one can also compute the enumeration data structure for $\FSLP'$ and $\mathcal{B}$
in time $\bigO(\card{Q}^2 \cdot \log N_{\FSLP})$.
\end{theorem}
Since  $\order{\FSLP'} = \order{\FSLP}$ and $N_{\FSLP} = N_{\FSLP'}$ (the latter is trivial since a relabelling does not change the set of vertices
of a forest), \autoref{thm-balance-f-SLP} has to be only applied once in the beginning. Successive relabelling updates do not destroy
the balancedness.

Let us stress that the f-SLP $\FSLP'$ from \autoref{thm-balance-f-SLP} does  in general not
satisfy the balancedness property $\ordersub{\FSLP'}{A} = \Theta(\log N_A)$ (where $N_A$ is the size of the forest $\derivsub{\FSLP}{A}$)
for all forest vertices $A$ of $\FSLP'$.
This is only guaranteed for the root vertex of $\FSLP'$.
For s-SLPs this stronger balancedness property 
is in fact achievable, see \cite{Ganardi2021}. We conjecture that the construction from \cite{Ganardi2021} can be extended
 to f-SLPs. If this is true then one could replace after the appropriate balancing the height $\order{A}$ in 
\autoref{relabellingTheorem} by $\bigO(\log N_A)$.

Notice that the relabelling procedure from the proof of \autoref{relabellingTheorem}
makes the f-SLP always larger, even if the relabelled forest might be better compressible
(with respect to f-SLPs) than the original forest. It is not clear how to avoid this. In this context one might consider the following
decision problem: given a rooted f-SLP $\FSLP$, 
a vertex $v$ in $\deriv{\FSLP}$ (given by its preorder number) and a symbol
$a \in \Sigma$, is there a rooted f-SLP for $\mathsf{relabel}(\deriv{\FSLP},v,a)$ of size 
at most $|\FSLP|$? It is not clear, whether this problem can be solved in polynomial time.
The question whether a given string has an s-SLP of size at most a given number is 
already NP-complete \cite{CaselEtAl2021,CharikarLLPPSS05}.

\subsection{Beyond relabelling updates} \label{sec other updates}

While relabelling a single vertex of an unranked forest is a natural update operation, it is also quite simple. In particular, it does not change the overall structure of the forest.

More complex tree updates were considered in \cite{MMN22} in the context of query enumeration on uncompressed trees. In \cite{MMN22}, the authors consider in addition to relabeling updates also certain insertion and deletion updates, where vertices are inserted (resp., deleted) at certain specified positions in the current forest. The authors in \cite{MMN22} achieve time $\bigO(\log |F|)$ for these updates. For this, they represent the input forest $F$ by a forest algebra expression $\expr$ of height $\bigO(\log |F|)$. 

It is tempting to extend the approach from \cite{MMN22} to an f-SLP $\FSLP$, i.e., to a forest algebra expression $\expr$ that is represented by a DAG. We conjecture that this might be possible, but the technical difficulties are considerable. The main idea of~\cite{MMN22} is to maintain a certain balancing property of the forest algebra expression $\expr$ that describes the tree $T$. With this balancing property, the update time depends on the height of $\expr$, which is logarithmic in the size of $T$. The approach for keeping $\expr$ balanced is to perform certain rotations similar to the classical ones that are used for balanced search trees like AVL-trees. However, in order to do such rotations with respect to the syntax tree of an algebraic expression without changing the evaluation of the expression, one needs associativity of the respective algebra operations. For the two binary forest algebra operations, this is not the case: $F_1 \concv (F_2 \conch F_3)$ is in general not the same as $(F_1 \concv F_2) \conch F_3$. In \cite{MMN22} the authors found a quite technical workaround for this problem in the uncompressed setting. It is not obvious that the same workaround can be also used in the situation, where the forest algebra expression is represented by a DAG.

\subsection{A lower bound}\label{sec:updateLowerBound}

In this section we prove a lower bound on the size increase of relabelling updates. Notice that relabelling updates increase the size of the f-SLP by an additive term that is bounded by the height of the f-SLP. We show that this additive term can be only improved by a multiplicative $\log\log$-factor. 

We prove our lower bound for strings and s-SLPs. Since every string can be seen as a particular forest (see \autoref{forests})
and an s-SLP can be seen as an f-SLP (see \autoref{subsubsection:FSLPs}), our lower bound also holds for forests and f-SLPs.

For a string $S$ over a finite alphabet $\Sigma$, a position $1 \le i \le |S|$ and $x \in \Sigma$, we denote by $\mathsf{relabel}(S,i,x)$ the string obtained by relabelling the $i^{\text{th}}$ symbol of $S$ into $x$. For a string $S$ we write $g(S)$ for the size of a smallest s-SLP for $S$.

\begin{theorem} \label{thm-update-size}
There is a family of strings $(S_k)_{k \geq 1}$ over the alphabet $\{a,b\}$ of strictly increasing length 
and for every $k \geq 1$ there is an $i_k \le |S_k|$ such that
\[ g(\mathsf{relabel}(S_k,i_k,b)) - g(S_k) = \Omega\bigg(\frac{\log |S_k|}{\log \log |S_k|}\bigg).\]
\end{theorem}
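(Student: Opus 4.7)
The plan is to construct $(w_k, U_k)$ explicitly, leveraging classical lower bounds on addition-chain lengths. For each $k \ge 3$ I would set $n_k := 2^k$ and pick an integer $p_k$ with $p_k - 1 \in [2^{k-2}, 2^{k-1})$ satisfying
\[
 a(p_k - 1) \;\ge\; \log_2(p_k - 1) + c \cdot \log_2(p_k - 1) / \log_2\log_2(p_k - 1)
\]
for an absolute constant $c > 0$, where $a(\cdot)$ denotes the shortest addition-chain length; such $p_k$ exist because Erd\H{o}s's 1960 theorem guarantees $a(n) = \log_2 n + (1+o(1))\log_2 n / \log_2\log_2 n$ for almost every integer $n$. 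Then I define
\[
 w_k := a^{p_k - 1}\, b\, a^{n_k - p_k}, \qquad U_k := \mathsf{relabel}(p_k, a),
\]
so that $U_k[w_k] = a^{n_k}$ and $|w_k| = n_k$ is strictly increasing in $k$.

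The main ingredient is the bound $g(a^{p-1}\, b\, a^q) = l(p-1, q) + \Theta(1)$, where $l(p-1, q)$ denotes the minimum length of an addition chain containing both $p-1$ and $q$. The upper bound is direct: a shortest joint chain realising $l(p-1, q)$ gives rise to one ``pure-$a$'' nonterminal per chain element (with binary concatenation rules corresponding to the chain additions), and $\Theta(1)$ extra nonterminals inject the $b$-terminal and glue the three pieces together. The lower bound is the technical heart: in an arbitrary SLP for $a^{p-1} b\, a^q$, the $b$-terminal descends from the start symbol along a unique ``$b$-chain'' of nonterminals, each step of which emits a pure-$a$ sibling; the multiset $\{s_i\}_{i \le m_1}$ of left siblings sums to $p - 1$, the multiset $\{t_j\}_{j \le m_2}$ of right siblings sums to $q$, and every sibling length lies in the addition chain $S$ realised by the pure-$a$ sub-SLP. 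Extending $S$ by the iterated partial sums $s_1, s_1 + s_2, \ldots, s_1 + \cdots + s_{m_1} = p - 1$ (contributing at most $m_1 - 1$ new elements), and analogously on the right side, yields an addition chain containing both $p - 1$ and $q$ of length at most $|S| + m_1 + m_2 - 2$; since the SLP's total size equals $|S| + m_1 + m_2 + \Theta(1)$, this gives $g(a^{p-1} b\, a^q) \ge l(p-1, q) + O(1)$.

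The conclusion then follows: from $l(p_k - 1, n_k - p_k) \ge a(p_k - 1)$ and the main lemma we obtain $g(w_k) \ge a(p_k - 1) + O(1)$, while the optimal doubling chain for $2^k$ yields $g(U_k[w_k]) = g(a^{n_k}) = k + \Theta(1)$. By the choice of $p_k$,
\[
 g(w_k) - g(U_k[w_k]) \;\ge\; a(p_k - 1) - k - O(1) \;=\; \Omega(k / \log k) \;=\; \Omega\!\left(\log n_k / \log\log n_k\right),
\]
as required. The main obstacle is the structural lower bound in the lemma: one has to argue that \emph{every} SLP for $a^{p-1} b\, a^q$, no matter how cleverly built, encodes an addition chain covering $\{p-1, q\}$ within its size budget, which requires careful tracking of the $b$-chain and the conversion of the two sibling multisets into addition-chain extensions of the pure-$a$ chain.
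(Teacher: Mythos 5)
Your construction is genuinely different from the paper's. The paper takes a Kolmogorov-random string $u_k$ of length $k^2$, encodes its $k$ blocks as run-lengths $m_{k,1},\dots,m_{k,k}\le 2^k$, and compares the unary word $a^{2^k+m_{k,1}+\cdots+m_{k,k}+k}$ (smallest SLP of size $\Theta(k)$) with $a^{2^k}ba^{m_{k,1}}b\cdots ba^{m_{k,k}}$ (smallest SLP of size $\Omega(k^2/\log k)$, since an SLP of size $n$ can be encoded in $\mathcal{O}(n\log n)$ bits); averaging over the $k$ intermediate relabellings yields one update that changes $g$ by $\Omega(k/\log k)$. That argument is insensitive to the exact definition of SLP size. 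Your route --- an explicit word $a^{p_k-1}ba^{2^k-p_k}$ with $p_k-1$ chosen to be addition-chain-hard via Erd\H{o}s's theorem, plus a structural lemma tying $g(a^{p-1}ba^q)$ to the joint addition-chain length $l(p-1,q)$ --- is attractive because it is constructive and pinpoints where the hardness sits, and the structural lemma itself (every SLP for $a^{p-1}ba^q$ encodes, within its size budget, an addition chain covering $p-1$ and $q$, via the $b$-chain and the partial sums of the sibling lengths) looks essentially sound.

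The gap is in the additive bookkeeping, which is exactly where this theorem lives: both $g(w_k)$ and $g(U_k[w_k])$ are $\Theta(k)$, and the quantity to be bounded is a second-order term $\Theta(k/\log k)$, so any constant-factor slack between ``SLP size'' and ``addition-chain length'' is fatal. Your two key quantitative claims, $g(a^{n})=a(n)+\Theta(1)$ and $g(a^{p-1}ba^q)\ge l(p-1,q)-\mathcal{O}(1)$, only hold simultaneously if ``size'' means the number of rules of a grammar in Chomsky normal form. For the standard measure (total length of all right-hand sides, which is what the paper's $|D|=|E|$ amounts to), one has $g(a^{2^k})=2k+\mathcal{O}(1)$ for binary grammars --- and possibly less, around $(3/\log_2 3)\,k\approx 1.89\,k$, once rules of arity three are allowed --- whereas your sibling/partial-sum argument, redone for that measure, still only delivers $g(a^{p-1}ba^q)\ge l(p-1,q)-\mathcal{O}(1)$ (each sibling occurrence contributes one unit to the size and one element to the chain extension, so no factor two is gained on that part). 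The final chain of inequalities then reads $g(w_k)-g(a^{2^k})\ge \bigl(k+\Omega(k/\log k)\bigr)-\bigl(2k+\mathcal{O}(1)\bigr)$, which is negative. To repair this you must fix the size measure, prove the structural lemma for that measure with matching leading constants (e.g., restrict to binary SLPs and count nonterminals, where both sides pick up the same factor of two), and address the fact that unbounded-arity rules change the extremal constant for compressing $a^n$, at which point Erd\H{o}s's theorem on binary addition chains no longer directly controls $g(a^{2^k})$. A smaller point: add a line justifying that the density-one statement in Erd\H{o}s's theorem yields a hard integer in every sufficiently large dyadic interval $[2^{k-2},2^{k-1})$; it does, since otherwise the exceptional set would have positive upper density.
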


\begin{proof}
We start with a string $w_k \in \{0,1\}^*$ of length $k^2$ that is algorithmically incompressible in the sense
of Kolmogorov complexity.
 It is well-known that such a word exists; see, e.g.,~\cite{LiV08}.\footnote{For our purpose it is not necessary
 to define the concept of Kolmogorov complexity. That a word
  $x \in \{0,1\}^*$ is algorithmically incompressible means the following: for every partial computable function
 $f : \{0,1\}^* \to \{0,1\}^*$ there exists a constant $c_f$ (that only depends on $f$) such that every $y \in f^{-1}(x)$ satisfies
 $|y| \geq |x| - c_f$. Intuitively speaking, $y$ is a description of $x$ and $f$ is a decoding function that produces from a 
 description $y$ the word $x = f(x)$.} 
 Let us write $w_k = w_{k,1} w_{k,2} \cdots w_{k,k}$ with $|w_{k,i}| = k$ for all $1 \leq i \leq k$. Let $\mu_k : \{0,1\}^k \to [2^k]$ be the function such that every string $s \in \{0,1\}^k$ is the $\mu_k(s)^{\text{th}}$ word in the lexicographic enumeration of all strings from $\{0,1\}^k$. Let $m_{k,i} = \mu_k(w_{k,i}) \leq 2^k$. Consider now the following two strings:
\begin{align*}
&v_k =& &a^{2^k}& &a& &a^{m_{k,1}}& &a& &a^{m_{k,2}}& &a& &\cdots& &a^{m_{k,k-1}}& &a& &a^{m_{k,k}}\,,& \\
&v'_k =& &a^{2^k}& &b& &a^{m_{k,1}}& &b& &a^{m_{k,2}}& &b& &\cdots& &a^{m_{k,k-1}}& &b& &a^{m_{k,k}}\,.&
\end{align*}
Note that $2^k \leq |v_k| = |v'_k| \leq k + (k+1)2^k$. Since $v_k$ is a unary string, we have $g(v_k) = \Theta(\log |v_k|) = \Theta(k)$
(this is a folklore fact; see, e.g.,~\cite[Lemma~2]{CharikarLLPPSS05} for a more general statement).

Let us estimate $n_k := g(v'_k)$. Since $v'_k$ has an s-SLP of size $n_k$, one can encode 
$v'_k$ by a bit string of length $\bigO(n_k \cdot \log n_k)$; see, e.g.,~\cite{TabeiTS13}. Since $v'_k$ encodes the 
algorithmically incompressible word $w_k$
we must have $k^2 = |w_k| = \bigO(n_k \cdot \log n_k)$, i.e., $n_k \cdot \log n_k = \Omega(k^2)$.
In addition, we can easily construct an s-SLP for $v'_k$ of size $\bigO(k^2)$ ($a^{2^k}$ and all $a^{m_{k,i}} $ have s-SLPs of size $\bigO(k)$).
Hence, $n_k = \bigO(k^2)$. We thus obtain 
\[ g(v'_k) = n_k = \Omega\bigg(\frac{k^2}{\log n_k}\bigg) = \Omega\bigg(\frac{k^2}{\log k}\bigg) .\]
This means that the size difference $g(v'_k) - g(v_k)$ is $\Omega(\frac{k^2}{\log k} - k)$.

Notice that $v'_k$ is obtained from $v_k$ by $k$ relabelling operations that change occurrences of $a$ into occurrences of $b$. We can conclude that at least
one of those $k$ relabellings must increase the minimal s-SLP size by at least 
\begin{equation}\label{eq-lower bound}
\Omega\bigg( \frac{\frac{k^2}{\log k} - k}{k} \bigg) = \Omega\bigg( \frac{k}{\log k} \bigg)\,. 
\end{equation}
We define $S_k$ as the word right before this relabelling. It satisfies $2^k \leq |S_k| \leq k + (k+1)2^k$
so that $k$ can be replaced by $\log |S_k|$ in \eqref{eq-lower bound}. This proves the theorem.
\end{proof}
Note that by \autoref{thm-balance-f-SLP} for every s-SLP $\SLP$ producing a string $S$ one can 
reduce the height of $\SLP$ to $\Theta(\log\card{S})$; thereby the size of the s-SLP increases only by a
constant factor. Of course, this statement also applies to a smallest s-SLP for $S$. As a consequence, one
obtains from \autoref{thm-update-size} the lower bound $\Omega(\order{\SLP}/\log\order{\SLP})$ for the size increase when applying a relabelling update to an s-SLP $\SLP$. Clearly, this size increase also gives a lower bound for the running time of a relabelling update.

\section{More Background on Straight-Line Programs}\label{sec:SLPsBackground}

Since this is the central concept of our work, let us provide more background on straight-line programs. A rather important motivation for our work is that compressing a given string or forest by an SLP is a problem that can be solved rather well in a practical context (i.e., we can compute SLPs with excellent compression ratios in linear time). Due to the relevance of this aspect, we shall discuss it in more detail in \autoref{sec:practicalSLPAlgos}.

String SLPs (s-SLPs)  date back several decades; see, e.g.,~\cite{Nev96,StoSzy82}. Nowadays, they are very popular and play a prominent role in the context of string algorithms and other areas of theoretical computer science. They are mathematically easy to handle and therefore very appealing for theoretical considerations. Independent of their applications in data compression, they have been used in many different contexts as a natural tool for representing (and reasoning about) hierarchical structure in sequential data; see, e.g.,~\cite{KieYan2000,Loh2014,Loh12survey,NevWit97a,Nev96,StoSzy82}.
Good sources for further details on s-SLPs are the survey~\cite{Loh12survey}, the PhD-thesis~\cite{Cording2015PhD} and the comprehensive introductions of the papers~\cite{AbboudEtAl2017,CaselEtAl2021}.

String SLPs are also of high practical relevance, mainly because many practically applied dictionary-based compression schemes (e.g., run-length encoding, and --~most notably~-- the various Lempel-Ziv variants LZ77, LZ78, LZW, etc. which are relevant for practical tools like the built-in Unix utility \textsf{compress} or data formats like GIF, PNG, PDF and some ZIP archive file formats) can be converted efficiently into s-SLPs of similar size, i.e., with size blow-ups by only moderate constants or log-factors (see~\cite{AbboudEtAl2017,Cording2015PhD,GotoEtAl2011,Loh12survey,Ryt03}). Hence, algorithms for SLP-compressed strings carry over to these practical formats.

A possible drawback of s-SLPs is that computing a minimal size s-SLP for a given string is intractable (even for fixed alphabets)~\cite{CaselEtAl2021,CharikarLLPPSS05}. However, this has never been an issue for the application of s-SLPs, since many 
heuristical SLP-compressors achieve very good compression rates for practical inputs. This aspect will be discussed in more detail and tailored to our results in \autoref{sec:practicalSLPAlgos}.

While in the early days of computer science fast compression and decompression was an important factor, it is nowadays common to also rate compression schemes according to how suitable they are for solving problems directly on the compressed data without prior decompression, a paradigm that is known as algorithmics on compressed strings. In this regard, s-SLPs have very good properties: Many basic problems on strings like comparison, pattern matching, membership in a regular language, retrieving subwords, etc.~can all be efficiently solved 
directly on s-SLPs~\cite{Loh12survey}.

String SLPs are usually defined in terms of context-free grammars, i.e., an s-SLP for a word $w$ is a context-free grammar in Chomsky normal form for the language $\{w\}$. It is straightforward to see that our definition from \autoref{sec:sSLPs} is equivalent: 
We can interpret every vertex $A$ of the DAG that represents the s-SLP as a non-terminal symbol with a context-free rule $A \to BC$, if it is an inner vertex with left edge $(A,\ell,B)$ and right edge $(A,r,C))$, or with a context-free rule $A \to a$ if it is an $a$-labelled leaf. 
This context-free grammar based definition has the advantage that it can be easily extended to trees, by simply using a context-free grammar formalism for trees, which leads to tree SLPs~\cite{Lohrey15,LohreyMR18,LohreyEtAl2012}. Context-free tree grammars have rules of the form
$A \to T$, where $A$ labels a vertex $v$ with children $u_1, u_2, \ldots, u_k$ and $T$ is a tree that, among others, has distinguished leaves $x_1, x_2, \ldots, x_k$. The idea is that $v$ is replaced by $T$ in such a way that the subtree rooted in $u_i$ is plugged in at the position of the leaf $x_i$ of $T$. However, for such a formalism $A$-labelled vertices must have a fixed rank $k$. Thus, such tree SLPs can only compress ranked trees. 
This is a disadvantage, since in the context of database theory, we are rather interested in unranked trees and forests as data model. A typical example of such data are XML tree structures or decision trees. Therefore, we use forest SLPs, which were introduced in~\cite{GasconLMRS20} in a more grammar-like way that is nevertheless equivalent to our approach in \autoref{sec:FSLP}. 

These f-SLPs have many desirable properties, which make them a suitable compression scheme for our setting. Most importantly, they can compress vertex-labelled unranked forests, which cover a rather large class of tree structures (e.g., XML tree structures, decision trees, tree decompositions); in particular, they are not limited to ranked trees, which would be too restrictive for typical applications in data management. At the same time, f-SLPs share most of the desirable properties of s-SLPs, e.g., they are mathematically easy to handle and can achieve exponential compression rates. Moreover, f-SLPs are robust in the sense that they also cover other popular tree compression schemes like top dags \cite{BilleGLW15,DudekG18,Hubschle-Schneider15} and tree straight-line programs \cite{GanardiHJLN17,LohreyMM13}.

It is also possible to compute small f-SLPs for given forests in acceptable running times, mainly because compression techniques for s-SLPs can be adapted to the case of trees and forests.\footnote{Observe that the general intractability of computing a \emph{smallest} f-SLP obviously carries over from the string case. However, just like in the string case, this is not an obstacle for practically relevant approximations and heuristics.} Since this aspect is very important for our results, we will discuss it in more detail in \autoref{sec:practicalSLPAlgos}. 

It is also known that for every forest with $n$ vertices and $k$ different vertex labels,
one can construct in linear time an f-SLP of size $\bigO(n \log k/ \log n)$ (so $\bigO(n/ \log n)$ for a fixed $k$)~\cite{GanardiHJLN17}. Finally, a recent balancing result for s-SLPs~\cite{GanardiJL21} also holds for f-SLPs, which we applied in the context of updates  (see \autoref{thm-balance-f-SLP}).

\subsection{Practical algorithms for SLP-compression of strings and forests}\label{sec:practicalSLPAlgos}

Our whole work hinges on the assumption that we get our input data in SLP-compressed form, and that these SLPs are substantially smaller than the uncompressed data. Let us discuss now in a bit more detail why this assumption is justified for both strings and forests.

The problem of computing a smallest s-SLP for a given string $S$ cannot be solved in polynomial time unless P=NP
\cite{CaselEtAl2021,CharikarLLPPSS05}. An algorithm for computing a smallest s-SLP that runs in time $\bigO(3^{\card{S}})$ is presented in~\cite{CaselEtAl2021}. However, there is a large number of algorithms that compute small s-SLPs in linear time or low-degree polynomial time. For example, there exist several algorithms that compute for a given string $S$ of length $n$ in time $\bigO(n)$ an s-SLP
of size $\bigO(g \cdot \log n)$, where $g$ is the size of a smallest s-SLP for $S$ \cite{CharikarLLPPSS05,Jez2015,Ryt03}.
$\bigO(\log n)$ is currently the best known approximation ratio of polynomial time grammar-based compressors.
Upper and lower bounds for the approximation ratios of several practical grammar-based compressors are studied in
 \cite{BannaiEtAl2021,CharikarLLPPSS05}.  Some of these compressors (in particular RePair \cite{LarssonMoffat2000})  are known to perform very well in practical scenarios. Finally, it is also known that strings represented by many practical compression schemes (e.g., run length encoding, LZ77 and LZ78 encoding) can be transformed into s-SLPs with moderate size increase \cite{GotoEtAl2011}. In summary, the problem of compressing a string by an s-SLP is theoretically well-understood and a rich toolbox of practical methods exists. 

While the state-of-the-art for f-SLPs is not as developed as for s-SLPs, we can observe that compression by f-SLP can also be handled by existing algorithmic techniques. Most importantly, the above mentioned grammar-based string compressor RePair
can be adapted so that it computes an f-SLP for a given unranked forest. More precisely, the so-called TreeRePair algorithm~\cite{LohreyMM13} computes a tree SLP for a \emph{ranked} tree, but it can also be used on the first-child-next-sibling encoding of an unranked forest $F$. The resulting tree SLP for the first-child-next-sibling encoding of $F$ can then be transformed in linear time into an equivalent f-SLP for $F$ (see~\cite{GasconLMRS20}). This approach shows excellent compression ratios in practice, which is also demonstrated by an experimental study: For a corpus of typical XML documents, the number of edges of the original tree is reduced to approximately 3\% using TreeRePair on the first-child next-sibling encoding of the XML tree~\cite{LohreyMM13}. Other available grammar-based tree compressors are BPLEX~\cite{BusattoLM08} and CluX~\cite{BottcherHK10}. 

For maintaining relabelling updates, it is advantageous if our f-SLPs are balanced (i.e., the height is logarithmic in the size of the decompressed forest), since then the update procedure has a running time that is bounded logarithmically in the size of the data. Due to \autoref{thm-balance-f-SLP} from \cite[Corollary 3.28]{GanardiJL21}, this can be achieved with a linear time preprocessing. Moreover, relabelling updates do not change the size of the decompressed forest or the height of the f-SLP, which implies that the f-SLP stays balanced after a relabelling update.

\section{Conclusions}\label{sec:conclusions}

We remark that the special case of \autoref{thm-enumerate-paths} where the category $\mathcal{C}$ is a groupoid (a category where
all morphisms are invertible)
can also be proven by using a known technique for the real-time traversal of SLP-compressed strings (see \cite{GasieniecKPS05,LohreyMR18}). However, our category from \autoref{sec:preorderNumbers} is not a groupoid.
Another disadvantage is that the real-time traversal of SLP-compressed strings needs a tree data structure for so-called next link queries. While such data structures can be constructed in linear time, this is not straightforward and would significantly complicate an implementation of our algorithm. In general, we believe that our approach is simple to implement, which makes an experimental analysis in the vein of~\cite{LohreyMM13} possible.

An important open question is whether our enumeration algorithm for f-SLP-compressed unranked forests can also be extended by insertion and deletion updates in logarithmic time instead of only relabellings. We conjecture that this is indeed possible, but respective constructions will be technically rather involved.

SLPs have also been formulated for graphs (see~\cite{ManethP18}). 
In a recent paper \cite{LMS25} we showed that the result sets of queries formulated in 
first-order logic can be enumerated with linear preprocessing and constant delay on SLP-compressed graphs
of bounded degree (with a technical restriction on the graph SLPs).
This extends previous work for uncompressed graphs of bounded degree \cite{DurandGrandjean2007,KazSeg11}.

For automaton-based queries on strings it has been recently shown that 
enumeration algorithms can directly deal with \emph{nondeterministic} automata and therefore avoid exponential preprocessing in combined complexity, see~\cite{AmarilliEtAl2021}. Moreover, 
enumeration algorithms for the weighted case, where the results are to be enumerated sorted by their weight, have been developed for strings, see~\cite{BourhisEtAl2021,GawrychowskiEtAl2024}. Both these aspects are also worth investigating in the context of MSO-enumeration over SLP-compressed forests.

\printbibliography

\newpage

\appendix

\section*{Appendix: Proof of \autoref{thm-MNN}}\label{sec:proofOfthm-MNN}

\begin{proof}
The following construction is from \cite{MMN22} (see the definition before \cite[Lemma~4.3]{MMN22}).
We fix the nSTA $\mathcal{A} = (Q, \Sigma, \delta, \iota, q_0, q_f)$. 
 Recall that  $\Sigma_0 = \{ a,a_\ast : a \in \Sigma\}$ and $\Sigma_2 = \{\conch, \concv \}$. 
 Let $\mathcal{P} = 2^{Q^2} \cup 2^{Q^4} \cup \{\mathsf{failure}\}$
 be the set of states of the dBUTA  $\mathcal{B}$.
 The transition mappings $\delta_0 : \Sigma_0 \to \mathcal{P}$  and $\delta_2 : \mathcal{P} \times \mathcal{P} \times \Sigma_2 \to \mathcal{P}$ of $\mathcal{B}$ are defined as follows, where $P_1, P_2 \subseteq Q^2$ and $Q_1, Q_2 \subseteq Q^4$:
  \begin{align*}
\delta_0(a) = \, & \{ \langle p_1, p_2 \rangle \in Q^2 : \exists q \in \iota(a) : (p_1, q, p_2) \in \delta \} \\
\delta_0(a_\ast) = & \{ \langle p_1, p_2, p_3 ,p_4 \rangle \in Q^4 : p_3 \in \iota(a), (p_1, p_4, p_2) \in  \delta  \} \\
\delta_2( P_1, P_2, \conch)  = \,& \{ \langle p_1, p_3\rangle  : \exists p_2 \in Q : \langle p_1, p_2\rangle \in P_1, \langle p_2, p_3\rangle \in P_2 \} \\
\delta_2( P_1, Q_2, \conch) = \, & \{ \langle p_1, p_3, q_1, q_2 \rangle  : \exists p_2 \in Q : \langle p_1, p_2\rangle \in P_1, \; \langle p_2, p_3, q_1, q_2\rangle \in Q_2 \} \\
\delta_2( Q_1, P_2, \conch)  = \, & \{ \langle p_1, p_3, q_1, q_2 \rangle  : \exists p_2 \in Q : \langle p_1, p_2, q_1, q_2 \rangle \in Q_1, \;  \langle p_2, p_3\rangle \in P_2 \} \\
\delta_2( Q_1, P_2, \concv)  = \, & \{ \langle p_1, p_2 \rangle  : \exists q_1, q_2 \in Q : \langle p_1, p_2, q_1, q_2 \rangle \in Q_1, \; \langle q_1, q_2 \rangle \in P_2 \} \\
\delta_2( Q_1, Q_2, \concv)  = \, & \{ \langle p_1, p_2, p_5, p_6 \rangle  : \exists p_3, p_4 \in Q : \langle p_1, p_2, p_3, p_4 \rangle \in Q_1, \; \langle p_3, p_4, p_5, p_6 \rangle \in Q_2 \}
\end{align*}
In all cases, where $\delta_0$ and $\delta_2$ are not defined by the above rules, we return {\sf failure}.
One can show that for every $\expr \in \EXP(\Sigma)$ with $F = \valX{\expr} \in \F(\Sigma)$
and all states $p,q \in Q$ we have: there is a $(p,q)$-run of $\mathcal{A}$ on $F$ if and only if $(p,q) \in \mathcal{B}(\expr)$.
This is the content of \cite[Lemma 4.5]{MMN22} (for forests). Hence, we can take $(q_0, q_f)$ as the unique final state of $\mathcal{B}$.
\end{proof}

\end{document}